\renewcommand{\Pr}{\mathbf{Pr}}
\newcommand{\rl}{r_{\text{learner}}}
\newcommand{\rltilde}{\widetilde{r}_{\text{learner}}}
\newcommand{\ro}{r_{\text{optimizer}}}
\newcommand{\NPtime}{\textsf{NP}}
\newcommand{\Ptime}{\textsf{P}}
\newcommand{\din}{d_{\text{in}}}
\newcommand{\dout}{d_{\text{out}}}
\newcommand{\Vin}{V_{\text{in}}}
\newcommand{\Vout}{V_{\text{out}}}
\newcommand{\vin}{v_{\text{in}}}
\newcommand{\vout}{v_{\text{out}}}
\newcommand{\uin}{u_{\text{in}}}
\newcommand{\tmax}{t_{\mathrm{max}}}
\newcommand{\Sh}{S_{\mathrm{heavy}}}
\newcommand{\rh}{r_{\mathrm{heavy}}}
\newcommand{\Ti}{T_{\text{init}}}
\newcommand{\Ei}{E_{\text{init}}}
\newcommand{\Ainit}{\mathcal{A}_{\text{init}}}
\newcommand{\Binit}{\mathcal{B}_{\text{init}}}
\newcommand{\Binittilde}{\widetilde{\mathcal{B}}_{\text{init}}}
\newcommand{\dinitin}{d_{\text{init,in}}}
\newcommand{\dinitout}{d_{\text{init,out}}}
\newcommand{\yr}{\bm{y}_{\text{red}}}
\newcommand{\minit}{m_{\text{init}}}
\newcommand{\ninit}{n_{\text{init}}}
\author{
  Angelos Assos\thanks{MIT CSAIL; \texttt{assos@mit.edu}}\\
  \and
  Yuval Dagan\thanks{Tel Aviv University, CS; \texttt{ydagan@tauex.tau.ac.il}}\\
  \and
  Nived Rajaraman\thanks{Berkeley EECS, \texttt{nived.rajaraman@berkeley.edu}}
}
\begin{document}

\title{Computational Intractability of Strategizing against Online Learners}

\date{\today}
\maketitle

\allowdisplaybreaks

\begin{abstract}
Online learning algorithms are widely used in strategic multi-agent settings, including repeated auctions, contract design, and pricing competitions, where agents adapt their strategies over time. A key question in such environments is how an optimizing agent can best respond to a learning agent to improve its own long-term outcomes. While prior work has developed efficient algorithms for the optimizer in special cases — such as structured auction settings or contract design — no general efficient algorithm is known.  

In this paper, we establish a strong computational hardness result: unless $\mathsf{P} = \mathsf{NP}$, no polynomial-time optimizer can compute a near-optimal strategy against a learner using a standard no-regret algorithm, specifically Multiplicative Weights Update (MWU). Our result proves an $\Omega(T)$ hardness bound, significantly strengthening previous work that only showed an additive $\Theta(1)$ impossibility result. Furthermore, while the prior hardness result focused on learners using fictitious play — an algorithm that is not no-regret — we prove intractability for a widely used no-regret learning algorithm. This establishes a fundamental computational barrier to finding optimal strategies in general game-theoretic settings.  
\end{abstract}

\section{Introduction}

Online learning algorithms are often used in scenarios containing other learning agents. This includes repeated auctions where a seller and buyers learn how to construct auctions and to bid, respectively \citep{nekipelov2015econometrics,noti2021bid}; Repeated contracts where a contractor learns how to set up contracts and agents learn how to take actions which maximize their revenue given these contracts \citep{guruganesh2024contracting}; competing retailers that learn how to set prices and to adapt to the demand \citep{du2019pricing,calvano2020artificial}; Information design, where a principle wants to persuade an agent to take actions by deciding which information to share \citep{chen2023persuading}; and many more examples exist.
In all these scenarios, the agents are operating in the same environment and influence each other's utilities.

In this paper we assume that there are two agents, a \emph{learner} and an \emph{optimizer}, and we ask: 
\begin{center}
\emph{Given that the learner is using a well-known learning algorithm to decide on his course of action, what is the best strategy for the optimizer to maximize her own reward (a.k.a. utility)?}
\end{center}
We formalize this setting under the lens of \emph{game theory}: we assume that there exists some \emph{finite normal-form game} between a learner and an optimizer, as explained below. Concretely, this means that the optimizer can take actions from some finite set $\mathcal{A}$, the learner can take actions from some finite set $\mathcal{B}$, and there are functions $A$ and $B$ which determine the reward of the agents: if the optimizer plays action $\bx \in \mathcal{A}$ and the learner plays action $\by \in \mathcal{B}$, the optimizer receives the reward $A(\bx,\by) \in [-1,1]$ and the learner receives the reward $B(\bx,\by) \in [-1,1]$. We assume \emph{repeated game playing}: this means that in every iteration $t=1,\dots,T$, the optimizer chooses an action $\bx(t)$, the learner chooses an action $\by(t)$, and the agents receives reward $A(\bx(t),\by(t))$ and $B(\bx(t),\by(t))$, respectively. The goal of each agent is to maximize their cumulative reward, summed over all the $T$ iterations. For this purpose, the learner is using a standard online learning algorithm which determines which action to take in each iteration, whereas the optimizer's goal is to play according to a strategy that maximizes her cumulative reward given the learner's algorithm.

Perhaps the most fundamental notion by which online learning algorithms are measured is \emph{regret}. That is, the difference between the cumulative reward of the learner and the reward that they would gain if they played a fixed action and if the optimizer played the same sequence of actions, which is defined as
\begin{equation}
    \sum_{t=1}^T B(\bx(t),\by(t)) - \max_{\by\in \mathcal{B}} \sum_{t=1}^T B(\bx(t),\by)~.
\end{equation}
A learning algorithm is no-regret if its regret behaves as $o(T)$ and no-regret is a desirable property. A comprehensive theory has been developed around no-regret learning algorithms, and this includes the fundamental Hedge algorithm, also known as Multiplicative Weights Update (MWU), which is a simple algorithm that attains the asymptotically-optimal regret
\citep{littlestone1994weighted,freund1997decision,arora2012multiplicative}.

A disadvantage of the notion of regret is that it does not take into account scenarios where the other agent (optimizer, in our case) utilizes an adaptive algorithm. In spite of that, standard no-regret learners are still commonly used. This raises the question of how an optimizing agent can behave in environments where other agents use no-regret learning algorithms to make decisions, to maximize its reward. \citet{braverman2018selling} showed that in the context of repeated auctions between one seller and one buyer, if the buyer is using a mean-based no-regret learning algorithm (such as Multiplicative Weights Update) then the seller can extract full welfare, which is more than what they could get if the learner was strategic. This line of work has been generalized in scenarios with one \citep{deng2019prior, rubinstein2024strategizing} or multiple no-regret buyers \citep{cai2023selling}. Beyond mechanism design, similar interactions between a learner and an optimizer were studied in repeated contracts \citep{guruganesh2024contracting} and in repeated information design (Bayesian Persuation) \citep{lin2024persuading}. In all these scenarios, an asymptotically-optimal strategy for the optimizer was developed. However, the solution in each instance was limited to the special game structure at hand. In general games, \cite{deng2019strategizing} show that the general optimization problem is essentially equivalent to solving a high-dimensional optimal control problem, however they do not provide an efficient algorithm. In this paper we show, in fact, that this is no coincidence: there is no efficient algorithm for the optimizer in general games, thereby resolving an open question of \cite{guruganesh2024contracting}.

\subsection{Our Contribution}
Unlike prior work that we discussed above that has provided efficient algorithms only in structured settings or restricted cases, we establish a general computational hardness result. We prove that unless $\mathsf{P} = \mathsf{NP}$, no polynomial-time optimizer can compute a near-optimal strategy against a learner that uses a standard no-regret algorithm, specifically Hedge/MWU. This formally establishes that no general efficient strategy exists for the optimizer in repeated game settings.

We formalize the optimization problem that we study. We assume that the learner's algorithm is Hedge (MWU), and it is parameterized by a step-size $\eta \ge 0$ (cf. \Cref{def:MWU}). Any sequence of optimizer's actions $\bx(1),\dots,\bx(T)$ together with the learning rate $\eta$ determine the learner's actions. This determines the optimizer's reward: $R ( \{ \bx(t) : t \ge 1 \}) = \sum_{t=1}^T A(\bx(t),\by(t))$. The optimizer's objective is stated below.
\begin{task}\label{main-task}
Given $T \in \mathbb{N}$, given truth tables of the reward functions $A,B\colon \mathcal{A}\times\mathcal{B}\to [-1,1]$ and given a step-size $\eta$ for the learner, find a sequence of actions for the optimizer $x(1),\dots,x(T) \in \mathcal{B}$ which approximately maximize its reward $R(\{ \bx(t) : t \ge 1 \})$.
\end{task}

\begin{theorem} \label{theorem:noinit}
Fix absolute constants $\alpha, \beta \in (0,1]$. Let $T$ denote the horizon, suppose the learner plays Hedge with learning rate parameterized as $\eta = 1/T^{1-\alpha}$ (cf. \Cref{def:MWU}) and suppose that the number of strategies per player is bounded by $|\mathcal{A}|, |\mathcal{B}| \le T^\beta$.
If $\Ptime \ne \NPtime$, there exists no algorithm whose runtime is polynomial in $T$ for \Cref{main-task} that outputs $\bx(1),\dots,\bx(T)$ such that:
\begin{align}
    R ( \{ x (t) : t \ge 1 \}) \ge \max_{\{ x^\star (t) : t \in [T] \}} R ( \{ x^\star (t) : t \in [T] \}) - cT + o(T),
\end{align}
for an absolute constant $c > 0$.
\end{theorem}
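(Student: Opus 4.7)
The plan is to reduce from an NP-hard problem with a constant inapproximability gap---a natural choice is the PCP-gap version of MAX-$3$-SAT, so that the hardness factor translates into an $\Omega(T)$ reward gap. From an input formula $\varphi$ I would construct, in time polynomial in $|\varphi|$, an instance of \Cref{main-task}: a horizon $T$, a learning rate $\eta = 1/T^{1-\alpha}$, and a game $(A,B)$ with $|\mathcal{A}|, |\mathcal{B}| \le T^\beta$, such that the optimal cumulative reward $R^\star$ satisfies $R^\star \ge c_{\mathrm{yes}} T - o(T)$ when $\varphi$ is satisfiable and $R^\star \le c_{\mathrm{no}} T + o(T)$ when it is far from satisfiable, with $c_{\mathrm{yes}} - c_{\mathrm{no}} > 2c$. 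A polynomial-time $cT$-approximate algorithm would then decide satisfiability of $\varphi$, contradicting $\Ptime \neq \NPtime$.

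I would prove the reduction in two stages. In the first stage, I would establish hardness for an \emph{initialized} variant of \Cref{main-task} in which the learner's starting weights over $\mathcal{B}$ are provided as part of the input and may be arbitrary. In this variant I design a payload gadget whose learner actions encode clauses (or refined verifier queries) of $\varphi$ and whose optimizer actions encode literal assignments; the payoff $B$ is engineered so that, under the specified non-uniform initial weights, Hedge's play concentrates on the learner action representing the currently ``hardest'' unsatisfied clause, while the payoff $A$ rewards the optimizer in proportion to the number of clauses that the cumulative assignment satisfies. The approximation slack between Hedge and an idealized mean-based best-response oracle is controlled via a standard Hedge analysis and comes out to $o(T)$ under the given choices $\eta = T^{\alpha-1}$ and $|\mathcal{B}| \le T^\beta$.

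In the second stage, I would remove the initialization assumption by prepending an \emph{initialization phase} of length $\Ti = o(T)$. During this phase the optimizer plays a fixed sequence $\Ainit$ whose per-round $B$-payoffs $\Ei$ are chosen so that Hedge's cumulative log-weights, starting from uniform, end up close (to $o(1)$ additive error in each coordinate) to the log-weights of the target non-uniform distribution used in the initialized reduction. Since Hedge's state depends only on the sum of per-action rewards seen, realizing a target log-weight profile reduces to a small linear system in $\Ei$. The per-round $A$-payoffs during initialization are bounded and contribute only $O(\Ti) = o(T)$ to cumulative reward, so the $\Omega(T)$ gap between the yes and no cases of the initialized reduction survives after concatenation.

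The main obstacle is robustness: I must show that \emph{every} near-optimal optimizer sequence---not just the intended one---decodes into a near-satisfying assignment of $\varphi$. Concretely, the hardest lemma to establish will be that any trajectory whose cumulative reward is within $cT$ of the optimum must (i) leave the learner's distribution $o(1)$-close to the target initialization at time $\Ti$, and (ii) thereafter induce payload-phase play that tracks the intended clause-oracle to within $o(T)$ slack. This robustness claim is what forces the optimizer to genuinely solve the embedded instance of $\varphi$, ruling out strategies that exploit MWU fluctuations or use the initialization rounds for extra $A$-reward. Tuning the payoff scale and the gadget size against the constants $\alpha$ and $\beta$---so that the MWU noise stays strictly below the $\Omega(T)$ combinatorial gap---is where the structural hypotheses $\eta = T^{\alpha-1}$ and $|\mathcal{A}|,|\mathcal{B}| \le T^\beta$ do the real work.
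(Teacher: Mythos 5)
Your proposal shares the paper's two‑stage scaffold (first prove hardness for MWU with an arbitrary initial reward vector, then remove the initialization), but it diverges substantially in both stages, and the divergences leave genuine gaps.

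\textbf{Choice of source problem and gadget.} You reduce from gap‑MAX‑3‑SAT; the paper reduces from $(1,2)$‑maxTSP. This is not cosmetic. The paper's gadget identifies the optimizer's actions with directed edges and the learner's actions with edges plus auxiliary in/out vertex actions, and it uses the fact that repeated play with a slowly‑moving MWU learner naturally traces out a multigraph: the optimizer plays one edge for roughly $k$ rounds before moving on, a degree‑capping mechanism (via the $\Vin,\Vout$ actions) forces this trace to decompose into vertex‑disjoint paths and cycles, and these are then stitched into a Hamiltonian cycle whose weight upper‑ and lower‑bounds the reward up to a $k$ factor. The sequential structure of the repeated game maps onto the cycle structure of TSP essentially for free. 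Your SAT gadget — ``Hedge concentrates on the hardest unsatisfied clause'' while ``$A$ rewards the optimizer in proportion to the number of clauses the cumulative assignment satisfies'' — does not yet specify what a ``cumulative assignment'' means for a sequence of pure actions, nor why MWU (a slowly‑moving exponential weights rule, not a best‑response rule) would concentrate on the right clause, nor how to prevent the optimizer from simply toggling variable assignments over time to satisfy different clauses in different rounds. You flag the robustness lemma as the hard part, but that lemma \emph{is} the theorem; without a concrete mechanism there is no argument.

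\textbf{Removing the initialization.} You propose prepending a short $\Ti=o(T)$ phase in which the optimizer plays a fixed sequence $\Ei$ that drives Hedge's log‑weights to the desired profile, bounded via a small linear system. The paper does something structurally different and for a reason: it adds a special optimizer action $\blacklozenge$ and $p=T^{\beta/3}$ dummy copies of every learner action, tuned so that (i) while the copies carry weight, the optimizer collects essentially nothing (so it is \emph{forced} to pay the cost of initializing), and (ii) driving the copies down requires playing $\blacklozenge$ for a constant fraction of the horizon, during which the target initialization $r_0(e^\dagger)=k,\ r_0(\vin)=r_0(\vout)=-k$ is accumulated via per‑round rewards of size $k^\star/T$. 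The crucial point your proposal misses is that the optimizer is adversarial: you cannot make it ``play a fixed sequence $\Ei$'' — you must design the game so that any near‑optimal sequence effectively performs the initialization, and so that deviating during the initialization window cannot buy extra $A$‑reward. The paper's dummy copies are precisely the enforcement mechanism: with $p\gg1$ copies still carrying mass, the optimizer's per‑round reward is $O(p^{-\varepsilon})$ no matter what it plays, so there is no incentive to ``cheat'' early. Your linear‑system construction fixes what a compliant optimizer would do but says nothing about a non‑compliant one, which is exactly the robustness issue you raise and leave open. Without that, the additive bound ``initialization contributes only $O(\Ti)=o(T)$ reward'' is a statement about the intended trajectory, not an upper bound over all trajectories, and the reduction does not close.

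In short: the two‑phase decomposition matches the paper, but the source problem, the gadget, and the enforcement mechanism for the initialization are all different, and the two places you identify as the crux (the gadget's soundness and the robustness of the initialization phase) are exactly where the paper does careful, non‑obvious work — the $\Vin/\Vout$ degree cap and path/cycle decomposition for the gadget, and the $\blacklozenge$‑plus‑copies construction for the initialization — while your proposal currently has only placeholders there.
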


Our result significantly strengthens the hardness result of \citet{assos2024maximizing}, who show a computational hardness result for attaining the optimal reward in general two-player games. However their result had two significant limitations: (1) their impossibility result is based on approximating the reward of the optimizer up to an additive $\Theta(1)$-factor, whereas ours hold even for approximation up to an additive $\Omega(T)$ term; (2) Their result assumed that the learner uses \emph{follow-the-leader} (a.k.a. best-response, or fictitious play), an algorithm that is not no-regret. 

There are other differences which are not merely conceptual: the structure of the game considered in \citet{assos2024maximizing} has learner action space of size $T$. Therefore establishing any (even $\Omega(T)$) hardness for such instances, does not preclude, for instance, the existence of an efficient algorithm which collects cumulative reward within $\mathcal{O} (\sqrt{|\mathcal{B}|T})$ of that of the best optimizer. Further, it relied on the instability of the learner, which can change actions in each iteration, whereas we consider MWU, which is slow-changing, as is the case of most no-regret learners. 

In contrast, our results show that even when the size of the game is an arbitrarily small (but constant) power of $T$, the regret scaling is $\Omega (T)$, thus ruling out the existence of such algorithms. In a sense, this is the best achievable: if the action space were to be any smaller, say polylogarithmic, the optimizer is now allowed to carry out superpolynomial compute in the size of the game. This essentially rules out approaches which prove hardness based on reducing to $\NPtime$-complete problems of size $\operatorname{poly} (|\mathcal{B}|)$, under $\Ptime \ne \NPtime$. It is conceivable that under stronger assumptions, like ETH, using the same techniques, hardness results may be established even when the size of the game is yet smaller than even a polynomial in $T$.

From a technical point of view, our hardness reduction appeals to a special case of the traveling salesman problem with integral weights, $(1,2)$-TSP, for which constant factor hardness of approximation is known \cite{karpinski2012approximation}. Our result shows $\Omega(T)$ hardness of optimizing against MWU, even when the learning rate is as small as $1/T^{0.99}$, which is the regime where the learner is very slow changing\footnote{$0.99$ can be made any constant arbitrarily close to $1$}. When the learning rate is $0$, the problem becomes easy, since the learner is no longer adaptive.

\subsection{Related work}

\paragraph{Multi-Agent Learning and Strategic Optimization}  
The study of learning in multi-agent environments has a long history in game theory \citep{cesa2006prediction, You04, FL98}. A central question in this field is whether repeated interactions between learning agents lead to convergence to equilibrium \citep{robinson1951iterative, cesa2006prediction}. Classical results establish that when agents repeatedly play against each other using learning algorithms, their average play can converge to various equilibrium concepts. More recent work has shown that when all agents use similar learning algorithms, play can converge significantly faster to equilibria compared to classical results \citep{SALS15, daskalakis2021near, ADF+22, APFS22, FAL+22, PSS22, ZFA+23}. However, a different challenge arises when a strategic optimizer interacts with a learning agent: what is the best strategy for the optimizer to maximize long-term reward?  

\paragraph{Optimizing Against Learning Agents in Structured Settings}  
A series of works have studied how an optimizer should act when interacting with a learning agent in structured settings. In auction design, it has been shown that a seller interacting with a no-regret buyer—whether a single buyer or multiple buyers—can extract nearly the entire welfare, achieving an additional $\Omega(T)$ reward compared to what would be possible against a fully strategic buyer \citep{braverman2018selling, deng2019prior, cai2023selling, rubinstein2024strategizing}. A similar setting was studied in contract design, where a contractor optimally sets contracts for an agent using a learning algorithm \citep{guruganesh2024contracting}, and in Bayesian persuasion, where an information designer persuades a learning agent by strategically revealing information \citep{lin2024persuading}. In each of these cases, efficient algorithms were developed for the optimizer, leveraging the structure of the specific problem.  

\paragraph{General Games: Lack of Efficient Optimizer Algorithms}  
Beyond structured problems, several works have studied optimization against learning agents in general games, where the optimizer’s goal is to maximize its reward in an arbitrary repeated game. \citet{deng2019strategizing} showed how to find an approximately optimal strategy for the optimizer. However, their work only provided a reduction to an optimal control problem and did not provide an efficient algorithm to solve it. Other works have explored similar settings in Bayesian games \citep{mansour2022strategizing}, Pareto-optimal interactions \citep{arunachaleswaran2024pareto}, and optimizers facing arbitrary no-regret learners \citep{brown2024learning}, but none of these provide a general efficient algorithm for the optimizer.  

\paragraph{Computational Hardness in Special Cases}  
While no efficient algorithm is known in general settings, some restricted cases have been solved efficiently. \citet{guo2023optimal} developed a polynomial-time algorithm for an optimizer interacting with an MWU learner in two-player games where each agent has only two actions. \citet{masoumian2024model} provided an efficient algorithm for an optimizer interacting with a learner that has small bounded memory. \citet{CAK23} showed that if a learner best-responds to a known optimizer policy, then the optimizer can achieve more than the per-round Stackelberg value, and they provided efficient algorithms to do so. In first-price auctions, \citet{kumar2024strategically} constructed a bidding algorithm that is both no-regret and robust against a strategic seller.  

\paragraph{Learning the Stackelberg Equilibrium from Repeated Interactions}  
The Stackelberg equilibrium represents the best possible reward an optimizer can obtain in a one-shot game, assuming the learner best-responds to the optimizer’s actions. The problem of computing the Stackelberg equilibrium has been widely studied \citep{CS06, PSTZ19, CAK23}. In repeated interactions, prior work has studied whether an optimizer can learn the Stackelberg equilibrium when it does not know the learner’s reward function in advance. \citet{ananthakrishnan2024knowledge} showed an impossibility result when the learner is strategic and the optimizer has missing information. Other works have developed algorithms for learning a Stackelberg strategy in repeated games under similar uncertainty \citep{BBHP15, MTS12, LGH+22, HLNW22, brown2024learning}. Additionally, \citet{zhang2023steering} explored how a principal can influence online learners to converge to specific equilibria.

\section{Preliminaries} \label{sec:prelim}

Consider a two-player game, where the optimizer's and learner's strategy sets are $\mathcal{A}$ and $\mathcal{B}$, respectively. The players are allowed to play distributions over their action sets (a.k.a. mixed strategies). At time $t$ the optimizer and the learner play $\bx(t) \in \Delta(\mathcal{A})$ and $\by(t) \in \Delta(\mathcal{B})$, respectively. The reward functions for the learner and the optimizer are parameterized by matrices $\bm{A},\bm{B} \in [-1,1]^{|\mathcal{A}| \times |\mathcal{B}|}$. The reward collected by the optimizer and the learner at time $t$ is $\mathbf{x}(t)^{\top} \bm{A} \mathbf{y}(t)$ and $\mathbf{x}(t)^{\top} \bm{B} \mathbf{y}(t)$, respectively. The notation $\mathbf{x}(a,t)$ denotes the probability of action $a \in \mathcal{A}$ at time $t$ under $\mathbf{x}(t)$. Likewise, the notation $\mathbf{y}(b,t)$ is defined. The repeated game is assumed to be played over a horizon of length $T$. Furthermore, for $t \in [T]$, let,
\begin{align}
    \forall b \in \mathcal{B},\quad &\rl(b,t) = \sum\nolimits_{t'=1}^t \mathbf{x}(t')^\top \bm{B} \delta_b
\end{align}
denote the cumulative rewards of the actions of the learner at time $t$.

\medskip

\begin{definition}[Hedge/MWU learner] \label{def:MWU}
    Fix the step size $ \eta \geq 0$ of the algorithm. In a repeated game with matrices $\bm{A}, \bm{B}$ for optimizer and learner, and an initial reward history $r_0 \in \mathbb{R}^m$, plays a mixed strategy, where at time $t$, for any $b \in \mathcal{B}$,
    \begin{equation}
        \by (b,t) = \frac{\exp ( \eta (\rl(b,t) + r_0 (b)) ) }{\sum_{b' \in \mathcal{B}}\exp ( \eta (\rl(b',t) + r_0 (b')) )}
    \end{equation}
    The parameter $\eta$, is referred to as the learning rate.
\end{definition}

\begin{algorithm}
\caption{MWU algorithm}\label{alg:mwu}
\begin{algorithmic}[1]
\Procedure{MWU}{$\bm{B},r_0,T,\eta$}
  \State $Y(0) \gets r_0$ \Comment{Initial reward history}
  \For{$t = 1,2,\dots,T$}
    \State $\by(t) \gets \left( \frac{e^{\eta \cdot Y_1(t)}}{\sum_{i=1}^m e^{\eta \cdot Y_i(t)}}, \dots, \frac{e^{\eta \cdot Y_m(t)}}{\sum_{i=1}^m e^{\eta \cdot Y_i(t)}}\right)$ \Comment{Strategy for the learner at time $t$.}
    \State Learner plays $\by(t)$ and observes $\bx(t)$ \Comment{$\bx(t)$ is the strategy of the optimizer}
    \State $Y(t+1) \gets Y(t) + \bm{B}^{\top}\bx(t)$ \Comment{Update the history}
  \EndFor
\EndProcedure
\end{algorithmic}
\end{algorithm}

\noindent With all the notation in place, we define the main problem we study in this paper.

\begin{definition}[Optimizing against MWU]
For a sequence of pure optimizer strategies $\{ \bx (t) : t \in [T] \}$, we will denote $R_{\eta,r_0} ( \{ \bx (t) : t \ge 1 \})$ as the cumulative reward collected by the optimizer, when playing against multiplicative weights with learning rate and initialization $r_0$. The problem of optimizing against MWU is defined as: find the sequence of pure strategies $\{ \bx (t) : t \in [T] \}$ which maximizes $R_{\eta,r_0} ( \{ \bx (t) : t \ge 1 \})$.
\end{definition}
We notice that the standard definition of MWU is with $r_0=0$ and indeed we will prove hardness for this case. To show this, we will first prove a hardness when $r_0>0$ and provide a reduction between these two scenarios.


\subsection{Reduction Workhorse: $(1,2)$-TSP}

To prove the hardness result (\Cref{theorem:noinit}) we will reduce the problem of optimizing against MWU, to an NP hard problem with an approximation hardness guarantee: we consider a variant of the traveling salesman problem, where the edge weights of the graph are restricted to be either $1$ or $2$. First, we state the standard ``minimization'' version of $(1,2)$-TSP.

\begin{definition}[$(1,2)$-TSP]
The $(1,2)$-TSP problem is specified by a weighted complete graph $G = (V,W)$ on $n$ vertices where $W \in \{ 1,2 \}^{\binom{V}{2}}$. The weight of a Hamiltonian cycle (i.e., each vertex is visited exactly once) $v_1 \to v_2 \to \cdots \to v_n \to v_1$ is $\sum_{i=1}^{n} W (v_i, v_{i+1})$. The objective is to find a Hamiltonian cycle in $V$ having minimum weight.
\end{definition}

\noindent $(1,2)$-TSP admits constant-factor hardness of approximation, as illustrated by the following result.

\begin{theorem}[Hardness of $(1,2)$-TSP \citep{karpinski2012approximation}] \label{theorem:TSP-hardness}
Unless $\Ptime = \NPtime$, there is no polynomial time approximation algorithm for $(1,2)$-TSP achieving a multiplicative approximation factor of $535/534 - \epsilon$ for any $\epsilon > 0$.
\end{theorem}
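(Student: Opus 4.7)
The plan is to reduce from the constant‑factor inapproximability of $(1,2)$‑TSP (\Cref{theorem:TSP-hardness}). Given a $(1,2)$‑TSP instance $G=(V,W)$ with $|V|=n$, I will construct a game $(\bm{A},\bm{B})$, a horizon $T$, a learning rate $\eta=1/T^{1-\alpha}$, and an initial history $r_0$, such that Hamiltonian cycles of weight $w$ correspond (in both directions, up to $o(T)$ slack) to optimizer strategies of cumulative reward $f(T) - c_1 \cdot L \cdot w$, where $L$ is a per‑phase length defined below and $c_1>0$ is an absolute constant. Since the optimum TSP weight lies in $[n,2n]$ and an additive $\Omega(n)$ gap separates yes/no instances of $(1,2)$‑TSP, this becomes an additive gap of $\Omega(nL)=\Omega(T)$ in optimizer reward, which is the conclusion of \Cref{theorem:noinit}.

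\paragraph{Construction (with $r_0>0$).}
Pick $\gamma\in(0,\min(\alpha,\beta)/2)$ so that both $n:=T^{\gamma}\leq T^\beta$ and $\eta L\to\infty$, where $L=T/n=T^{1-\gamma}$ (so that $\eta L = T^{\alpha-\gamma}$ grows polynomially in $T$). Partition the horizon into $n$ phases of length $L$. The action sets are $\mathcal{A}=\mathcal{B}=V$ together with a small number of gadget actions (still $O(n)\leq T^\beta$). The learner's reward matrix $\bm{B}$ is a coordination pay‑off: $\bm{B}(u,v)=\mathbf{1}[u=v]$ (up to small perturbations on the gadget rows). Because $\eta L=T^{\alpha-\gamma}\gg 1$, if the optimizer spends almost all of phase $i-1$ on vertex $u^\star_{i-1}$, the multiplicative weights distribution at the start of phase $i$ is, to within $e^{-\Omega(\eta L)}=o(1)$ total variation, a point mass on $u^\star_{i-1}$; the contribution of phases strictly older than $i-1$ is swamped by this exponential gap. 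Design $\bm{A}$ so that when the optimizer plays $u$ and the learner responds (nearly) with $v$, the per‑step pay‑off is $c_2 - c_1 W(v,u)$ when $u\neq v$, plus a large positive reward if $(v,u)$ has not been visited yet and a large negative reward (via the gadget rows) if the optimizer's current $u$ coincides with a tag already visited. Pre‑load $r_0$ to place the learner initially on a designated ``start'' vertex $v_0$, so phase $1$ has a well‑defined predecessor.

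\paragraph{Soundness and completeness of the reduction.}
Completeness: given a Hamiltonian cycle $v_0\to v_1\to\cdots\to v_n$, the optimizer plays tag $v_i$ throughout phase $i$; the analysis above shows cumulative reward $\geq c_2 T - c_1 L \sum_i W(v_{i-1},v_i) - o(T)$. Soundness: any optimizer sequence $\{\bx(t)\}$ can be decoded by letting $u^\star_i$ be the most frequently played tag during phase $i$; a convexity argument (pay‑off in each phase is concave‑maximized by a pure play, because the learner's distribution is nearly static within a phase) shows that the reward loss relative to this ``rounded'' sequence is at most $o(T)$. A further $o(T)$ slack absorbs phases where $u^\star_i=u^\star_{i-1}$ (handled by the repeat‑penalty gadget) and where the optimizer plays the gadget rows (made uncompetitive by choosing $c_2$ appropriately). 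The decoded sequence $u^\star_1,\dots,u^\star_n$ is then a Hamiltonian cycle whose weight is at most $\tfrac{1}{c_1 L}\bigl(f(T)-R(\{\bx(t)\})\bigr)+o(n)$, so an $\Omega(T)$‑approximation of the optimizer problem would yield a $\tfrac{535}{534}$‑approximation of $(1,2)$‑TSP, contradicting $\Ptime\neq\NPtime$.

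\paragraph{Reduction to $r_0=0$ and main obstacle.}
To eliminate the $r_0$ assumption, prepend $O(T^{1-\alpha'})$ ``warm‑up'' rounds (for some $\alpha'<\alpha$) in which the optimizer plays a fixed pattern that drives MWU's cumulative reward vector to exactly the desired $r_0$ (possible because $r_0$ can be chosen integral and of magnitude $O(1/\eta)$). The total reward incurred in warm‑up is $O(T^{1-\alpha'})=o(T)$ and can be folded into the slack; the remaining $T$ rounds run the construction above. The main technical obstacle, I expect, is calibrating the exponential concentration of MWU across phases: we need the distribution at the start of phase $i$ to be concentrated \emph{enough} on the previous choice $u^\star_{i-1}$ to read off the edge weight $W(u^\star_{i-1},\cdot)$ cleanly, yet we must also rule out subtle strategies where the optimizer exploits the sub‑exponential tails to extract extra reward by mixing inside or across phases. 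Controlling these tails uniformly, and in particular proving the convexity/pure‑play reduction with an error that sums to $o(T)$ over all $n$ phases, is the crux of the argument.
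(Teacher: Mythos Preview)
Your proposal does not address the stated theorem at all. \Cref{theorem:TSP-hardness} is the hardness of approximating $(1,2)$-TSP; it is a result \emph{cited} from \citep{karpinski2012approximation}, and the paper gives no proof of it. What you have written is a reduction \emph{from} $(1,2)$-TSP \emph{to} the optimizer problem, i.e., a proof sketch for the paper's main result \Cref{theorem:noinit} (and \Cref{theorem:with_initialization}), not a proof of the TSP inapproximability itself. A proof of \Cref{theorem:TSP-hardness} would go the other way---reducing a known $\NPtime$-hard gap problem to $(1,2)$-TSP---and is entirely outside the scope of this paper.

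If your intent was to sketch \Cref{theorem:noinit}, then the high-level shape (phases of length $\approx T/n$, each encoding one edge of a Hamiltonian cycle; MWU concentrated on the previous phase's vertex; completeness from a tour, soundness by rounding) matches the paper's strategy. But several of your gadget descriptions are too vague to be checkable: the ``repeat-penalty gadget'' that enforces Hamiltonicity, the ``large negative reward if $u$ coincides with a tag already visited,'' and the ``convexity argument'' for pure play are all asserted without construction. The paper handles these via explicit auxiliary learner actions $\Vin,\Vout$ whose cumulative rewards track in/out-degrees and punish the optimizer once any vertex is overused (\Cref{lemma:excess}), together with a careful accounting of ``heavy'' vertices (\Cref{lemma:heavy}) and a path/cycle decomposition of the induced graph $G'$ (\Cref{lemma:path,lemma:cycle,lemma:combine}). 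Your warm-up idea for eliminating $r_0$ is also different from---and less precise than---the paper's: the paper adds a special optimizer action $\blacklozenge$ and $p$ redundant copies of each learner action, calibrated so that the optimizer is forced to spend $(1-\varepsilon)T$ rounds playing $\blacklozenge$ to suppress the copies before any real reward can be earned (\Cref{lemma:reduced-MWU,lemma:opt-p-theta}). Your claim that one can drive MWU's state to \emph{exactly} a prescribed $r_0$ in $o(T)$ rounds using only actions already present in the game is not obviously true and is not what the paper does.
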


\noindent It is worth pointing out that $(1,2)$-TSP also admits an additive hardness guarantee. The weight of the optimal Hamiltonian cycle must be between $n$ and $2n$; therefore, $(1+c)$-factor multiplicative approximation hardness for this problem translates to additive hardness of $cn$. This guarantee is what we will use to establish the additive hardness in \Cref{theorem:noinit}. Prior to discussing the reduction, we will refactor $(1,2)$-TSP as a maximization problem, which inherits similar hardness guarantees.
\medskip

\begin{definition}[$(1,2)$-maxTSP]
An instance of $(1,2)$-TSP specified by $G=(V,W)$ is in one-one correspondence with an instance of $(1,2)$-maxTSP on $\widetilde{G} =(V,\widetilde{W})$ by the relation $\widetilde{W}(e) = 3 - W(e)$. This corresponds to replacing all the weight-$2$ edges by weight-$1$ edges, and vice versa. The objective is to find an Hamiltonian cycle in $\widetilde{G}$ with maximum weight.
\end{definition}

The following theorem shows constant approximation factor hardness for $(1,2)$-maxTSP (this follows trivially from \cite{karpinski2012approximation}).

\begin{theorem} \label{theorem:maxTSP-hardness}
Unless $\Ptime = \NPtime$, there is no polynomial time approximation algorithm for $(1,2)$-maxTSP achieving a multiplicative approximation factor of $1067/1068 + \epsilon$ for any $\epsilon > 0$.
\end{theorem}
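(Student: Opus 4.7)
The plan is to show that a $(1067/1068 + \epsilon)$-approximation for $(1,2)$-maxTSP would give a $(535/534 - \epsilon')$-approximation for $(1,2)$-TSP, contradicting \Cref{theorem:TSP-hardness}. The key observation is that the bijection $\widetilde{W}(e) = 3 - W(e)$ implies that for any Hamiltonian cycle $H$ on $n$ vertices, its weight $\widetilde{W}(H)$ under the maxTSP instance and its weight $W(H)$ under the TSP instance satisfy $\widetilde{W}(H) = 3n - W(H)$, since the cycle has exactly $n$ edges. Consequently, if $\text{OPT}_{\min}$ denotes the minimum weight Hamiltonian cycle in the minTSP instance, then $\text{OPT}_{\max} = 3n - \text{OPT}_{\min}$ is the maximum weight Hamiltonian cycle in the maxTSP instance, and both satisfy $\text{OPT}_{\min},\text{OPT}_{\max} \in [n, 2n]$.

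Suppose for contradiction that a polynomial-time algorithm outputs a Hamiltonian cycle $H$ with $\widetilde{W}(H) \ge (1067/1068 + \epsilon)\, \text{OPT}_{\max}$. Then, interpreting $H$ as a candidate solution for the minTSP instance,
\begin{align}
W(H) = 3n - \widetilde{W}(H) \le 3n - \left(\tfrac{1067}{1068}+\epsilon\right)(3n - \text{OPT}_{\min}) = \tfrac{3n}{1068} - 3n\epsilon + \left(\tfrac{1067}{1068}+\epsilon\right)\text{OPT}_{\min}.
\end{align}
Using $n \le \text{OPT}_{\min}$ for the $3n/1068$ term and $n \ge \text{OPT}_{\min}/2$ for the $-3n\epsilon$ term, the right-hand side is at most
\begin{align}
\left(\tfrac{3}{1068} + \tfrac{1067}{1068} + \epsilon\right) \text{OPT}_{\min} - \tfrac{3\epsilon}{2}\text{OPT}_{\min} = \left(\tfrac{535}{534} - \tfrac{\epsilon}{2}\right)\text{OPT}_{\min},
\end{align}
since $1070/1068 = 535/534$. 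Setting $\epsilon' = \epsilon/2 > 0$ gives a polynomial-time $(535/534 - \epsilon')$-approximation for $(1,2)$-TSP, contradicting \Cref{theorem:TSP-hardness}.

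This is essentially a bookkeeping argument — the only slightly delicate point is that the additive shift $3n$ must be absorbed into a multiplicative bound, which is where the inequalities $\text{OPT}_{\min} \in [n,2n]$ are used. I don't foresee a genuine obstacle; the authors correctly note that the result is ``trivial'' given \Cref{theorem:TSP-hardness}. If a cleaner constant is desired, one can instead argue in the additive regime first (hardness of approximation to within $cn$ for some $c>0$ transfers immediately via $\widetilde{W}(H) = 3n - W(H)$) and then convert back to a multiplicative factor, but the direct algebraic route above suffices to obtain the stated $1067/1068$ constant.
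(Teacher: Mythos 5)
Your proof is correct and follows essentially the same route as the paper's: exploit the weight complementation $\widetilde{W}(H)=3n-W(H)$ on Hamiltonian cycles to transfer a maxTSP approximation guarantee into a minTSP one, then invoke Theorem~\ref{theorem:TSP-hardness}. You are more careful than the paper's brief "by contradiction" step in explicitly using $\text{OPT}_{\min}\in[n,2n]$ to absorb the additive $3n$ shift and to show the $\epsilon$ survives (as $\epsilon/2$); the paper states the identity $\frac{3n-W(C)}{3n-\min_{C'}W(C')}=\frac{\widetilde{W}(C)}{\max_{C'}\widetilde{W}(C')}$ and leaves that bookkeeping implicit.
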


\begin{proof}
    Consider any instance $G = (V,W)$ of $(1,2)$-TSP and the associated instance $\widetilde{G} = (V,\widetilde{W})$ of $(1,2)$-maxTSP. Consider any Hamiltonian cycle $C$ on $V$. By the relation between $W$ and $\widetilde{W}$, we have that $W(C) = 3n - \widetilde{W}(C)$. Taking the minimum over Hamiltonian cycles $C$, for any cycle $C$,
    \begin{equation*}
        \frac{3n - W(C)}{3n - \min_{C'} W(C')} = \frac{\widetilde{W}(C)}{\max_{C'} \widetilde{W}(C)}
    \end{equation*}
    Unless $\Ptime = \NPtime$, it is not possible to find in polynomial time a Hamiltonian cycle $C$ such that $W (C) / \min_{C'} W (C) \le 535/534-\epsilon$. By contradiction this implies that we cannot find a Hamiltonian cycle $C$ such that $\widetilde{W} (C) / \max_{C'} \widetilde{W} (C) \ge \frac{1067}{1068}+\varepsilon$ in polynomial time, unless $\Ptime = \NPtime$.
\end{proof}

\section{Lower bound against MWU with non-zero reward history}

The main result we show in this section is an $\Omega(T)$ hardness result for optimizing against MWU when initialized with an arbitrary reward history. In the subsequent section, we will extend this result to optimizing against MWU in the absence of any initialization.

\begin{theorem} \label{theorem:with_initialization}
    Fix arbitrary constants $\alpha, \beta \in (0,1]$. Suppose the learner plays MWU (\Cref{def:MWU}) with a specific non-zero initialization of reward history $r_0$, and learning rate parameterized as $\eta = 1/T^{1-\alpha}$. There exist game matrices $\bm{A}, \bm{B} \in [-1,1]^{|\mathcal{A}| \times |\mathcal{B}|}$, where $|\mathcal{A}|, |\mathcal{B}| \le T^\beta$, such that unless $\Ptime = \NPtime$, there exists no polynomial time optimizer which finds a sequence of pure strategies for the optimizer $\{ \bx (t) : t \in [T] \}$ satisfying,
\begin{align}
    R_{\eta,r_0} ( \{ \bx (t) : t \in [T] \}) \ge \max_{\{ \bx^\star (t) : t \in [T] \}} R_{\eta,r_0} ( \{ \bx^\star (t) : t \in [T] \}) - cT + o(T)
\end{align}
For a sufficiently small absolute constant $c > 0$ which depends on $\alpha$ and $\beta$.
\end{theorem}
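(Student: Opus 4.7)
The approach is to reduce from $(1,2)$-maxTSP, exploiting the constant-factor approximation hardness stated in \Cref{theorem:maxTSP-hardness}. Given an instance $\widetilde{G} = (V, \widetilde{W})$ with $|V| = n$, the plan is to construct a two-player game in which any near-optimal optimizer play over $T$ rounds essentially simulates a Hamiltonian tour of $\widetilde{G}$, with cumulative reward proportional to the weight of that tour. The mapping is arranged so that the $\widetilde{W}(C)/\mathrm{OPT}$ ratio of a recovered cycle $C$ translates directly into the additive approximation ratio of the associated optimizer strategy, and the $(1067/1068 + \varepsilon)$ multiplicative hardness of maxTSP, amplified by the fact that $\mathrm{OPT} = \Theta(n)$ and total reward is $\Theta(T)$, yields the desired additive $\Omega(T)$ lower bound.

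The game I would use takes $\mathcal{A} = \mathcal{B} = V$, with a learner matrix of the form $B(u,v) \propto \mathbb{1}[u=v]$ so that the optimizer's only lever on the learner is to boost whichever vertex it plays, and an optimizer matrix of the form $A(u,v) \propto \widetilde{W}(u,v)\cdot \mathbb{1}[u \ne v]$, so that the per-round reward is (a normalization of) the edge weight $\widetilde{W}(u,v)$ whenever the optimizer plays an edge out of the learner's currently dominant vertex. The initialization $r_0$ is chosen with large, carefully spaced gaps of order $1/\eta$ so that initially only a designated vertex $v_1$ has non-negligible MWU weight. I would pick $n \approx T^{\min(\alpha,\beta)}$, which satisfies $|\mathcal{A}|,|\mathcal{B}|\le T^{\beta}$ and ensures that the number of ``dominant-arm flips'' available to the optimizer over $T$ rounds, which is on the order of $T\eta = T^{\alpha}$, is enough to traverse every vertex (looping around the cycle several times if $\alpha > \beta$).

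The reduction then has three pieces. First, given an optimal Hamiltonian cycle $C^\star = (v_1,\dots,v_n,v_1)$, I would exhibit an explicit optimizer strategy that traces $C^\star$ by playing $v_{i+1}$ while the learner's weight is concentrated on $v_i$, flipping the dominant arm after $\approx 1/\eta$ rounds, and earning cumulative reward $\tfrac{T}{n}\widetilde{W}(C^\star)(1-o(1))$. Second, I would establish a matching upper bound: any optimizer sequence induces an effective walk on $V$ whose number of transitions is bounded by $O(T\eta)$; each segment contributes at most $(1/\eta) \cdot \widetilde{W}(e)/\text{const}$ to the reward where $e$ is the transition edge; and any round in which the optimizer plays an edge not out of the dominant vertex is essentially wasted. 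Third, by a counting argument using the fact that revisits do not reduce the $1/\eta$ transition cost, I would argue that the best such walk decomposes (up to $o(T)$ correction) into a Hamiltonian cycle, so that the optimal optimizer value equals $\tfrac{T}{n}\mathrm{OPT}(1 + o(1))$.

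The hard part will be the second step: controlling the MWU dynamics when the optimizer deviates from a clean cycle-tracing strategy. During a transition between vertices the learner's distribution is genuinely mixed, and a priori the optimizer might extract anomalous reward by straddling intermediate states, by ``parking'' the learner in a superposition, or by alternating between vertices to avoid paying the full $1/\eta$ transition cost. I plan to handle this with a potential-function argument on $\Phi(t) = \eta^{-1}\log \sum_b \exp(\eta(\rl(b,t) + r_0(b)))$, which tracks cumulative boost and advances roughly by $\max_b \rl(b,t)$-increments per round; this should show that (i) the number of arm flips is at most $\Phi(T)/\Omega(1/\eta)$, and (ii) any optimizer strategy can be surgically replaced by a canonical cycle-tracing one without losing more than $o(T)$ reward. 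Combined with \Cref{theorem:maxTSP-hardness} and $\mathrm{OPT}\ge n$, this yields the claimed additive $cT$ hardness for a constant $c$ depending on $\alpha,\beta$.
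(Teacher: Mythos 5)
Your high-level plan — reduce from $(1,2)$-maxTSP, match an upper bound on every optimizer's reward to the weight of some Hamiltonian cycle, and match a lower bound to the optimal cycle — is the same as the paper's. But your concrete game construction is fundamentally different from the paper's, and it has a gap that your proposed potential-function argument does not repair.

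The paper makes the optimizer play \emph{directed edges} and makes the learner's action space $E \cup \Vin \cup \Vout$, where the extra ``vertex'' actions $\vin, \vout$ gain reward at \emph{twice} the rate of edge actions and start at $-k$. Their entire purpose is to enforce a degree bound: if the optimizer plays edges incident to any single vertex $v$ more than about $k+\Delta$ times, then $\vin$ or $\vout$ overtakes every other learner action, MWU collapses onto it, and the optimizer earns essentially nothing afterward (\Cref{lemma:excess}, \Cref{corr:excess}). That degree bound is what forces an honest optimizer to spread its plays across roughly $k$ copies of $n$ distinct edges, i.e.\ to encode a cycle cover, which \Cref{lemma:path}, \Cref{lemma:cycle}, and \Cref{lemma:combine} then stitch into a Hamiltonian cycle.

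Your game $\mathcal{A}=\mathcal{B}=V$, $B(u,v)\propto\mathbb{1}[u=v]$, $A(u,v)\propto\widetilde W(u,v)\mathbb{1}[u\ne v]$ has no such mechanism. Nothing stops the optimizer from confining its play to a small, high-weight subgraph and never touching the rest of $V$. Concretely, take an $m$-clique $S$ all of whose $\binom{m}{2}$ edges have weight~$2$, and have the optimizer cycle through the vertices of $S$ round-robin (possibly after an initial phase equalizing the $r_0$ values within $S$). The learner's cumulative rewards on $S$ grow at rate $1/m$ and all other actions stay flat, so MWU's mass settles essentially uniformly on $S$; thereafter the per-round reward is $\tfrac{m-1}{m}\cdot 2 \to 2$ per round, giving total $\approx 2T$ regardless of whether $\widetilde G$ has a heavy Hamiltonian cycle. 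Crucially, there are \emph{no} ``dominant-arm flips'' in this play after the burn-in: the soft-max potential $\Phi(t)$ grows at rate $\approx 1/m$, so your bound ``number of arm flips $\le \Phi(T)\eta$'' is satisfied with lots of slack, and your per-segment charge of $(1/\eta)\widetilde W(e)$ never gets invoked because the learner's distribution is parked, not transitioning. In other words, your step (ii) — surgically replacing an arbitrary optimizer by a canonical cycle-tracing one at $o(T)$ cost — is exactly what this exploit violates, and the potential function does not see it.

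There is a secondary issue in your step (1) as well: once you ``loop around the cycle several times,'' all $n$ vertices reach roughly the same cumulative reward, so MWU spreads its mass over all of $V$ rather than concentrating on the predecessor $v_i$; the per-round reward degrades toward the \emph{average} incident edge weight instead of $\widetilde W(v_i,v_{i+1})$. The paper sidesteps this by playing each edge exactly $\Theta(k)$ times in a single pass (no looping), by including the anchor action $e^\dagger$ which keeps the learner's baseline level pinned near $k$, and by the $-\varepsilon$ decay term in $\bm B$ (\cref{eq:20c}) which suppresses stale edges as the path advances.

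To salvage your approach you would essentially need to reintroduce both ingredients — auxiliary ``degree-counting'' actions to punish over-playing any vertex, and a structure that keeps the learner's dominant level from flattening across many actions — at which point you are reconstructing the paper's edge-based game with $\Vin,\Vout$ and $e^\dagger$. As written, the proposal does not establish that the optimal optimizer value tracks the weight of a Hamiltonian cycle.
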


\noindent As hinted at previously, the proof of this result will go through via a reduction to $(1,2)$-maxTSP. We discuss the construction next.

\subsection{Structure of the reduction} \label{sec:structure}

Define $E = (V \times V) \setminus \{ (v,v) : v \in V \}$, the set of directed edges of the complete digraph on $V$. Given an instance of $(1,2)$-maxTSP on the vertex set $V$, specified by a set of edge weights $\{ W_e : e \in E \}$, define the following game structure for both agents:
\begin{subequations}
\begin{align}
    &\textbf{Optimizer's action space } \Ainit = E \textbf{ of size } \minit = |V|^2 - |V| \\
    &\textbf{Learner's action space } \Binit = E \cup \Vin \cup \Vout \textbf{ of size } \ninit = |V|^2 + |V|
\end{align}
\end{subequations}
where $\Vin$ and $\Vout$ are auxiliary copies of $V$. For any vertex $v \in V$, we will use $\vin$ and $\vout$ to denote the corresponding vertex in $\Vin$ and $\Vout$, and vice versa. Recall that we parameterize MWU's learning rate as $\eta = 1/T^{1-\alpha}$ for some $\alpha \in (0,1]$. We will later choose $V$ to be of size $n = T^{\min \{\alpha,\beta\}/3}$. Thus, the learner's and optimizer's action spaces are of size $\mathcal{O} (T^{2\beta/3})$.

The reduction's idea is to create a game instance, where the optimal strategy for the optimizer is to play along some approximate solution of the maxTSP problem. Specifically, some edge $e^\dagger$ is forced as the first edge to play, and the optimizer has to proceed with an approximate solution to the TSP problem, playing each edge along this path for $k$ rounds. The reward matrix will be constructed such that, when the optimizer plays along such strategy, it receives a utility that is proportional to its path weight, hence it would like to find a maximal-weight path. Notice that the optimizer does not have to play according to some path. To ensure that it cannot gain more utility deviating from playing each edge in a path for $k$ rounds, the historical rewards are defined such that if the learner significantly diverges from such a play-structure, the learner will play actions in $\Vin$ and $\Vout$ from that point onwards, and these yield no utility for the optimizer.

\noindent The game matrices are defined as follows:
\begin{subnumcases}{\text{For } (w,x) \in E,\quad \bm{A} ( \underbrace{(u,v)}_{\text{opt.}}, \underbrace{(w,x)}_{\text{lr.}} ) =}
    W_{(w,x)} &if $(u,v) = (w,x)$ \\
    W_{(w,x)} &if $u=x$, \\
    0 &\text{otherwise.}
\end{subnumcases}
\begin{align} \label{eq:blacklozenge}
    \text{For } a \in \Ainit,\ b \in \Binit \setminus E, \quad \bm{A} ( a, b ) &= 0.
\end{align}
Thus, if the learner plays an edge $e$, the optimizer collects $W_e$ units of reward for either responding with the same edge or responding with one which stems from the out-vertex of this edge. The optimizer surely collects no reward if the learner had chosen an action in $\Vin$ or $\Vout$. On the other hand, for some $\varepsilon > 0$ and an arbitrary edge $e^\dagger = (u^\dagger, v^\dagger) \in E$, the game matrix for the learner is,
\begin{subnumcases}{
\text{For } (w,x) \in E,\quad \bm{B} ( \underbrace{(u,v)}_{\text{opt.}}, \underbrace{(w,x)}_{\text{lr.}} ) = }
         0 \qquad &if $(u,v) = e^\dagger$ \label{eq:20a}\\
         1 \qquad &if $(u,v) \ne e^\dagger$ and $(w,x) = (u,v)$ \label{eq:20b}\\
         -\varepsilon &if $x=u,$ \label{eq:20c}\\
         0 &otherwise. \label{eq:20d}
\end{subnumcases}
\begin{align}
    \text{For } b \in \Vin,\quad \bm{B} ( (u,v), b) &= 2 \mathbbm{1} ( b = v)\\
    \text{For } b \in \Vout,\quad \bm{B} ( (u,v), b ) &= 2 \mathbbm{1} ( b = u)
\end{align}
Thus, the cumulative reward of an edge for the learner is the number of times the same edge was played by the optimizer minus the number of times the optimizer chooses an edge emanating from its out vertex scaled by $\varepsilon$.

\medskip

\noindent Recall that in the statement of \Cref{theorem:with_initialization} we are allowed to assume that the cumulative rewards of actions played by MWU are initially non-zero. This ``initial reward'' influences the probability of actions chosen by MWU. In particular, for $k > 0$ to be determined later, suppose the cumulative rewards are initialized as follows,
\begin{align}
    \text{For } \vin \in \Vin,\ r_0 (\vin) &= -k \label{init:1}\\
    \text{For } \vout \in \Vout,\ r_0 (\vout) &= -k \label{init:2}\\
    \text{For } (u,v) \in E \setminus \{ e^\dagger \},\ r_0 ( (u,v) ) &= 0 \label{init:3}\\
    r_0 ( e^\dagger ) &= k \label{init:4}
\end{align}
In particular, when $k \gg \log (n)/\eta$, the initial distribution determined by MWU places most of its mass on the edge $e^\dagger$. As we will see later, the presence of the edge $e^\dagger$ is vital: it will ensure that there always exists an action with high cumulative reward ($\approx k$) for the learner. On the other hand, the vertices $\vin \in \Vin$ and $\vout \in \Vout$ will serve to keep a handle on the number of edges the optimizer may play sinking / emanating out of a single vertex. These vertices begin with a high negative reward, $-k$, but gain reward at twice the rate of edges in $\mathcal{B}_{\text{init}}$. Once any of these vertices gain too much reward, MWU will displace most of its mass toward such actions, preventing the optimizer from gaining significant reward then onward.

\subsection{Proof of \Cref{theorem:with_initialization}: non-zero historical rewards}

We start by arguing that the optimizer should not play an edge significantly more than $k$ times, otherwise it will stop gaining reward. Define $V_{\text{excess}}(\Delta,t)$ as the set of vertices $v$ for which there exists an incident edge that has been played more than $k+\Delta$ times by the optimizer by time $t$, where $\Delta$ is small compared to $k$.

\begin{definition} \label{def:excess}
For $v\in V$ and $t=1,\dots,T$ denote by $\din(v,t)$ (resp. $\dout(v,t)$) the number of times that the optimizer played an edge incoming (resp. outgoing) $v$, in iterations $1,\dots,t$. We refer to $\din(v,t)$ and $\dout(v,t)$ as \emph{in-degree} and \emph{out-degree} of $v$ throughout the play.
Let $V_{\text{excess}} (\Delta,t)$ denote the set of vertices which have in-degree or out-degree exceeding $k + \Delta$. Namely,
\begin{equation}
    V_{\text{excess}} (\Delta,t) = \left\{ v \in V : \max \{ \din (v,t), \dout (v,t) \} \ge k + \Delta \right\}.
\end{equation}
\end{definition}

\noindent In the following lemma we prove that the optimizer will be punished for playing edges that emanate or sink in a vertex $v$ too many times. Specifically, if at time $\tmax$, $V_{\text{excess}} (\Delta,t) $ is non empty, then from that point onwards the rewards the optimizer can obtain are limited. The idea is that if the optimizer exceeds the threshold of $k + \Delta$ for a vertex $v$, one of the actions $\vin, \vout$ will receive much higher reward than the other actions of the optimizer. The MWU learner will then place almost all of its probability mass on this action, for which the optimizer earns no reward.

\begin{lemma} \label{lemma:excess}
Suppose $\Delta \ge \frac{1}{\eta} \log \left(2 n^2 T \right)$. If at any time $t >0$, $V_{\text{excess}} (\Delta,t)$ is nonempty, then, the optimizer can collect at most $\frac{1}{T}$ reward at time $t+1$.
\end{lemma}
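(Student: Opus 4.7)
The plan is to show that if some vertex lies in $V_{\text{excess}}(\Delta,t)$, then the learner's softmax places almost all mass on auxiliary actions in $\Vin \cup \Vout$ at time $t+1$. Since by \eqref{eq:blacklozenge} these actions yield zero reward to the optimizer, the opt's reward is controlled by the total softmax mass on edge actions, which I will show is $O(1/T)$. The mechanism is a gap argument: every edge action's exponent is dominated by the $\vin$ (or $\vout$) exponent of the excess vertex by at least $\Delta$, and the condition $\Delta \ge \log(2n^2T)/\eta$ translates this gap into the desired probability bound.

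Concretely, I would pick $v \in V_{\text{excess}}(\Delta,t)$ and assume WLOG $\dout(v,t) \ge k+\Delta$, and moreover choose $v$ to \emph{maximize} $\dout(\cdot,t)$ among all vertices (if only $\din(v,t) \ge k+\Delta$ holds, the argument is symmetric with $\vout \leftrightarrow \vin$ and $\dout \leftrightarrow \din$). From the definitions of $\bm{B}$ and the initialization \eqref{init:1}--\eqref{init:4}, the exponent of $\vout$ in the softmax is
\begin{equation*}
\rl(\vout,t) + r_0(\vout) = 2\dout(v,t) - k \ge k + 2\Delta.
\end{equation*}
For any edge $e=(w,x)$ with $e \ne e^\dagger$, $\rl(e,t)$ gets $+1$ each time the optimizer plays $e$ and only nonpositive contributions otherwise, so $\rl(e,t)+r_0(e) \le c_e$, where $c_e$ denotes the number of times the optimizer has played $e$. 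Crucially, $c_e \le \dout(w,t) \le \dout(v,t)$ by our choice of $v$. For $e^\dagger$ itself, the learner reward is at most $0$ (only the $-\varepsilon$ rule applies, cf. \eqref{eq:20a}--\eqref{eq:20d}), so $\rl(e^\dagger,t)+r_0(e^\dagger) \le k$.

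Plugging these into the MWU distribution:
\begin{equation*}
\sum_{b \in E} \by(b,t+1) \;\le\; \frac{e^{\eta k} + (n^2-n)\, e^{\eta \dout(v,t)}}{e^{\eta(2\dout(v,t)-k)}} \;\le\; e^{-2\eta\Delta} + n^2 e^{-\eta \Delta},
\end{equation*}
where we lower-bounded the denominator by the $\vout$ term alone and used $\dout(v,t) \ge k+\Delta$. With $\Delta \ge \log(2n^2T)/\eta$ we get $e^{-\eta\Delta} \le 1/(2n^2T)$, so the RHS is $O(1/T)$. Combining with the fact that $\bm{A}(a,b)=0$ for $b \in \Vin \cup \Vout$ (by \eqref{eq:blacklozenge}) and $|\bm{A}(a,b)| \le 2$ for $b \in E$, the optimizer's reward at step $t+1$ is bounded by $2 \sum_{b \in E} \by(b,t+1) = O(1/T)$, as claimed.

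The main obstacle is the bound $c_e \le \dout(v,t)$ on \emph{every} edge's cumulative learner reward. A naive estimate $c_e \le T$ would be fatal: the numerator would scale as $e^{\eta T}$, which need not be controlled by the $\vout$ exponent $e^{\eta(k+2\Delta)}$ when $k \ll T$. The structural observation that plays of $e=(w,x)$ are counted by $\dout(w,t)$, together with the choice of $v$ as the out-degree maximizer, is precisely what closes the gap and converts the excess degree at $v$ into a uniform advantage of $\Delta$ in the exponent over every competing edge action.
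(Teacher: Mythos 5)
Your proof is correct and follows essentially the same strategy as the paper's: pick a degree-maximizing vertex in $V_{\text{excess}}$, show the corresponding $\vout$ (or $\vin$) action's cumulative reward beats every edge action by at least $\Delta$, and convert this gap into an $e^{-\eta\Delta}$ bound on the total MWU mass over $E$ (the paper maximizes $\max\{\din,\dout\}$ and bounds edge counts by the sink's in-degree, while you maximize $\dout$ and bound by the source's out-degree, but these are interchangeable). One small slack: by lower-bounding the partition function by only the $\vout$ term, you get $2p_E \approx \frac{1}{T} + \frac{1}{2n^4T^2}$, which slightly exceeds the claimed $\frac{1}{T}$; keeping the edge terms in the denominator (as the paper does, giving $p_E \le \frac{1}{2T+1}$) closes this trivial gap.
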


We now use the above lemma to prove the following corollary. Denoting the first time $V_{\text{excess}}(\Delta, t)$ becomes non-empty as $\tmax$, then across timesteps $\tmax + 1, \tmax + 2, \dots, T$, the optimizer can collect cumulative reward at most $\frac{1}{T} \times T = 1$.

\begin{corollary} \label{corr:excess}
    Suppose $\Delta \ge \frac{1}{\eta} \log \left(2 n^2 T \right)$. Let the first time $V_{\text{excess}}(\Delta, t)$ becomes non-empty be denoted $\tmax$ (if it exists), and suppose the optimizer has collected reward $R(\tmax)$ so far. Then, by the end of the repeated interaction the optimizer can collect cumulative reward at most $R(\tmax) + 1$.
\end{corollary}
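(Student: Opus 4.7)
The plan is to derive the corollary as a direct consequence of \Cref{lemma:excess} by exploiting a monotonicity property of the set $V_{\text{excess}}(\Delta,t)$. The key observation is that the in-degree $\din(v,t)$ and out-degree $\dout(v,t)$, as defined in \Cref{def:excess}, are non-decreasing in $t$: every time the optimizer plays a pure edge $(u,v)$, the relevant in/out-degrees only increase by one, and otherwise they stay the same. Consequently, once a vertex $v$ enters $V_{\text{excess}}(\Delta,t)$ at some time $t$, it remains in $V_{\text{excess}}(\Delta,t')$ for all $t' \ge t$. In particular, since $V_{\text{excess}}(\Delta,\tmax)$ is non-empty by the definition of $\tmax$, the set $V_{\text{excess}}(\Delta,t)$ is non-empty for every $t \ge \tmax$.

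Next, I would apply \Cref{lemma:excess} at each timestep $t \in \{\tmax, \tmax+1, \dots, T-1\}$. The hypothesis $\Delta \ge \frac{1}{\eta}\log(2n^2 T)$ is exactly the hypothesis of the lemma, so for each such $t$ the reward collected by the optimizer at iteration $t+1$ is at most $1/T$. Summing over the at most $T - \tmax \le T$ timesteps strictly after $\tmax$ yields an additional cumulative reward of at most $T \cdot \frac{1}{T} = 1$. Adding the reward $R(\tmax)$ accumulated over the first $\tmax$ iterations gives a total of at most $R(\tmax) + 1$, which is exactly the claimed bound.

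There is essentially no obstacle beyond establishing the monotonicity observation; all the quantitative work is already encapsulated in \Cref{lemma:excess}. The only minor point to handle carefully is the book-keeping about indexing, namely that the lemma controls the reward at iteration $t+1$ given that $V_{\text{excess}}(\Delta,t)$ is non-empty, so the summation range to use is $t \in \{\tmax, \dots, T-1\}$, giving at most $T - \tmax$ additional terms, each contributing at most $1/T$.
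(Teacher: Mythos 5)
Your proof is correct and takes essentially the same approach as the paper: apply \Cref{lemma:excess} at every step from $\tmax$ onward and sum the per-step $1/T$ bounds to get an additional contribution of at most $1$. The only difference is that you make explicit the monotonicity of $V_{\text{excess}}(\Delta,t)$, which the paper leaves implicit.
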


Next, we argue that the optimizer will not gain much reward by playing edges which emanate from vertices such that all incoming edges to the vertex are played with low frequency by the optimizer. Formally, define $\Sh (\Delta, t)$, the set of \textit{heavy vertices} at time $t$: this is the union of $\{ v^\dagger \}$ with the vertices $v \in V$ which satisfy the condition that there exists an edge of the form $e = (\cdot,v)$ such that  $E (e, t) \ge k (1 - \varepsilon ) - \Delta$. Since we restrict ourselves to time $\tmax$, vertices have in-degree at most $k + \Delta$ (as induced by the edges played by the optimizer). Thus, the heavy vertices are those with high in-degree, induced almost entirely by $\approx k$ copies of a single incoming edge.
\begin{definition} \label{def:heavy}
        Define $\Sh(\Delta, t)$ to be the set of vertices $v \in V$ for which there exists an edge $(\cdot, v)$ for which this edge has been used by the optimizer at least $k(1-\varepsilon) - \Delta$ times plus $v^\dagger$. Precisely:
    \begin{align}
        \Sh(\Delta, t) = \{v^\dagger\} \cup \{v \in V | \exists u \in V, Q((u,v), t) \geq k(1-\varepsilon) - \Delta\}
    \end{align}
    where $Q((u,v),t)$ is the number of time that $(u,v)$ was played by the optimizer in iterations $1,\dots,t$.
\end{definition}

\noindent In \Cref{lemma:heavy}, we argue that if we consider the set of heavy vertices, $\Sh (\Delta,\tmax)$, the optimizer essentially only collects reward for the edges played which emanate from a vertex in this set. We will denote this cumulative reward by $\rh (\Delta, \tmax)$. The idea is that any vertex $v \not\in \Sh (\Delta,\tmax)$ will never have a single edge played more than $k(1-\varepsilon) - \Delta$ times incoming into it. The optimizer may only collect reward for an edge emanating from this vertex if:
    \begin{enumerate}
        \item MWU places high mass on some edge incoming into this vertex. This is not possible because every edge incoming into $v$ has low frequency (as induced by the optimizer).
        \item MWU places high mass on the same edge. This is not possible until the same edge has been pulled many (at least approximately $k (1-\varepsilon)$) times by the optimizer. After this point, the same edge can be played at most until its frequency hits $k + \Delta$. In the small interim interval is when any reward can be collected for picking this edge.
    \end{enumerate}

\begin{lemma} \label{lemma:heavy}
Let $\rh (\Delta, t)$ count the contribution of the total reward collected by the optimizer at time $t$ arising only from edges which emanate from a vertex in $\Sh (\tmax)$. Then,
\begin{align}
    \ro (\tmax) \le \rh ( \Delta, \tmax ) + 2n (k \varepsilon + 2\Delta) + 3
\end{align}
\end{lemma}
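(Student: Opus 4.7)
The plan is to decompose the per-round reward by inspecting the support of $\bm{A}$ and then to bound the contribution of rounds whose optimizer source vertex lies outside $\Sh(\Delta,\tmax)$. Writing the optimizer's pure action at time $t$ as $\bx(t) = (a_t, b_t) \in E$, the definition of $\bm{A}$ in \eqref{eq:blacklozenge} yields
\[
    R_t := \bx(t)^\top \bm{A}\, \by(t) = W_{(a_t,b_t)}\, \by((a_t,b_t), t) + \sum_{w \ne a_t} W_{(w,a_t)}\, \by((w,a_t), t),
\]
since $\bm{A}((a_t,b_t), \cdot)$ is nonzero only when the learner's edge equals $(a_t, b_t)$ or ends at $a_t$. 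Rounds with $a_t \in \Sh(\Delta,\tmax)$ are by definition absorbed into $\rh(\Delta,\tmax)$, so the task reduces to bounding the total reward from rounds with $a_t \notin \Sh(\Delta,\tmax)$.

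In such a round I would first bound the ``incoming'' sum $\sum_{w \ne a_t} W_{(w,a_t)} \by((w,a_t), t)$ by comparing to $e^\dagger$ inside the MWU formula. Non-heaviness of $a_t$ forces $Q((w, a_t), \tmax) < k(1-\varepsilon) - \Delta$ for every $w$, hence $\rl((w,a_t),t) + r_0((w,a_t)) < k(1-\varepsilon) - \Delta$, while $r_0(e^\dagger) = k$ together with $\dout(v^\dagger, t) \le k + \Delta$ for $t \le \tmax$ (ensured by \Cref{corr:excess}) gives $\rl(e^\dagger,t) + r_0(e^\dagger) \ge k(1-\varepsilon) - \varepsilon \Delta$. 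Plugging the resulting gap of at least $\Delta(1-\varepsilon)$ into the MWU ratio yields $\by((w,a_t),t) \le \exp(-\eta\Delta(1-\varepsilon))$; summing over the at most $n$ choices of $w$ and over $t \le \tmax \le T$ and invoking $\Delta \ge \tfrac{1}{\eta}\log(2n^2 T)$ with $\varepsilon$ a sufficiently small absolute constant bounds the aggregate incoming contribution by an absolute constant.

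For the self-term $W_{(a_t, b_t)} \by((a_t, b_t), t)$ in a non-heavy round, I would split each edge's plays by whether $Q((a_t, b_t), t) < k(1-\varepsilon) - \Delta$ (\emph{sub-threshold}) or $Q((a_t, b_t), t) \ge k(1-\varepsilon) - \Delta$ (\emph{super-threshold}). Sub-threshold plays are handled exactly like the incoming sum above: since $\rl((a_t,b_t),t) \le Q((a_t,b_t),t)$, the same MWU-gap argument gives $\by((a_t, b_t), t) \le \exp(-\eta\Delta(1-\varepsilon))$ and the total sub-threshold contribution is an absolute constant. For super-threshold plays, $\by$ can be close to $1$, but monotonicity of $Q$ forces $Q((a_t, b_t), \tmax) \ge k(1-\varepsilon) - \Delta$, so $b_t \in \Sh(\Delta, \tmax)$ by the definition of $\Sh$; in other words, every super-threshold play routes through a heavy head.

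The key combinatorial step, which I expect to be the main obstacle, is to count super-threshold plays at a cost linear rather than quadratic in $n$. For each heavy head $b \in \Sh(\Delta,\tmax)$, the total number of super-threshold plays into $b$ equals $\sum_a \max\{0,\, Q((a,b),\tmax) - (k(1-\varepsilon) - \Delta)\}$, and \Cref{corr:excess} gives $\sum_a Q((a,b),\tmax) = \din(b,\tmax) \le k + \Delta$. A one-vertex pigeonhole (noting that having two distinct $a$'s with $Q \ge k(1-\varepsilon) - \Delta$ would already exceed the in-degree budget for $k$ moderately larger than $\Delta$) therefore bounds the surplus above the heavy threshold by $(k+\Delta) - (k(1-\varepsilon) - \Delta) = k\varepsilon + 2\Delta$. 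Summing over the at most $n$ heavy vertices and using the trivial per-play upper bound $W_{(a,b)} \by((a,b), t) \le 2$ bounds the super-threshold reward by $2n(k\varepsilon + 2\Delta)$. Combining this with the absolute-constant sub-threshold and incoming losses, absorbed into the additive $+3$ slack, gives $\ro(\tmax) \le \rh(\Delta, \tmax) + 2n(k\varepsilon + 2\Delta) + 3$, as required.
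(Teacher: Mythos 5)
Your proof is correct and follows essentially the same decomposition as the paper: split the per-round reward into the incoming term and the same-edge term, show the incoming term is negligible when the source vertex is light, and then use a pigeonhole on the degree budget at $\tmax$ to bound the total number of plays where the same-edge mass is not already small. Two small points where your route deviates from the paper's, neither of which is a flaw but one of which costs you a slightly looser constant.

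First, for the light-mass estimates you compare directly against $e^\dagger$, which after accounting for the $-\varepsilon\dout(v^\dagger,\cdot)$ penalty only gives a gap of $\Delta(1-\varepsilon)$. The paper's Lemma~\ref{lemma:low-reward} instead observes that $\max\{\rl(e^\dagger,t),\,\rl(\vout^\dagger,t)\}\ge k(1-\varepsilon)$ always holds (whichever of $e^\dagger$ or $v^\dagger_{\mathrm{out}}$ wins depending on $\dout(v^\dagger,t)$), which gives a clean gap of $\Delta$ and therefore a per-action mass of $1/(n^2T)$ (Corollary~\ref{corr:low-reward}) with no dependence on $\varepsilon$. Your version forces you to take $\varepsilon$ sufficiently small relative to $\min\{\alpha,\beta\}$ to make $2nT\exp(-\eta\Delta(1-\varepsilon))$ a constant, and even then the constant is not obviously $\le 3$ as the lemma states; the paper's sharper comparison sidesteps this. (You should also explicitly note that $(w,a_t)\ne e^\dagger$ because $v^\dagger\in\Sh$ and $a_t\notin\Sh$, so $r_0((w,a_t))=0$.)

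Second, you index the super-threshold surplus by the heavy \emph{head} $b$ and invoke the in-degree budget $\din(b,\tmax)\le k+\Delta$, whereas the paper indexes by the light \emph{source} $v$ and invokes the out-degree budget $\dout(v,\tmax)\le k+\Delta$. Both work because $V_{\text{excess}}(\Delta,\cdot)$ is empty before $\tmax$ so both degree budgets hold; your version avoids the slight awkwardness in the paper's of having to argue separately that each non-heavy $v$ also has at most one out-edge over threshold. Either way the pigeonhole gives surplus at most $k\varepsilon+2\Delta$ per vertex, giving $2n(k\varepsilon+2\Delta)$ total, matching the paper.
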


\noindent The above assertions have a few consequences. First, we have that the optimizer should be very careful not to exceed playing one action more than $k + \Delta$ times, and risks receiving almost no reward for the rest of the game (\Cref{corr:excess}). We then define the 'heavy' vertices to be the vertices for which an edge incident to them have been played a significant amount of times. Then, with \Cref{lemma:heavy}, we argued that it is enough to look at the reward of the edges emanating from 'heavy' vertices of the optimizer, $\rh(\Delta, \tmax)$, as they make up most of the reward of the optimizer.
\smallskip

We will now try to unpack and upper bound $\rh(\Delta, \tmax)$. Notice that for every $v \in \Sh(\Delta, \tmax)$, there is a unique vertex $u$ for which the action $(u,v)$ is played more than $k(1-\varepsilon) - \Delta$ times. There cannot be more than one, otherwise we would have $\din(v, t) = 2(k(1-\varepsilon) - \Delta) > k + \Delta$, which is a contradiction to the fact that $V_{\text{excess}}$ is empty prior to reaching time $\tmax$. Similarly, one can prove that for every vertex in $v$ there can be at most one action $(v,w)$ having reward at least $k(1-\varepsilon) - \Delta$. This motivates the following graph construction of $G'$.

\begin{definition} \label{def:auxiliary}
    Construct a graph $G'$ using the vertices of $V$ as follows; add an edge from $u \to v$ if $v \in \Sh(\Delta, \tmax)$ and $E((u,v), \tmax) \geq k(1-\varepsilon) - \Delta$.
\end{definition}

The first observation, as already hinted, is that $G'$ is a graph in which all vertices have in-degree and out-degree at most one. That means that we may split $G'$ into disjoint paths and cycles, which we denote as $C_1, C_2, \cdots C_s$. Our goal will now be to bound the rewards obtained by actions emanating from the paths (\Cref{lemma:path}) and cycles (\Cref{lemma:cycle}) of the Graph $G'$ and combining the two to get a global upper bound of the reward collected by optimizer edges emanating from vertices in $\Sh(\Delta, t)$  (\Cref{lemma:combine}). We begin by bounding the rewards of  actions/edges that emanate from vertices of a specific path of the graph $G'$; We argue that the total reward collected by the optimizer for edges emanating from these vertices is approximately upper bounded by the sum of the weights (in the maxTSP instance) of the edges along this path along with a free edge pointing from the last vertex to an arbitrary vertex, multiplied by $k$. Without loss of generality, we assume that the component $C_1$ contains the edge $e^\dagger$.

\begin{lemma} \label{lemma:path}
Consider any \emph{path} $C_i$ in $G'$ composed of vertices $z_1,z_2,\cdots,z_m \in V$. The total reward collected by the optimizer for edges which stem from vertices on $C_i$ is upper bounded by,
\begin{align}
    \sum_{i=1}^{m} (k (1 + \varepsilon) + 3\Delta) W (z_i, z_{i+1}) +  2 (k + \Delta) \mathbb{I} (i=1).
\end{align}
where $z_{i+1}$ is an arbitrary (free) vertex.
\end{lemma}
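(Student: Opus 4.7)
The plan is to decompose, for each vertex $z_j$ on the path, the reward the optimizer collects by playing edges emanating from $z_j$ into the two sources of non-zero entries in $\bm{A}$: the \emph{matching case}, where the learner plays the same outgoing edge as the optimizer, and the \emph{incoming case}, where the learner plays an edge whose destination coincides with $z_j$. The technical backbone of the argument is the claim that, for the chosen $\Delta \ge \frac{1}{\eta}\log(2n^2 T)$, any learner action whose effective exponent lags that of $e^\dagger$ by more than roughly $\eta\Delta$ carries MWU mass at most $1/(2n^2 T)$. The requisite lower bound $Z(t) \ge \exp(\eta (k - \varepsilon(k+\Delta)))$ follows because $\rl(e^\dagger, t) = -\varepsilon\,\dout(v^\dagger, t)$ and $\dout(v^\dagger, t) \le k + \Delta$ throughout $[0, \tmax]$ by the hypothesis $V_{\text{excess}}(\Delta, t) = \emptyset$.

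For the matching case at vertex $z_j$, the only outgoing edge from $z_j$ that can ever receive non-negligible MWU mass is the unique G'-successor $(z_j, z_{j+1})$: any non-heavy outgoing edge $(z_j, v)$ has $Q((z_j, v), \tmax) < k(1-\varepsilon) - \Delta$, forcing its mass below $(2n^2T)^{-(1-\varepsilon)}$ at every time step. For the heavy edge itself, I would split time at the first moment $t^\ast$ at which $Q((z_j, z_{j+1}), t^\ast) \ge k(1-\varepsilon) - \Delta$: before $t^\ast$ the mass is still negligible by the same calculation, and after $t^\ast$ the optimizer can play this edge at most $(k+\Delta) - (k(1-\varepsilon) - \Delta) = \varepsilon k + 2\Delta$ more times before triggering $V_{\text{excess}}$, yielding a matching contribution of at most $(\varepsilon k + 2\Delta)\, W(z_j, z_{j+1}) + o(1)$. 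For $j = m$ there is no heavy outgoing G'-edge and the purely negligible bound is trivially absorbed into $(\varepsilon k + 2\Delta)\, W(z_m, z_{m+1})$ for any ``free'' choice of $z_{m+1}$.

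For the incoming case at $z_j$, the symmetric heaviness argument isolates one possibly non-trivial contribution: the unique heavy incoming edge. For $j \ge 2$ this is the G'-predecessor $(z_{j-1}, z_j)$, whose contribution is at most $\sum_{t:\, \bx(t)_1 = z_j} \by((z_{j-1}, z_j), t) \cdot W(z_{j-1}, z_j) \le \dout(z_j, \tmax) \cdot W(z_{j-1}, z_j) \le (k+\Delta)\, W(z_{j-1}, z_j)$, using only $\by \le 1$. For $j = 1$ when the component is not $C_1$, no heavy incoming edge exists, contributing only $o(1)$. The special status of $C_1$ arises because it contains $e^\dagger = (u^\dagger, v^\dagger)$: at the vertex $z_j$ of $C_1$ that equals $u^\dagger$, the matching-case contribution via $e^\dagger$ cannot be bounded by $(\varepsilon k + 2\Delta)\, W(e^\dagger)$ --- because $e^\dagger$ already has non-negligible mass at $t = 0$ due to $r_0(e^\dagger) = k$, so the $t^\ast$-splitting argument is inapplicable --- but only by the direct estimate $W(e^\dagger)\cdot Q(e^\dagger, \tmax) \le 2(k+\Delta)$; this excess is precisely what the $2(k+\Delta)\,\mathbb{I}(i=1)$ term absorbs.

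To conclude, I would reattribute the incoming contribution at $z_{j+1}$ to the G'-edge $(z_j, z_{j+1})$. Each G'-edge $(z_i, z_{i+1})$ with $i < m$ then accumulates a matching coefficient $(\varepsilon k + 2\Delta)$ from $z_i$ and an incoming coefficient $(k+\Delta)$ from $z_{i+1}$, combining to $(k(1+\varepsilon) + 3\Delta)\, W(z_i, z_{i+1})$; the free edge $(z_m, z_{m+1})$ inherits only the smaller matching coefficient, trivially bounded by $(k(1+\varepsilon)+3\Delta)\, W(z_m,z_{m+1})$. Summing across $i \in [m]$ and appending the $e^\dagger$ term yields the claim. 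The main obstacle I anticipate is the uniform-in-time mass bound across the $O(n^2)$ non-heavy actions: establishing that their collective contribution is $o(1)$ hinges on the joint control of the cap $Q \le k+\Delta$, the $\varepsilon$-decay of $\rl(e^\dagger, t)$, and the threshold $\Delta \ge \frac{1}{\eta}\log(2n^2T)$, and must be carefully verified for both outgoing and incoming non-heavy edges without ever invoking heaviness of the edge in question.
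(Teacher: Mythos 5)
Your proposal follows essentially the same road map as the paper's proof — decompose the reward at each vertex into the ``matching'' contribution (learner plays the same outgoing edge) and the ``incoming'' contribution (learner plays an edge pointing into the vertex), control the matching term with the threshold argument (only $\approx \varepsilon k + 2\Delta$ plays can occur after the heavy outgoing edge crosses the $k(1-\varepsilon)-\Delta$ threshold), control the incoming term by the out-degree cap $k+\Delta$ together with $\by \le 1$, and then reattribute the matching coefficient at $z_i$ and the incoming coefficient at $z_{i+1}$ to the $G'$-edge $(z_i,z_{i+1})$ to obtain $\varepsilon k + 2\Delta + k + \Delta = k(1+\varepsilon) + 3\Delta$. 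That accounting is correct and matches the paper's.

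The gap is in your handling of the $2(k+\Delta)\mathbb{I}(i=1)$ term. You attribute it to the matching contribution at $u^\dagger$ (optimizer plays $e^\dagger$, learner plays $e^\dagger$), which implicitly assumes $u^\dagger \in C_1$, i.e.\ that $e^\dagger$ is an edge of $G'$ — but $G'$ edges are defined by actual play counts, and the optimizer need not ever play $e^\dagger$ enough times (or at all). The reason $C_1$ is distinguished is different: by \Cref{def:heavy}, the vertex $v^\dagger$ is placed in $\Sh$ by fiat, so its outgoing edges are always counted in $\rh$, and the learner's action $e^\dagger$ carries high mass at time $0$ (due to $r_0(e^\dagger)=k$) even when $v^\dagger$ has \emph{no} $G'$-predecessor. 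Concretely: if the optimizer plays edges $(v^\dagger,\cdot)$ while the learner plays $e^\dagger$, it can harvest up to $\dout(v^\dagger)\cdot W(e^\dagger) \le 2(k+\Delta)$ reward, and in the case where $v^\dagger = z_1$ of $C_1$ (or its $G'$-predecessor is some $(u,v^\dagger)$ with $u\neq u^\dagger$) this incoming-at-$v^\dagger$ contribution is not attributed to any $G'$-edge by your scheme. This is exactly what the paper's term $(iii)$ covers, while your special term covers a different (and not always present) source. You need the special $2(k+\Delta)$ to absorb the incoming contribution at $v^\dagger$ from the high-initial-reward action $e^\dagger$, not just the matching contribution at $u^\dagger$; the latter arises only when $e^\dagger$ is actually a $G'$-edge, in which case you can check it too is absorbed, but the former is the case where the term is strictly necessary.

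One smaller loose end: your per-vertex bounds each carry an additive $o(1)$ from the collective mass of the $O(n^2)$ non-heavy actions, whereas the lemma statement has no additive slack. This is harmless here — the inequality $2(k+\Delta) \le (k(1+\varepsilon)+3\Delta)\big(W(z_1,z_2)+W(z_m,z_{m+1})\big)$ already leaves a gap of order $\varepsilon k$ to absorb it — but it is worth flagging that the statement as written is exact, so you should fold the $o(1)$ into that slack explicitly.
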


\begin{proof}
    Since $C_i$ is assumed to be a path, the vertex $z_i$ has no incident edge in $G'$ . We will prove that the following quantity is an upper bound on the total reward the optimizer collects via playing edges emanating from vertices in $C_i$,
    \begin{align} \label{eq:path-ub-1}
         \underbrace{\sum_{i = 2}^{m-1} \Big\{ \alpha_i W( z_i, z_{i+1}) + 2 (k + \Delta - \alpha_i) \Big\}}_{(i)} + \underbrace{2  \sum_{v \in V} W((z_m, v),\tmax)}_{(ii)} + \underbrace{2 (k+\Delta) \mathbb{I} (i=1)}_{(iii)}
    \end{align}
    where $\alpha_i \ge k (1 - \varepsilon) - \Delta$ for all $i$. We will later simplify this bound to prove the lemma. The first term $(i)$ can be attributed to the fact that for every vertex in the path $z_{i+1}$ for $i \ge 1$, there is a high number of edges from the previous vertex in the path, $z_i$. The number of such edges is captured by $\alpha_i \ge k (1 - \varepsilon) - \Delta$. This vertex $z_{i+1}$ may have $(k + \Delta) - \alpha_i$ remaining edges emanating out from it, and these edges may collect the maximum possible reward of $2$ (since each edge weight is in $\{ 1,2 \}$). Thus the total reward for edges coming out of $z_{i+1}$ is $\alpha_i W(z_i, z_{i+1}) + 2(k+\Delta - \alpha_i)$ The term $(ii)$ accounts for the reward the last vertex in the path, $z_m$, can collect from edges emanating out from it. Since $z_m \in \Sh(\Delta, \tmax)$ every edge emanating from it can collect a reward of at most $2$. Term $(iii)$ accounts for the fact that specifically for $C_1$, the vertex $v^\dagger$ is always in $\Sh$ by definition. The reason for this choice is that even if there are no in-edges to $v^\dagger$, MWU associates high initial reward to the edge $(u^\dagger,v^\dagger)$ (so this vertex ``behaves'' as though the optimizer had chosen many in-edges incident on it). Simplifying eq.~\eqref{eq:path-ub-1} further, assuming the in-degree bound of $k + \Delta$ on vertices, we arrive at the following upper bound on the total reward collected by the optimizer from playing edges emanating from $\{ z_1,\cdots,z_m \}$,
    \begin{align}
        &\sum_{i=2}^{m-1} \underbrace{\Big\{ (k ( 1 - \varepsilon) - \Delta) W (z_i, z_{i+1}) + 2 (k \varepsilon + 2 \Delta) \Big\}}_{(i)} + \underbrace{2 (k + \Delta)}_{(ii)} + \underbrace{2(k+\Delta) \mathbb{I} (i=1)}_{(iii)} \label{eq:path-ub-2}\\
        &\le \sum_{i=2}^{m-1} (k ( 1 + \varepsilon ) + 3\Delta ) W (z_i, z_{i+1}) + 2 (k + \Delta) + 2(k +\Delta) \mathbb{I} (i=1) \\
        &\le \sum_{i=1}^{m} (k (1+\varepsilon) + 3\Delta) W (z_i, z_{i+1}) + 2(k+\Delta) \mathbb{I} (i=1).
    \end{align}
    Where in the last inequality we upper bound the middle term $2 (k + \Delta) \le (k (1 + \varepsilon) + 3\Delta) W(z_1,z_2) + (k (1 + \varepsilon) + 3\Delta) W(z_m,z_{m+1})$ for any arbitrary vertex $z_{m+1}$, since $W(\cdot,\cdot)$ is pointwise in $\{1,2\}$. This results in an upper bound on the total reward collected from all edges the optimizer plays which emanate from the vertices $z_1,z_2,\cdots,z_m$.
\end{proof}

We continue by bounding the rewards of actions/edges that emanate from vertices of a specific cycle of the graph $G'$; we show that the total reward collected by the optimizer for edges emerging from vertices in $\mathcal{C}_i$ is approximately upper bounded by the weight of edges in some $|\mathcal{C}_i|-1$ length path within $\mathcal{C}_i$ up to a factor of $k$. This follows using similar techniques used in the proof of the above lemma. We first apply \Cref{lemma:path} and bound the contribution of almost all edges emanating from all vertices of the cycle except one, and then prove that the edge emanating from the last vertex of the cycle cannot contribute too much.
\begin{lemma} \label{lemma:cycle}
Consider any \emph{cycle} $C_i$ in $G'$ composed of vertices $z_1,z_2,\cdots,z_m$. The total reward collected by the optimizer for edges which stem from vertices on $C_i$ is upper bounded by,
\begin{align}
    (k(1 + 3\varepsilon) + 7\Delta) \sum_{i \in [m] \setminus \{ i^\star \}} W (z_i, z_{i+1}) + 2(k + \Delta) \mathbb{I} (i=1) + \frac{1}{n}
\end{align}
for some $i^\star \in [m]$, where $z_{m+1} \triangleq z_1$.
\end{lemma}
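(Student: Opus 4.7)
The strategy is to reduce the cycle case to an application of \Cref{lemma:path} by ``cutting'' the cycle at a carefully chosen edge. Concretely, for an index $i^\star \in [m]$ to be chosen below, consider the path $P_{i^\star} = (z_{i^\star+1}, z_{i^\star+2}, \ldots, z_{i^\star})$ (indices mod $m$) obtained by removing the cycle edge $(z_{i^\star}, z_{i^\star+1})$ from $G'$. After the cut, the head $z_{i^\star+1}$ has no incoming edge in the subgraph and each subsequent vertex retains its cycle in-edge, so $P_{i^\star}$ is a valid path in $G'$ on which \Cref{lemma:path} directly applies. This yields an upper bound on the reward collected from edges emanating from all $m$ cycle vertices in terms of the weights $\{W(z_i,z_{i+1}) : i \ne i^\star\}$ of the $m-1$ surviving cycle edges plus a ``free-edge'' contribution $(k(1+\varepsilon)+3\Delta)W(z_{i^\star}, z_{\text{free}}) \le 2(k(1+\varepsilon)+3\Delta)$ at the tail $z_{i^\star}$, along with the indicator term $2(k+\Delta)\mathbb{I}(i=1)$ inherited from the path lemma.

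The remaining task is to convert this into the claimed cycle-lemma form by absorbing two extra contributions into the enlarged coefficient $(k(1+3\varepsilon)+7\Delta)$ and the additive $\tfrac{1}{n}$ slack: (a) the free-edge term at the tail described above, and (b) an extra reward contribution at the head $z_{i^\star+1}$, which is heavy in the original cycle thanks to its cycle in-edge $(z_{i^\star}, z_{i^\star+1})$ and therefore still collects reward not captured by the path-lemma analysis of the cut graph. Using the same interior-vertex-style argument as in the proof of \Cref{lemma:path}, I would bound this extra head contribution by $\alpha_{i^\star} W(z_{i^\star}, z_{i^\star+1}) + 2(k+\Delta - \alpha_{i^\star}) \le (k(1+\varepsilon) + 3\Delta) W(z_{i^\star}, z_{i^\star+1})$, where $\alpha_{i^\star}$ denotes the number of times the optimizer played cycle edge $i^\star$. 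Crucially, $W(z_{i^\star}, z_{i^\star+1})$ is precisely the edge weight excluded from the cycle-lemma sum, so this contribution must be absorbed into the coefficient slack or the $\tfrac{1}{n}$ correction.

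The main obstacle is choosing $i^\star$ so that the combined contributions (a) and (b) fit within the available budget. The extra coefficient budget in the cycle-lemma bound is $(2k\varepsilon + 4\Delta)\sum_{i \ne i^\star} W(z_i, z_{i+1}) \ge (2k\varepsilon + 4\Delta)(m-1)$, which suffices when the cycle is moderately long. To handle short cycles, the plan is to exploit the fact that the total ``excess'' out-degree across cycle vertices is bounded: since $\sum_j \alpha_j \ge m(k(1-\varepsilon) - \Delta)$ while the total out-degree of cycle vertices is at most $m(k+\Delta)$ (prior to $\tmax$), the excess out-plays total at most $m(k\varepsilon + 2\Delta)$. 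Averaging over the $m \le n$ choices of $i^\star$ then yields a choice making contribution (b) small -- specifically, of size $O(k\varepsilon + \Delta)$ on average -- which together with the per-edge coefficient bump absorbs both (a) and (b) up to the additive $\tfrac{1}{n}$ slack. Carefully combining these contributions with the interior-vertex bounds and the $\mathbb{I}(i=1)$ term for the component containing $e^\dagger$ completes the argument; the most delicate step is verifying that the pigeonhole choice of $i^\star$ is compatible with the specific edges appearing in $\sum_{i\ne i^\star}W(z_i,z_{i+1})$.
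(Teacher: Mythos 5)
Your overall strategy---cut the cycle at an edge $(z_{i^\star}, z_{i^\star+1})$, apply \Cref{lemma:path} to the resulting path, and absorb the leftover terms---is a natural one, and you correctly identify the two extras you must absorb: (a) the free-edge term at the tail $z_{i^\star}$ and (b) the extra reward from edges out of the head $z_{i^\star+1}$, which is still heavy in reality even though it has no in-edge in the cut subgraph. However, there is a genuine gap in the absorption step, and the pigeonhole idea you propose does not close it.

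The problem is the magnitude of contribution (b). You bound it by $\alpha_{i^\star} W(z_{i^\star}, z_{i^\star+1}) + 2(k+\Delta-\alpha_{i^\star})$, and since $z_{i^\star+1}$ is heavy via the cycle edge, $\alpha_{i^\star} \ge k(1-\varepsilon)-\Delta$ \emph{for every possible choice of $i^\star$}. So (b) is $\Omega(k)$ no matter which cycle edge you cut; averaging over $i^\star$ cannot make it small. Your averaging argument controls a different quantity---it shows $\sum_j (k+\Delta-\alpha_j) \le m(k\varepsilon + 2\Delta)$, which bounds the \emph{slack} terms, not the dominant $\alpha_{i^\star}W(\cdot,\cdot)$ term. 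Together with the free-edge term in (a), which is also $\Omega(k)$, you need to absorb $\Omega(k)$ into the coefficient slack $(2k\varepsilon + 4\Delta)\sum_{i\ne i^\star}W(z_i,z_{i+1}) \ge (2k\varepsilon+4\Delta)(m-1)$, which requires $m \gtrsim 1/\varepsilon$. Short cycles (say, $2$-cycles) break the argument, and the $\tfrac{1}{n}$ additive slack is far too small to help.

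What the paper does instead is \emph{temporal} rather than combinatorial: it chooses $z_{i^\star}$ to be the \emph{last} cycle vertex to enter $\Sh(\Delta,\cdot)$, and then analyzes the reward from edges emanating from $z_{i^\star}$ in two time phases. Before $z_{i^\star}$ becomes heavy, the learner places mass at most $\approx \exp(-\eta\Delta)$ on both the matching edge $(z_{i^\star},z_{i^\star+1})$ and on any in-edge $(\cdot,z_{i^\star})$ (since $z_{i^\star}$ is last, none of its in-edges has yet been played $k(1-\varepsilon)-\Delta$ times), so that phase contributes only $O(1/n)$. After $z_{i^\star}$ becomes heavy, the out-degree cap $k+\Delta$ leaves only $\approx \varepsilon k + 2\Delta$ further plays of its out-edges, contributing $O(k\varepsilon+\Delta)$. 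This $O(k\varepsilon+\Delta)+O(1/n)$ total \emph{does} fit inside the slack $(2k\varepsilon+4\Delta)(m-1) + \tfrac{1}{n}$ even when $m=2$. The time-ordering of when vertices become heavy is the missing idea; without it, no static choice of cut edge makes the excluded vertex's contribution small enough.
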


\noindent We now combine the above results to get an upper bound on the total reward that heavy edges can contribute. This will involve combining the vertices in paths and cycles $C_i$ into a singular large Hamiltonian cycle, in a way such that the weight of this cycle continues to upper bound the total optimizer reward up to a multiplicative factor of $\approx k$. This results in an upper bound on the reward contribution of edges emanating from the heavy vertices. By \Cref{corr:excess} and \Cref{lemma:heavy}, this is the majority of the reward collected by this optimizer, giving the following upper bound on the optimizer reward.

\begin{lemma} \label{lemma:combine}
\label{lemma:alg-bound-MWU}
Assume $\Delta \ge \frac{1}{\eta} \log (2n^2 T)$. Then, the reward collected by the optimizer at the end of the repeated interaction is at most,
\begin{align}
    (k (1 + 5 \varepsilon) + 9 \Delta) \sum_{i = 1}^n W(z_i,z_{i+1}) + (2k + 2 \Delta + 4)
\end{align}
where $z_1 \to z_2 \to \cdots\to z_n \to z_1$ is some Hamiltonian cycle on the vertex set $V$, and $z_{n+1} = z_1$. This upper bound applies for any value of horizon $T$, as long as $\Delta$ is suitably large.
\end{lemma}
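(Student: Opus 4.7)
The plan is to combine the per-component bounds from Lemmas~\ref{lemma:path} and~\ref{lemma:cycle} into a single global bound expressed in terms of a Hamiltonian cycle on $V$. First, I would start from the observation that by Corollary~\ref{corr:excess}, the optimizer's cumulative reward is at most $R(\tmax) + 1$, and by Lemma~\ref{lemma:heavy}, $R(\tmax) \le \rh(\Delta, \tmax) + 2n(k\varepsilon + 2\Delta) + 3$. Thus it suffices to upper bound $\rh(\Delta, \tmax)$ by roughly $(k(1+5\varepsilon) + 9\Delta) \sum_{i=1}^n W(z_i, z_{i+1}) + 2(k+\Delta)$ modulo a constant, with the $2n(k\varepsilon + 2\Delta)$ term to be absorbed into the multiplicative coefficient via $\sum_{i=1}^n W(z_i, z_{i+1}) \ge n$.

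Next, I would decompose $G'$ into its vertex-disjoint components $C_1, C_2, \ldots, C_s$, each of which is a path or a cycle (as noted just before Definition~\ref{def:auxiliary}, since every vertex of $G'$ has in-degree and out-degree at most one). WLOG $C_1$ contains $v^\dagger$. Applying Lemma~\ref{lemma:path} to each path component and Lemma~\ref{lemma:cycle} to each cycle component, $\rh(\Delta, \tmax)$ is bounded by the sum over components of their respective contributions, with only $C_1$ contributing the $2(k+\Delta)$ indicator term.

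The key combinatorial step is to stitch these per-component bounds into a single Hamiltonian cycle bound. I would fix an ordering $C_1 \to C_2 \to \cdots \to C_s \to C_1$, and within each component, select a Hamiltonian sub-path on its vertices: for a path component, the natural traversal; for a cycle component, the sub-path obtained by deleting the $i^\star$ edge from Lemma~\ref{lemma:cycle}. Bridging edges are inserted between the end of one sub-path and the start of the next to produce a Hamiltonian cycle on $V$ of exactly $n$ edges. For path components, the free edge in Lemma~\ref{lemma:path}'s bound naturally plays the role of the bridging edge, so its contribution is directly bounded by $(k(1+\varepsilon) + 3\Delta)$ times the sum of its sub-path plus bridge weights. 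For cycle components, Lemma~\ref{lemma:cycle}'s bound only covers the $m-1$ sub-path edges; the bridging edge must be added to the Hamiltonian total separately, which is permitted because each edge weight is at least $1$ so the bridge adds at least $1$ to $\sum_{i=1}^n W(z_i, z_{i+1})$ and the excess is absorbed by the coefficient slack.

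The main obstacle and where the accounting is delicate is the bookkeeping of coefficients to arrive at the claimed $(k(1+5\varepsilon) + 9\Delta) \sum_{i=1}^n W(z_i, z_{i+1}) + (2k + 2\Delta + 4)$. The bump from the larger cycle coefficient $k(1+3\varepsilon) + 7\Delta$ to $k(1+5\varepsilon) + 9\Delta$ provides a slack of $2k\varepsilon + 2\Delta$ per unit of Hamiltonian weight. Using $\sum_{i=1}^n W(z_i, z_{i+1}) \ge n$, this slack absorbs the dominant $2n(k\varepsilon + 2\Delta)$ term from Lemma~\ref{lemma:heavy} as well as the weight of the additional bridging edges inserted for cycle components. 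The additive $2(k+\Delta)$ arises from the $\mathbb{I}(i=1)$ term on $C_1$, while the constants $1$ from Corollary~\ref{corr:excess}, $3$ from Lemma~\ref{lemma:heavy}, and the $s \cdot (1/n) \le 1$ from the per-cycle $1/n$ slack in Lemma~\ref{lemma:cycle} collect into the final additive constant bounded by $4$.
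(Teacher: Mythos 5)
Your proposal matches the paper's proof essentially step for step: invoke \Cref{corr:excess} and \Cref{lemma:heavy} to reduce to bounding $\rh$, decompose $G'$ into paths and cycles, apply \Cref{lemma:path,lemma:cycle}, stitch the components into a single Hamiltonian cycle using free/bridge edges, and absorb the $2n(k\varepsilon+2\Delta)$ excess via $\sum_i W(z_i,z_{i+1}) \ge n$. One small point worth being careful about: the slack from lifting $k(1+3\varepsilon)+7\Delta$ to $k(1+5\varepsilon)+9\Delta$ is $2k\varepsilon+2\Delta$ per unit weight, whereas $2n(k\varepsilon+2\Delta)$ requires $2k\varepsilon+4\Delta$ per unit weight to absorb, so the accounting as stated actually yields a coefficient closer to $k(1+5\varepsilon)+11\Delta$ and additive constant $5$ rather than the lemma's $9\Delta$ and $4$; the paper's own write-up has the same slack and the discrepancy is immaterial downstream since $\Delta \ll k$, but it is worth flagging that the exact constants in the lemma statement are slightly tighter than what either your argument or the paper's literally delivers.
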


\noindent The proof of this result is provided in \Cref{app:combine}. The last step is to show that there exists an optimizer strategy which collects reward approximately lower bounded by the weight of the maximum weight Hamiltonian cycle (i.e., the solution to the maxTSP instance) up to a factor of $k$. This result is established below and proved in \Cref{app:opt-lb}.

\begin{lemma}[Bounding the reward of the best optimizer] \label{lemma:opt-lb} Consider any maximum weight Hamiltonian cycle in $G$, defined by the sequence of vertices $z_1^\star \to z_2^\star \to \cdots \to z_n^\star \to z_1^\star$. As long as $T \ge nk$, there exists an optimizer strategy such that the total reward collected is at least,
\begin{align}
    (1-\varepsilon)^2 k \times \left( 1 - \frac{2}{n} \right) \sum_{i=1}^n W ( z_i^\star, z_{i+1}^\star ),
\end{align}
where $z_{n+1}^\star = z_1^\star$, and assuming that $\eta \varepsilon k \ge \frac{1}{\varepsilon-\varepsilon^2} \log (n^3)$.
\end{lemma}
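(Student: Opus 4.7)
My plan is to exhibit an explicit strategy that plays along the optimal Hamiltonian cycle and achieves the claimed lower bound. After rotating the cycle, assume without loss of generality that $z_1^\star = v^\dagger$ (the in-vertex of $e^\dagger$). The strategy is: for each phase $i = 1, \ldots, n$, the optimizer plays the pure action $(z_i^\star, z_{i+1}^\star)$ for $k$ consecutive rounds, occupying exactly $nk \le T$ rounds in total; any remaining rounds may be filled with a fixed action, and since $\bm{A} \in [-1,1]^{|\mathcal{A}| \times |\mathcal{B}|}$, their contribution is controlled.

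First I would trace the learner's cumulative rewards across phases. Each cycle edge $(z_i^\star, z_{i+1}^\star)$ gains $+1$ per round during its own phase (reaching $k$ at the end of phase $i$) and then loses $\varepsilon$ per round during phase $i+1$ (since the optimizer's in-vertex is $z_{i+1}^\star$, triggering the $-\varepsilon$ rule in \eqref{eq:20c}), stabilizing at $k(1-\varepsilon)$ thereafter. The edge $e^\dagger = (u^\dagger, z_1^\star)$ is decreased only during phase $1$ (the unique phase whose optimizer-in-vertex equals $v^\dagger = z_1^\star$), ending at $k(1-\varepsilon)$. All other edges remain near $0$ in cumulative reward, and one verifies separately that the $\Vin$ and $\Vout$ actions stay sufficiently below the dominant edge levels throughout the play. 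This last point is the aspect that demands the most care and is where the quantitative hypothesis $\eta\varepsilon k \ge \log(n^3)/(\varepsilon-\varepsilon^2)$ is used most delicately.

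The core of the proof is to show that in each phase $i \ge 2$, MWU's mass concentrates on the ``previous'' edge $(z_{i-1}^\star, z_i^\star)$ and the ``current'' edge $(z_i^\star, z_{i+1}^\star)$. The hypothesis on $\eta\varepsilon k$ gives an exponential separation of factor at least $n^{\Omega(1)}$ between these two dominant edges and every other action, outside a transition window of width $O(\log n / \eta)$ centered at the crossover point roughly $k/(1+\varepsilon)$ rounds into the phase where the two cumulative rewards are equal. By the $\bm{A}$-matrix rules, the optimizer playing $(z_i^\star, z_{i+1}^\star)$ collects reward $W(z_{i-1}^\star, z_i^\star)$ when the learner plays the prev edge (via the $u=x$ rule) and $W(z_i^\star, z_{i+1}^\star)$ when the learner plays the current edge (via the same-edge rule). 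Outside the transition window, one of these two learner actions has mass $1-o(1)$, and inside the window the instantaneous reward is still at least $1$ since all weights are in $\{1,2\}$. Integrating, the contribution of phase $i \ge 2$ is at least $\frac{k}{1+\varepsilon} W(z_{i-1}^\star, z_i^\star) + \frac{k\varepsilon}{1+\varepsilon} W(z_i^\star, z_{i+1}^\star) - O(\log n / \eta)$; for phase $1$, $e^\dagger$ substitutes for the prev edge in this formula.

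Finally, summing the per-phase contributions telescopes: the $W(z_{i-1}^\star, z_i^\star)/(1+\varepsilon)$ term from phase $i$ combines with the $\varepsilon W(z_i^\star, z_{i+1}^\star)/(1+\varepsilon)$ term from phase $i+1$ to recover the full weight $W(z_i^\star, z_{i+1}^\star)$, and the total evaluates to $k \sum_{i=1}^n W(z_i^\star, z_{i+1}^\star) + \frac{k}{1+\varepsilon}(W_{e^\dagger} - W(z_n^\star, z_1^\star))$. Since all edge weights lie in $\{1,2\}$, this correction is at most $k$ in absolute value; the transition-window losses contribute at most $O(n \log n / \eta) = O(n k \varepsilon)$ after invoking the assumption on $\eta \varepsilon k$. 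A short arithmetic check then shows that $k S - O(k) - O(n k \varepsilon) \ge (1-\varepsilon)^2 k (1-2/n) S$ whenever $S \ge n$, which always holds since every edge weight is at least $1$. The main technical obstacle, as flagged above, is the uniform control over MWU mass leakage onto the many auxiliary non-contributing actions---particularly the $\Vin/\Vout$ actions whose cumulative rewards can approach the edge levels---which requires careful exponential bookkeeping calibrated precisely to the hypothesis $\eta\varepsilon k \ge \log(n^3)/(\varepsilon-\varepsilon^2)$.
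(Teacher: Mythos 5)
Your approach is essentially the paper's: rotate so the cycle starts at $v^\dagger$, play each cycle edge for roughly $k$ rounds, and argue that within each phase MWU concentrates its mass on the ``previous'' and ``current'' cycle edges, so that the per-phase reward is $\approx k \cdot W$ for the relevant cycle weight. Your telescoping of the crossover split ($\frac{1}{1+\varepsilon}W(\text{prev}) + \frac{\varepsilon}{1+\varepsilon}W(\text{cur})$ summing across consecutive phases) is in fact cleaner than the paper's per-phase claim, which states the phase reward as $\Delta_1(1-2/n)W(z_{i+1}^\star,z_{i+2}^\star)$ directly; since the prev and cur edges can have different weights, that claim as literally written is not forced by the concentration alone, though the sum works out. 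So the telescoping is a genuinely more careful reading.

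The one place where your sketch has a real gap is the $\Vin/\Vout$ concentration that you flag but do not close. If you play each phase for exactly $k$ rounds, then $p$ rounds in, the auxiliary actions $(z_{i+1}^\star)_{\text{out}}$ and $(z_{i+2}^\star)_{\text{in}}$ sit at $-k+2p$, while the dominant cycle edge sits at $\max(p,\,k-\varepsilon p)$, and these meet exactly at $p=k$. So the exponential gap you need vanishes at the end of each phase, and there is a second ``danger window'' of width $\approx\eta^{-1}\log n$ at the phase boundary (on top of the crossover window). The paper sidesteps this entirely by playing each edge only $k(1-\varepsilon+\varepsilon^2)$ times, so the auxiliary actions top out at $k(1-2\varepsilon+2\varepsilon^2)$ while the dominant edge reaches $k(1-\varepsilon+\varepsilon^2)$, giving a uniform $k(\varepsilon-\varepsilon^2)$ gap that the hypothesis $\eta\varepsilon k \ge \log(n^3)/(\varepsilon-\varepsilon^2)$ exactly makes exponentially separating. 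Your version can be salvaged --- the end-of-phase window has the same $O(\eta^{-1}\log n)$ width as the crossover window and contributes the same $O(\varepsilon^2 kS)$ total slack --- but this must be spelled out; as written you only account for the crossover window. I recommend either adopting the paper's shortened phase length or explicitly adding the end-of-phase leakage to your error budget.
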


\noindent We now have all we need to complete the proof.

\paragraph{Proof sketch of \Cref{theorem:with_initialization}.}
Suppose we could approximate the reward of the best optimizer to a factor $\rho$ sufficiently close to $1$; by \Cref{lemma:alg-bound-MWU} we know that the reward collected by the optimizer is upper bounded by $k(1+\epsilon_1)$ times the weight of some Hamiltonian cycle in the maxTSP instance, where $\epsilon_1$ can be made arbitrarily close to $0$ as $T$ grows. On the other hand, we know by \Cref{lemma:opt-lb} that the best the optimizer can do is at least $k(1-\epsilon_2)$ times the length of the heaviest Hamiltonian cycle in the maxTSP instance, where $\epsilon_2$ can also be made arbitrarily close to $0$ as $T$ grows. By \Cref{theorem:maxTSP-hardness}, we know that unless $\Ptime = \NPtime$, it is impossible to approximate the weight of the optimal Hamiltonian cycle to within a $(1+c)$ factor for some $c > 0$. This will imply a lower bound on $\rho$ for polynomial time optimizers. Since the weight of the heaviest Hamiltonian cycle in the $(1,2)$-maxTSP instance is at least $n$, the best optimizer collects reward $\approx kn = T$. The multiplicative constant hardness guarantee thereby translates to an additive $\Omega(T)$ hardness guarantee for computing the reward of the best optimizer.

\paragraph{Full proof of \Cref{theorem:with_initialization}.}
Recall the parameterization $\eta = 1/T^{1-\alpha}$. With this, we choose,
\begin{align*}
    n &= T^{\min \{\alpha,\beta\}/3},\\
    k &= T^{1-\min \{\alpha,\beta\}/3},\\
    \Delta &= T^{1-\alpha}\log(2n^2T) 
\end{align*}
Notice that $nk = T$ and $k \gg \Delta$. We also have $\eta \varepsilon k = T^{\alpha-1} \cdot \varepsilon \cdot T^{1 - \min\{\alpha, \beta\}/3} \geq \varepsilon T^{2\alpha/3}$ and since $\varepsilon$ is a constant ,so we get $\eta \varepsilon k \geq \frac{1}{\varepsilon - \varepsilon^2}\log(n^3)$ for large enough $T$. Thus, the assumptions of \Cref{lemma:opt-lb,lemma:combine} are satisfied. Suppose the maximum reward that can be obtained by the optimizer is $R_{\text{opt}}$, and is upper bounded by \Cref{lemma:alg-bound-MWU},
\begin{align*}
    R_{\text{opt}} &\leq (k (1 + 5 \varepsilon) + 9 \Delta) \sum_{i=1}^n W(z_i,z_{i+1}) + (2k + 2 \Delta + 4) \\
    &\leq (k (1 + 5 \varepsilon + 2/n) + 11 \Delta + 4) \sum_{i=1}^n W(z_i,z_{i+1}).
\end{align*}
We know that the best optimizer achieves reward at least $R_{\text{lb}}$, which is given by \Cref{lemma:opt-lb},
\begin{equation*}
        R_{\text{lb}} \geq k(1-\varepsilon)^2 \left(1 - \frac{2}{n} \right) \sum_{i=1}^n W (z_i^\star,z_{i+1}^\star)
\end{equation*}
Suppose $R_\star$ is the optimal reward the optimizer can get in this instance. We will prove that if the optimizer can guarantee that $\frac{R_{\text{opt}}}{R_\star} \geq \rho$ for some $\rho > \frac{1069}{1070}$, then it can also guarantee a $\frac{1068}{1069}$ approximation for the maxTSP problem. If the optimizer runs in polynomial time, this contradicts \Cref{theorem:maxTSP-hardness}, which shows that unless $\Ptime = \NPtime$, there is no polynomial time algorithm for $(1,2)$-maxTSP achieving an approximation factor of $1067/1068 + \epsilon$ for any $\epsilon > 0$. Taking the ratio,
\begin{align}
    \rho \le \frac{R_{\text{opt}}}{R_\star} \le \frac{R_{\text{opt}}}{R_{\text{lb}}} &\le \frac{(k (1 + 5 \varepsilon + 2/n) + 11 \Delta + 4) \sum_{i=1}^n W(z_i,z_{i+1})}{k(1-\varepsilon)^2 \left(1 - \frac{2}{n} \right) \sum_{i=1}^n W (z_i^\star,z_{i+1}^\star)}  \\
    &\le \left(1 + C \left( \varepsilon + \frac{\Delta}{k} + \frac{1}{n} \right) \right)\frac{\sum_{i=1}^n W(z_i,z_{i+1})}{\sum_{i=1}^n W (z_i^\star,z_{i+1}^\star)}
\end{align}
Now, assuming that $\rho \ge 1069/1070$, adjusting $\varepsilon$ to be a sufficiently small constant, noting that $\Delta /k = \widetilde{\mathcal{O}} (T^{-2\alpha/3})$ and $1/n = T^{-\Omega (\beta)}$, for sufficiently large $T$ we have that,
\begin{align}
    \frac{\sum_{i=1}^n W(z_i,z_{i+1})}{\sum_{i=1}^n W (z_i^\star,z_{i+1}^\star)} \ge \frac{1068}{1069}
\end{align}
Which contradicts \Cref{theorem:maxTSP-hardness} if the optimizer runs in polynomial time. As a consequence, unless $\Ptime = \NPtime$, there is no polynomial time optimizer satisfying,
\begin{align}
    \frac{R_{\text{opt}}}{R_\star} \ge \rho \ge \frac{1069}{1070}.
\end{align}
Noting that $R_\star \ge R_{\text{lb}} \ge k(1-\varepsilon)^2 (1 - 2/n) \sum_{i=1}^n W(z_i^\star, z_{i+1}^\star) \ge (1-c_0 \varepsilon) T$ for some absolute constant $c_0 > 0$, the above is equivalent to the statement,
\begin{align}
    R_{\text{opt}} - R_\star \ge c_1 T
\end{align}
for some sufficiently small absolute constant $c_1 > 0$.

\subsection{A comment on the initialization}

Note that the proof in the previous section considers the specific initialization of the form \cref{init:1,init:2,init:3,init:4}. Consider a sequence of (non-distinct) edges $\Ei$ on the vertex set $V$ and let $\dinitin$ and $\dinitout$ denote the induced in-degrees and out-degrees. Consider a new initialization,
\begin{align}
    \text{For } \vin \in \Vin,\ r_0 (\vin) &= -k + 2 \dinitin (\vin) \label{init:1a}\\
    \text{For } \vout \in \Vout,\ r_0 (\vout) &= -k + 2 \dinitout (\vout) \label{init:2a}\\
    \text{For } e \in E \setminus \{ e^\dagger \},\ r_0 ( e ) &= \Ei ( e ) \label{init:3a}\\
    r_0 ( e^\dagger ) &= k \label{init:4a}
\end{align}
When $\Ei = \emptyset$, we go back to the one in \cref{init:1,init:2,init:3,init:4}. With such an initialization, the upper bound on the total reward of the optimizer strategy in \Cref{lemma:combine} still remains true, where $T$ is replaced by $T + |\Ei|$ everywhere. The idea is simple: the initialization in \cref{init:1a,init:2a,init:3a,init:4a} may be realized by starting out with the one in \cref{init:1,init:2,init:3,init:4}, and allowing the optimizer to first playing the edges in $\Ei$ in sequence (where no reward is collected), and then playing its original strategy. Thus, we may upper bound the reward collected by the optimizer by considering what it would have collected over the total horizon (including that collected when edges in $\Ei$ are played). This results in the following corollary of \Cref{lemma:alg-bound-MWU}.

\begin{corollary} \label{corr:edgeinit}
Suppose $\Delta \ge \frac{1}{\eta} \log (2n^2 (T+|\Ei|))$ The total reward collected by an optimizer strategy when MWU's initialization is of the form \cref{init:1a,init:2a,init:3a,init:4a} is upper bounded by,
\begin{align}
    (k (1 + 5 \varepsilon) + 9 \Delta) \sum_{i = 1}^n W(z_i,z_{i+1}) + (2k + 2 \Delta + 4)
\end{align}
for some Hamiltonian cycle $z_1 \to z_2 \to \cdots \to z_n \to z_1$ on $G$.
\end{corollary}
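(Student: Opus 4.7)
The plan is to prove \Cref{corr:edgeinit} by a simulation reduction to \Cref{lemma:combine}. For any optimizer strategy $\{ \bx(t) \}_{t=1}^T$ played against an MWU learner whose initialization is of the form \cref{init:1a,init:2a,init:3a,init:4a}, I will construct an equivalent strategy in a virtual game that uses the original initialization \cref{init:1,init:2,init:3,init:4} on the extended horizon $T + |\Ei|$: the optimizer first plays the edges of $\Ei$ in sequence for $|\Ei|$ rounds, and then plays $\bx(1),\ldots,\bx(T)$. The key claim is that after this $\Ei$-prefix, the MWU learner's cumulative reward vector in the virtual game coincides exactly with the modified initialization $r_0'$ from \cref{init:1a,init:2a,init:3a,init:4a}, so the learner's distributions and hence the optimizer's rewards in the last $T$ rounds of the virtual game are identical to those in the modified game.

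Verifying this state correspondence is a direct computation from $\bm{B}$. Playing an edge $(u,v)$ adds $+2$ to the cumulative reward of $\vin$ whenever $b = v$ and of $\vout$ whenever $b = u$, so summing over the edges in $\Ei$ reproduces exactly the $+2\dinitin(\vin)$ and $+2\dinitout(\vout)$ offsets in \cref{init:1a,init:2a}. For edge actions $e \in E \setminus \{e^\dagger\}$, accumulation of the terms from \cref{eq:20b,eq:20c} gives precisely the quantity $\Ei(e)$ appearing in \cref{init:3a}, which I read as the net contribution of the $\Ei$-play sequence to $e$'s cumulative reward. Since the entries of $\bm{A}$ are entrywise non-negative (each is either $0$ or an edge weight $W_e \in \{1,2\}$), the reward collected in the last $T$ rounds of the virtual game is upper bounded by the total reward over all $T + |\Ei|$ rounds. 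Applying \Cref{lemma:combine} to this virtual game, whose hypothesis $\Delta \ge \frac{1}{\eta}\log(2n^2(T+|\Ei|))$ matches the assumption of the corollary word-for-word, yields the stated bound, because the bound in \Cref{lemma:combine} does not depend explicitly on the horizon.

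The main subtlety I anticipate is the bookkeeping at $e^\dagger$: by \cref{eq:20c}, an $\Ei$-edge whose source is $v^\dagger$ contributes a $-\varepsilon$ cross term to $e^\dagger$'s cumulative reward, which could break the exact match with \cref{init:4a}, where $r_0(e^\dagger) = k$ is imposed on the nose. I would handle this either by restricting $\Ei$ to edges that do not emanate from $v^\dagger$ (which is the natural regime for the subsequent application to zero initialization) or by absorbing the resulting $\mathcal{O}(\varepsilon |\Ei|)$ perturbation into the additive $9\Delta$ and $2\Delta + 4$ slack already present in \Cref{lemma:combine}; in either case the leading-order structure of the bound is unaffected. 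Beyond this minor issue, the reduction is essentially a ``pre-play'' trick exploiting MWU's memorylessness — the substantive content was already proved in \Cref{lemma:combine}.
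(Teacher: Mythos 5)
Your proposal matches the paper's own argument essentially verbatim: simulate the modified initialization \cref{init:1a,init:2a,init:3a,init:4a} by prefixing an extended horizon $T+|\Ei|$ with the $\Ei$-play sequence under the original initialization \cref{init:1,init:2,init:3,init:4}, use entrywise non-negativity of $\bm{A}$ to bound the last $T$ rounds by the full $T+|\Ei|$ rounds, and invoke \Cref{lemma:combine} on the extended game. The $-\varepsilon$ cross-term at $e^\dagger$ you flag is a genuine (if minor) mismatch between the simulated state and \cref{init:4a} that the paper itself elides, and your proposed remedies are sensible.
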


\section{Lower bound against MWU: removing the initialization}

In this section, we will show how to reduce the case of MWU without initialization to the case with initialization. To do this, we will modify the structure of the game discussed in \Cref{sec:structure} for optimizing against MWU with the initialization of the form \cref{init:1,init:2,init:3,init:4} and add a few additional actions. For $p \in \mathbb{N}$ to be determined later,
\begin{subequations}
\begin{align}
    &\textbf{Optimizer's action space } \mathcal{A} = \Ainit \cup \{ \blacklozenge \} \textbf{ of size } m = |V|^2 - |V| + 1 \\
    &\textbf{Learner's action space } \mathcal{B} = \Binit \cup \Binittilde \textbf{ of size } n = (p+1) (|V|^2 + |V|)
\end{align}
\end{subequations}
The optimizer's action space is augmented by a single special action $\blacklozenge$. On the other hand, the learner's action space is augmented by $\Binittilde$, which contains $p$ copies of each action in $\Binit$. Each of these $p$ copies will behave symmetrically with respect to the game matrices $\bm{A}$ and $\bm{B}$, and so we will simply refer to the ``counterpart'' of any $b \in \Binit$ as any one of its copies in $\Binittilde$. We will choose $p = T^{\beta/3}$, which will bring the size of the learner's action space to be $\mathcal{O} (T^\beta)$.
\medskip

\noindent The game matrices are amended as follows. For any learner edge $\widetilde{b} \in \Binittilde$ and any action $a \in \Ainit$,
\begin{align}
    \bm{A} ( a , \widetilde{b} ) = 0.
\end{align}
Furthermore, the optimizer collects no reward for playing $\blacklozenge$. Namely,
\begin{align} \label{eq:bl}
    \text{For all } b \in \mathcal{B},\ \bm{A} (\blacklozenge, b) = 0.
\end{align}
From the point of view of the learner, the actions in $\Binit$ and $\Binittilde$ are not symmetric. For a small constant $\beta > 0$ to be determined later, and any $b \in \Binit$,
\begin{align} \label{eq:blinit}
    &\bm{B} ( \blacklozenge, b ) = \begin{cases}
        k^\star/ T &\text{if } b = e^\dagger \\
        - k^\star/ T &\text{if } b \in \Vin \cup \Vout \\
        0 &\text{otherwise.}
    \end{cases},
\end{align}
Where $k^\star = \varepsilon (1-\varepsilon)^{-1} T^{1-\min \{ \alpha,\beta \}/3}$. For any $b \in \Binit$ and its counterpart $\widetilde{b} \in \Binittilde$,
\begin{align} \label{eq:gamma}
    &\bm{B} ( \blacklozenge, \widetilde{b} ) = \bm{B} ( \blacklozenge, b ) - \gamma,
\end{align}
where $\gamma = \beta/3 (1-\varepsilon)^{-1} T^{-\alpha} \log ( T )$. On the other hand, for any learner action $b \in \Binit$ and its set of counterparts $\widetilde{b} \in \Binittilde$, and any optimizer edge $a \in \Ainit$,
\begin{align}
    \bm{B} (a,\widetilde{b}) = \bm{B} (a,b)
\end{align}

\subsection{Proof of \Cref{theorem:noinit}}

The intuition behind the amended game structure is as follows. In the first iteration, observe that the presence of $p \gg 1$ copies of each action in $\Binit$ ensures MWU does not place a significant amount of probability mass on actions in $\Binit$. This in turn ensures that the optimizer cannot collect much reward in the first step. In order to mitigate this issue in future steps, notice that whenever the optimizer plays $\blacklozenge$, the reward on all actions in $\Binittilde$ decreases a little bit. Once $\blacklozenge$ is played sufficiently many times, the actions in $\Binittilde$ are downweighted enough to the point where MWU no longer places a significant amount of mass on them; the optimizer can begin to collect reward after this point. How many times should the optimizer play $\blacklozenge$ before this starts to happen?

\smallskip
It will turn out that the answer to this question will be $\approx (\eta \gamma)^{-1} \log (p)$ times. Furthermore, the choice of $\gamma$ is reverse engineered from the equation $(\eta \gamma)^{-1} \log (p) = (1-\varepsilon)^2 T$. Thus, a good optimizer would need to play the action $\blacklozenge$ $\approx (1-\varepsilon)^2 T$ times before the effect of the actions in $\Binittilde$ is nullified. This leaves out a small $\approx 2\varepsilon T$ portion of the horizon where the optimizer may collect rewards. This motivates the notion of the  ``critical time''.

\begin{definition}[Critical time]
Define $T_\blacklozenge^\star = \frac{1}{\eta \gamma}\log (p)$ as the \textit{critical time}. By choice of $\gamma$ and $p$, $T^\star_\blacklozenge = (1- \varepsilon)^2 T$. Define the smallest time $t$ where $|\mathcal{T}_\blacklozenge (t)| \ge (1-\varepsilon)^3 T$ as the \textit{lower critical time}.

\smallskip
\noindent If at any time $t$, $|\mathcal{T}_{\blacklozenge} (t)| < (1-\varepsilon)^3 T$, then $p \cdot \exp \left( - \eta \gamma |\mathcal{T}_{\blacklozenge} (t) | \right) \ge p^\varepsilon \gg 1$. Likewise, if at any time $t$, $|\mathcal{T}_{\blacklozenge} (t)| \ge (1-\varepsilon) T$, then $p \cdot \exp \left( - \eta \gamma |\mathcal{T}_{\blacklozenge} (t) | \right) \le p^{-\varepsilon} \ll 1$.
\end{definition}

The lower critical time is essentially a conservative notion of the critical time. At the critical time, the effect of the actions in $\Binittilde$ are comparable to those in $\Binit$. And prior to the lower critical time, the actions in $\Binittilde$ dominate all of those in $\Binit$ (in that MWU places very little mass on actions in $\Binit$). Next we define a version of MWU which merges the contribution to the reward where $\blacklozenge$ was played.

\begin{definition}[Reduced MWU] \label{def:reduced-MWU}
Consider some sequence of actions $\{ \bx(t) \}_{t \ge 1}$ played by the optimizer. Let $\mathcal{T}_{\blacklozenge} (t)$ denote the set of time-points until and including time $t$ where the optimizer played $\blacklozenge$ and $\mathcal{T}_{\cross \blacklozenge} (t)$ denote the set of time-points until and including time $t$ where the optimizer did not play $\blacklozenge$. Let $\yr (t)$ denote the ``reduced MWU'' where we assume that cumulative rewards of actions are computed across time points where $\blacklozenge$ is not played. Namely, using the definition,
\begin{align}
    \forall b \in \Binit,\ \rltilde (b,t) &= \frac{k^\star}{T} |\mathcal{T}_\blacklozenge(t)| \cdot \mathbb{I} (b = e^\dagger) + \sum_{t' \in \mathcal{T}_{\cross \blacklozenge} (t)} \bx (t')^\top \bm{B} \delta_b \\
    \forall b \in \Binittilde,\ \rltilde (\widetilde{b},t) &= \frac{k^\star}{T} |\mathcal{T}_\blacklozenge(t)| \cdot \mathbb{I} (\widetilde{b} = \widetilde{e}^\dagger) + \sum_{t' \in \mathcal{T}_{\cross \blacklozenge} (t)} \bx (t')^\top \bm{B} \delta_{\widetilde{b}}
\end{align}
\end{definition}

\begin{lemma} \label{lemma:reduced-MWU}
For any action $b \in \Binit$,
\begin{align}
    \Pr_{Y \sim \by(t)} (Y = b) =
        \frac{1}{1 + p \cdot \exp \left( - \eta \gamma |\mathcal{T}_{\blacklozenge} (t)| \right)} \cdot \Pr_{Y \sim \yr (t)} (Y = b)
\end{align}
\end{lemma}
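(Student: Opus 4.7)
The proof is a direct algebraic computation exploiting the permutation symmetry between each action $b \in \Binit$ and its $p$ identical copies in $\Binittilde$. The key structural observation is that for any optimizer edge $a \in \Ainit$ and any counterpart pair $(b, \widetilde{b})$ with $b \in \Binit$, the game matrices give $\bm{B}(a, \widetilde{b}) = \bm{B}(a, b)$, while $\bm{B}(\blacklozenge, \widetilde{b}) = \bm{B}(\blacklozenge, b) - \gamma$ by~\eqref{eq:gamma}. Splitting the defining sum $\rl(\cdot, t) = \sum_{t' = 1}^t \bx(t')^\top \bm{B} \delta_\cdot$ into contributions from $\mathcal{T}_{\blacklozenge}(t)$ and $\mathcal{T}_{\cross\blacklozenge}(t)$ therefore yields the uniform shift identity
\begin{equation*}
    \rl(\widetilde{b}, t) = \rl(b, t) - \gamma\,|\mathcal{T}_\blacklozenge(t)|, \qquad b \in \Binit.
\end{equation*}

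With this identity in hand, I would unpack the MWU partition function $Z(t) = \sum_{b' \in \mathcal{B}} \exp(\eta\,\rl(b',t))$ by grouping each $b' \in \Binit$ with its $p$ counterparts in $\Binittilde$ and factoring out the common $e^{-\eta\gamma|\mathcal{T}_\blacklozenge(t)|}$ shift that the identity above places on every $\Binittilde$ term:
\begin{equation*}
    Z(t) = \bigl(1 + p\,e^{-\eta \gamma |\mathcal{T}_\blacklozenge(t)|}\bigr) \sum_{b' \in \Binit} e^{\eta\,\rl(b',t)}.
\end{equation*}
Dividing $\by(b,t) = e^{\eta\rl(b,t)}/Z(t)$ by this factorization immediately separates the global scalar $1/(1 + p\,e^{-\eta\gamma|\mathcal{T}_\blacklozenge(t)|})$ from the ratio
\begin{equation*}
    \frac{e^{\eta\,\rl(b,t)}}{\sum_{b' \in \Binit} e^{\eta\,\rl(b',t)}},
\end{equation*}
which I would then identify with $\Pr_{Y \sim \yr(t)}(Y = b)$: this is precisely the Gibbs distribution on $\Binit$ built from the collapsed per-iteration rewards $\rltilde$ of Definition~\ref{def:reduced-MWU}, in which all $\blacklozenge$-rounds have been consolidated into the single offset $(k^\star/T)|\mathcal{T}_\blacklozenge(t)|$ on $e^\dagger$ and any remaining uniform additive shifts cancel between numerator and denominator.

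I do not expect any real technical obstacle: the entire argument is a one-line partition-function factorization enabled by the per-round symmetry that $\blacklozenge$ treats every counterpart of a given $b$ identically (each with reward $\bm{B}(\blacklozenge,b)-\gamma$), while any non-$\blacklozenge$ optimizer action treats them exactly identically. The only bookkeeping step that needs care is verifying that the shift $-\gamma|\mathcal{T}_\blacklozenge(t)|$ is independent of the choice of $b \in \Binit$ — which is immediate from~\eqref{eq:gamma} — so that it pulls out cleanly from the sum over $\Binittilde$ in $Z(t)$. This lemma will then feed directly into the analysis of the critical-time argument: before the lower critical time the prefactor is $O(p^{-\varepsilon}) \ll 1$, forcing $\by$ to place vanishing mass on $\Binit$ and hence preventing the optimizer from extracting reward, while after the critical time it approaches $1$, effectively reducing the game to the initialized MWU setting covered by \Cref{theorem:with_initialization}.
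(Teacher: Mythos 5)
Your proof follows the paper's argument exactly, modulo a reorganization: the paper first replaces $\rl$ by $\rltilde$ via the shift identities and then factorizes the partition function, while you factorize in terms of $\rl$ and defer the identification of the residual Gibbs ratio with $\yr(t)$ to the last step. The $p$-fold grouping over counterparts in $\Binittilde$ and the resulting scalar prefactor $\bigl(1+p\,e^{-\eta\gamma|\mathcal{T}_\blacklozenge(t)|}\bigr)^{-1}$ are identical.

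That last step is precisely the one you should not wave through with ``any remaining uniform additive shifts cancel.'' For the cancellation to go through you need $\rl(b,t)-\rltilde(b,t)$ to be independent of $b$ over \emph{all} of $\Binit$, and with Definition~\ref{def:reduced-MWU} taken literally it is not: the $\blacklozenge$-rounds give $\rl(b,t)$ an offset of $-(k^\star/T)\,|\mathcal{T}_\blacklozenge(t)|$ for $b\in\Vin\cup\Vout$ via~\eqref{eq:blinit}, but $\rltilde$ as written only absorbs the $+(k^\star/T)\,|\mathcal{T}_\blacklozenge(t)|$ offset on $e^\dagger$ and nothing on the vertex actions. The residual shift is thus $0$ on $E$ and $-(k^\star/T)\,|\mathcal{T}_\blacklozenge(t)|$ on $\Vin\cup\Vout$, which is \emph{not} a uniform additive shift. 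The paper's own write-up contains the same slip --- the claim that ``the second term is $0$ \ldots\ as is the third term'' for all $b\in\Binit\setminus\{e^\dagger\}$ fails for $b\in\Vin\cup\Vout$. The intended definition, which is what the downstream use in \Cref{lemma:opt-bound-MWU-noinit} presupposes when it reads off $r_0(\vin)=-k+\cdots$, has $\rltilde$ carrying the vertex offset too, so that $\rltilde\equiv\rl$ on $\Binit$. With that amendment your argument closes cleanly; as stated, the final identification you appeal to does not hold, and this needed to be checked rather than asserted.
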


\begin{lemma} \label{lemma:opt-p-theta}
Consider some sequence of actions $\{ \bx (t) \}_{t \ge 1}$ played by the optimizer. Let $\mathcal{T}_{\text{init}}^{1-\varepsilon} (t)$ denote the collection of timepoints, $\left\{ t' \in \mathcal{T}_{\cross \blacklozenge} (t) : |\mathcal{T}_{\blacklozenge} (t')| \ge (1 - \varepsilon)^3 T \right\}$: the set of time points where the optimizer did not play $\blacklozenge$, accrued after $\blacklozenge$ itself has been played at least $(1 - \varepsilon)^3 T$ times. Then, the cumulative reward of the optimizer is at most,
\begin{equation}
    \frac{2T}{p^\varepsilon} + \sum_{t \in \mathcal{T}_{\text{init}}^{1-\varepsilon} (t)} \bx(t)^\top \bm{A} \yr (t).
\end{equation}
\end{lemma}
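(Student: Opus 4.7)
The plan is to partition the horizon $[T]$ into three disjoint pieces and bound the optimizer's cumulative reward on each piece separately. First, over $\mathcal{T}_\blacklozenge(T)$ the contribution is identically zero because $\bm{A}(\blacklozenge, \cdot) \equiv 0$ by~\eqref{eq:bl}. For every non-$\blacklozenge$ iteration $t$, the fact that $\bm{A}(a, \widetilde{b}) = 0$ for all $a \in \Ainit$ and $\widetilde{b} \in \Binittilde$ lets me reduce $\bx(t)^\top \bm{A} \by(t) = \sum_{b \in \Binit} \by(b,t)\,\bm{A}(\bx(t), b)$: only the $\by$-mass on $\Binit$ can contribute. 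The remaining work is then to bound these contributions using \Cref{lemma:reduced-MWU}, which provides a clean multiplicative relation between $\by(b,t)$ and $\yr(b,t)$ on $\Binit$ with prefactor $\frac{1}{1 + p \exp(-\eta \gamma |\mathcal{T}_\blacklozenge(t)|)}$.

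I split the non-$\blacklozenge$ iterations further by whether $|\mathcal{T}_\blacklozenge(t)| \ge (1-\varepsilon)^3 T$ or not. For those with $|\mathcal{T}_\blacklozenge(t)| < (1-\varepsilon)^3 T$, I use the parameter identity $\eta\gamma T_\blacklozenge^\star = \log p$ with $T_\blacklozenge^\star = (1-\varepsilon)^2 T$ to deduce $\eta \gamma (1-\varepsilon)^3 T = (1-\varepsilon)\log p$, so that $p \exp(-\eta\gamma|\mathcal{T}_\blacklozenge(t)|) > p^\varepsilon$. The prefactor from \Cref{lemma:reduced-MWU} is then at most $p^{-\varepsilon}$. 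Combining this with the fact that entries of $\bm{A}$ are bounded by $2$ and that $\yr(\cdot, t)$ sums to at most $1$ over $\Binit$ gives $\bx(t)^\top \bm{A} \by(t) \le p^{-\varepsilon} \bx(t)^\top \bm{A} \yr(t) \le 2/p^\varepsilon$ per iteration. Summing over the at most $T$ such iterations yields the first $2T/p^\varepsilon$ term in the lemma.

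For iterations $t \in \mathcal{T}_{\text{init}}^{1-\varepsilon}(T)$ I simply upper bound the prefactor in \Cref{lemma:reduced-MWU} by $1$, obtaining $\bx(t)^\top \bm{A} \by(t) \le \bx(t)^\top \bm{A} \yr(t)$, and summing over this set produces the second term in the claimed bound. Adding the three contributions gives the lemma. The only delicate step is the arithmetic in the middle phase, namely verifying that the reverse-engineered choice of $\gamma$ indeed forces $p \exp(-\eta\gamma|\mathcal{T}_\blacklozenge(t)|) \ge p^\varepsilon$ whenever $|\mathcal{T}_\blacklozenge(t)|$ is below the lower critical threshold; once that calculation is in hand, the rest follows from \Cref{lemma:reduced-MWU} and the structural zeros in $\bm{A}$.
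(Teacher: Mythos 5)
Your proof is correct and takes essentially the same route as the paper: partition the rewards using the structural zeros of $\bm{A}$ (when $\blacklozenge$ is played, and when the learner plays in $\Binittilde$), invoke \Cref{lemma:reduced-MWU} to rewrite each non-$\blacklozenge$ term with the prefactor $\bigl(1 + p\exp(-\eta\gamma|\mathcal{T}_\blacklozenge(t)|)\bigr)^{-1}$, and split by whether $|\mathcal{T}_\blacklozenge(t)|$ has crossed $(1-\varepsilon)^3 T$. You are a bit more explicit than the paper in spelling out the arithmetic $\eta\gamma(1-\varepsilon)^3 T = (1-\varepsilon)\log p \Rightarrow p\exp(-\eta\gamma|\mathcal{T}_\blacklozenge(t)|) > p^\varepsilon$, which the paper simply asserts inside the definition of the critical time, but the underlying argument is identical.
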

\begin{proof}
The cumulative reward of the optimizer can be written as,
\begin{align}
    \sum_{t=1}^T \bx(t)^\top \bm{A} \by (t) = \sum_{t \in \mathcal{T}_{\cross \blacklozenge} (T)} \frac{1}{1 + p \cdot \exp (-\eta \gamma |\mathcal{T}_{\blacklozenge} (t)|)}\bx (t)^\top \bm{A} \yr(t)
\end{align}
where we use the fact that the optimizer collects no reward for picking $\blacklozenge$ or if the learner picks an action in $\Binittilde$, and on the remaining actions we use \Cref{lemma:reduced-MWU} to relate $\by(t)$ and the reduced MWU, $\yr (t)$. At any time $t$ prior to the lower critical time, we have that $p \cdot \exp \left( - \eta \gamma |\mathcal{T}_{\blacklozenge} (t) | \right) \ge p^\varepsilon \gg 1$. And therefore, by splitting the summation over $t \in \mathcal{T}_{\cross \blacklozenge} (T)$ into $\{ t : |\mathcal{T}_\blacklozenge (t)| < (1-\varepsilon)^3 T \}$ and the complement (which is nothing but $\mathcal{T}^{1-\varepsilon}_{\text{init}} (T)$), and noting that $\| \operatorname{Vec} (\bm{A}) \|_\infty \le 2$, we get the upper bound.
\end{proof}

Note that at any time $t$ where $\mathcal{T}_{\text{init}}^{1-\varepsilon} (t)$ is non-empty, $\blacklozenge$ has already been played $(1-\varepsilon)^3 T$ times. By the structure of the game, the action $e^\dagger$ has collected $\approx k^\star$ cumulative reward by this point, and likewise, actions in $\Vin \cup \Vout$ have cumulative reward $\approx - k^\star$. Observe the resemblance to the initialization in \cref{init:1,init:2,init:3,init:4}: we may now use techniques for bounding the reward of the optimizer for the case of MWU with initialization. In particular, the rewards collected by the optimizer are essentially bounded by \Cref{lemma:alg-bound-MWU}.

\begin{lemma} \label{lemma:opt-bound-MWU-noinit}
Assume that $\Delta \ge \frac{1}{\eta} \log (2n^2T)$. The cumulative reward collected by the optimizer when playing against MWU is upper bounded by,
\begin{align}
    2T^{1-\beta \varepsilon/3} + (k^\star (1 + 5 \varepsilon) + 9 \Delta) \sum_{i = 1}^n W(z_i,z_{i+1}) + (2k^\star + 2 \Delta + 4)
\end{align}
for some Hamiltonian cycle $z_1 \to z_2 \to \cdots \to z_n \to z_1$ on $G$.
\end{lemma}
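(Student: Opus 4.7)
The plan is to first apply \Cref{lemma:opt-p-theta} to strip off the $2T/p^\varepsilon = 2T^{1-\beta\varepsilon/3}$ contribution (matching the first term of the target bound, by our choice $p = T^{\beta/3}$), reducing the task to upper-bounding the residual
\[
R_{\mathrm{sub}} \;:=\; \sum_{t \in \mathcal{T}_{\text{init}}^{1-\varepsilon} (T)} \bx(t)^\top \bm{A} \yr(t)
\]
by $(k^\star (1 + 5\varepsilon) + 9\Delta) \sum_{i=1}^n W(z_i, z_{i+1}) + (2k^\star + 2\Delta + 4)$ for some Hamiltonian cycle on $V$.

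The structural observation driving the bound is that, by \Cref{lemma:reduced-MWU}, $\yr$ behaves exactly like an MWU learner on the reduced action space $\Binit$. Each $\blacklozenge$ play by the optimizer shifts the reduced reward history by $+\tfrac{k^\star}{T}$ on $e^\dagger$ and $-\tfrac{k^\star}{T}$ on $\Vin \cup \Vout$ (matching \cref{eq:blinit}), while each non-$\blacklozenge$ play $(u,v)$ updates the reduced history identically to how the same edge would update MWU in the subgame on $\Ainit \times \Binit$. Splitting the non-$\blacklozenge$ plays into those occurring before the lower critical time $t_0$ (of count at most $T$) and those lying in $\mathcal{T}_{\text{init}}^{1-\varepsilon}(T)$, the former assume the role of an initial edge sequence $\Ei$ of size at most $T$ in the sense of \cref{init:1a,init:2a,init:3a,init:4a}; meanwhile the cumulative $\blacklozenge$-offset $\tfrac{k^\star}{T}|\mathcal{T}_\blacklozenge(t)|$ at every $t \in \mathcal{T}_{\text{init}}^{1-\varepsilon}(T)$ lies in $[(1-\varepsilon)^3 k^\star,\,k^\star]$, exactly playing the role of the $k$-parameter in that initialization. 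Conservatively upper-bounding the effective $k$ by $k^\star$ (since the bound of \Cref{corr:edgeinit} is monotone nondecreasing in $k$) and invoking \Cref{corr:edgeinit} then yields precisely the target. The $\Delta$-hypothesis transfers cleanly: \Cref{corr:edgeinit} requires $\Delta \ge \tfrac{1}{\eta}\log(2n^2(T+|\Ei|))$, which with $|\Ei|\le T$ differs from our hypothesis $\Delta \ge \tfrac{1}{\eta}\log(2n^2T)$ by only an additive $\tfrac{\log 2}{\eta}$, absorbed into the lower-order terms of the bound.

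The main obstacle I anticipate is that the reduced MWU's effective initialization is not truly static: as further $\blacklozenge$ plays occur throughout $\mathcal{T}_{\text{init}}^{1-\varepsilon}(T)$, the $e^\dagger$ offset grows from about $(1-\varepsilon)^3 k^\star$ toward $k^\star$, whereas \Cref{corr:edgeinit} was stated for an initialization fixed upfront. I would resolve this either by a coupling / monotonicity argument—dominating $\yr$ at each time by the MWU trajectory started from the maximal initialization $k = k^\star$ fixed upfront, leveraging that additional $\blacklozenge$ plays only concentrate mass on $e^\dagger$ and away from $\Vin \cup \Vout$, which is consistent with the accounting behind \Cref{lemma:excess,lemma:heavy,lemma:path,lemma:cycle,lemma:combine}—or, more directly, by rerunning the chain of \Cref{lemma:excess,lemma:heavy,lemma:path,lemma:cycle,lemma:combine} in the reduced-MWU setting with $k^\star$ substituted for $k$. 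In the latter approach, the definitions of $V_{\text{excess}}$ and $\Sh$ depend only on the optimizer's cumulative play counts, so those statements carry over verbatim; the slight time-variation of the $\blacklozenge$-offset is harmless because the path/cycle accounting in \Cref{lemma:path,lemma:cycle} already uses $k(1+\varepsilon)+3\Delta$ (respectively $k(1+3\varepsilon)+7\Delta$) as a uniform upper bound per heavy edge, which continues to hold with $k=k^\star$.
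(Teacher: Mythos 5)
Your plan matches the paper's proof: both apply \Cref{lemma:opt-p-theta} to peel off the $2T/p^\varepsilon = 2T^{1-\beta\varepsilon/3}$ term, then view the non-$\blacklozenge$ plays occurring before the lower critical time as an initial edge set $\Ei$ (in the sense of \cref{init:1a,init:2a,init:3a,init:4a}) so that \Cref{corr:edgeinit} with $k = k^\star$ applies to the residual. Two small remarks on your execution. First, your bound $|\Ei| \le T$ is needlessly loose: since at the lower critical time $\blacklozenge$ has already been played at least $(1-\varepsilon)^3 T$ times, the paper observes that \emph{all} non-$\blacklozenge$ plays (both $\Ei$ and those in $\mathcal{T}_{\text{init}}^{1-\varepsilon}(T)$) number at most $T - (1-\varepsilon)^3 T \le 3\varepsilon T$; with this, $\Delta \ge \frac{1}{\eta}\log(2n^2 T)$ directly dominates the required $\frac{1}{\eta}\log(2n^2(\Ti + |\Ei|))$ and you do not need your ``absorb a $\frac{\log 2}{\eta}$ additive term'' side argument (which, while salvageable by the lower-order accounting, would strictly speaking alter the constants in the stated bound). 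Second, the ``not truly static'' concern you flag is real and the paper does gloss over it; your proposed resolution by rerunning \Cref{lemma:excess,lemma:heavy,lemma:path,lemma:cycle,lemma:combine} directly with $k^\star$ and noting that those lemmas depend on the optimizer's play counts and only need $k$ as an upper bound is the cleaner of your two options and is more explicit than what appears in the paper. Net: same route, with you tightening one spot the paper leaves implicit and loosening one spot the paper handles more efficiently.
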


The same argument will apply for lower bounding the total reward collected by the best optimizer strategy. We will consider the optimizer which plays $\blacklozenge$, $(1-\varepsilon) T$ times first, prior to following the optimizer strategy outlined in \Cref{lemma:opt-lb} with $k=k^\star$. Note that the remaining duration in the epoch is $\varepsilon T$, which is at least $n k^\star$.

\begin{lemma}
\label{lemma:opt-lb-noinit} Consider any maximum weight Hamiltonian cycle in $G$, defined by the sequence of vertices $z_1^\star \to z_2^\star \to \cdots \to z_n^\star \to z_1^\star$. There exists an optimizer strategy such that the total reward collected is at least,
\begin{align}
    R_{\text{lb}} \triangleq (1-\varepsilon)^3 k^\star \times \frac{1}{1 + p^{-\varepsilon}} \left( 1 - \frac{2}{n} \right) \sum_{i=1}^n W ( z_i^\star, z_{i+1}^\star ).
\end{align}
\end{lemma}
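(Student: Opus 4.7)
}

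The optimizer strategy I propose splits the horizon into a burn-in phase of length $(1-\varepsilon)T$, during which the optimizer plays $\blacklozenge$ at every step, followed by an execution phase of length $\varepsilon T$, during which the optimizer runs the strategy of Lemma \ref{lemma:opt-lb} with parameter $k = (1-\varepsilon)k^\star$. By \cref{eq:bl}, the burn-in phase contributes zero to the optimizer's cumulative reward. However, by \cref{eq:blinit}, it loads the learner's reward history on $\Binit$ into exactly the configuration prescribed by \cref{init:1,init:2,init:3,init:4} with $k$ replaced by $(1-\varepsilon)k^\star$: at the end of the burn-in, $\rl(e^\dagger) = (1-\varepsilon)k^\star$, $\rl(\vin) = \rl(\vout) = -(1-\varepsilon)k^\star$ for every $v \in V$, and $\rl(e) = 0$ for every remaining edge $e \in E \setminus \{e^\dagger\}$.

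During the execution phase I would invoke Lemma \ref{lemma:opt-lb} with $k = (1-\varepsilon)k^\star$ and horizon $\varepsilon T$, viewing the reduced MWU $\yr$ as an MWU player on $\Binit$ with precisely the above initialization. Both hypotheses of that lemma hold: the condition $\varepsilon T \ge n(1-\varepsilon)k^\star$ is saturated by the choices $k^\star = \varepsilon (1-\varepsilon)^{-1} T^{1-\min\{\alpha,\beta\}/3}$ and $n = T^{\min\{\alpha,\beta\}/3}$, and the quantity $\eta \varepsilon (1-\varepsilon) k^\star$ grows as a positive power of $T$, hence dominates $(\varepsilon - \varepsilon^2)^{-1} \log(n^3)$ for all sufficiently large $T$. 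Lemma \ref{lemma:opt-lb} then guarantees that the optimizer accumulates reward at least
\begin{equation*}
    (1-\varepsilon)^2 \cdot (1-\varepsilon) k^\star \left( 1 - \frac{2}{n} \right) \sum_{i=1}^n W(z_i^\star, z_{i+1}^\star)
\end{equation*}
against $\yr$ over the execution phase.

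To convert this into a bound against the true MWU, I would appeal to Lemma \ref{lemma:reduced-MWU}. Throughout the execution phase $|\mathcal{T}_\blacklozenge(t)| = (1-\varepsilon)T$, which forces $p \exp(-\eta \gamma |\mathcal{T}_\blacklozenge(t)|) \le p^{-\varepsilon}$, whence $\by(b,t) \ge (1+p^{-\varepsilon})^{-1}\yr(b,t)$ for every $b \in \Binit$. Since $\bm{A}$ vanishes on $\blacklozenge$ and on $\Binittilde$ and is nonnegative on $\Ainit \times \Binit$, the per-round reward $\bx(t)^\top \bm{A}\by(t)$ is at least $(1+p^{-\varepsilon})^{-1}\bx(t)^\top \bm{A}\yr(t)$. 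Summing over the execution phase and multiplying the guarantee above by $(1+p^{-\varepsilon})^{-1}$ yields precisely the stated lower bound $R_{\text{lb}}$.

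The main obstacle I anticipate is verifying that the reduced MWU after the burn-in phase genuinely coincides with an MWU player on $\Binit$ with the initialization demanded by Lemma \ref{lemma:opt-lb}, and that the saturated identity $\varepsilon T = n(1-\varepsilon)k^\star$ leaves enough room for the $\varepsilon T$-long execution phase. Beyond that, the argument is bookkeeping: combine the reward bound from Lemma \ref{lemma:opt-lb} with the constant scaling factor supplied by Lemma \ref{lemma:reduced-MWU}.
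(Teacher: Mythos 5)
Your proposal is correct and takes essentially the same approach as the paper's proof: the same burn-in phase of length $(1-\varepsilon)T$ playing $\blacklozenge$ followed by an execution phase invoking Lemma~\ref{lemma:opt-lb} with induced $k = (1-\varepsilon)k^\star$, and the same conversion from the reduced MWU to the true MWU via Lemma~\ref{lemma:reduced-MWU} supplying the $(1+p^{-\varepsilon})^{-1}$ factor. The paper simply writes out the cumulative rewards at time $(1-\varepsilon)T+1$ explicitly, but your bookkeeping (the saturated identity $\varepsilon T = n(1-\varepsilon)k^\star$ and the nonnegativity of $\bm{A}$ on $\Ainit\times\Binit$) is equivalent and if anything slightly more careful.
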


\noindent Finally, we may combine \Cref{lemma:opt-bound-MWU-noinit,lemma:opt-lb-noinit} to prove \Cref{theorem:noinit}, using the fact that both equations, up to an $\approx k^\star$ factor, capture the weight of a Hamiltonian cycle on $G$.

\subsection{Proof of \Cref{theorem:noinit}}

The proof of this result follows very similarly to that of \Cref{theorem:initialization}. The ratio of the optimizer's reward $R_{\text{opt}}$, to that of the best optimizer is at least,
\begin{align}
    \frac{R_{\text{opt}}}{R_\star} \le \frac{R_{\text{opt}}}{R_{\text{lb}}} &\le \frac{2T^{1-\beta \varepsilon/3} + (k^\star (1 + 5 \varepsilon) + 9 \Delta) \sum_{i = 1}^n W(z_i,z_{i+1}) + (2k + 2 \Delta + 4)}{(1 - \varepsilon)^3 k^\star \times \frac{1}{1 + p^{-\varepsilon}} \left( 1 - \frac{2}{n} \right) \sum_{i=1}^n W ( z_i^\star, z_{i+1}^\star )} \\
    &\le \big( 1 - C\varepsilon - o_T(1) \big) \cdot \frac{\sum_{i = 1}^n W(z_i,z_{i+1})}{\sum_{i=1}^n W ( z_i^\star, z_{i+1}^\star )}
\end{align}
for a sufficiently large absolute constant $C>0$ in the last inequality. Here we use the fact that $\Delta = \mathcal{O} (T^{1-\alpha} \log(T))$, while $k^\star = \Omega (T^{1-\alpha/3})$, and therefore $k^\star / \Delta = T^{\Omega(\alpha)}$, and the fact that $T^{1-\beta \varepsilon/3} \ll n k^\star = \varepsilon T$. \Cref{theorem:maxTSP-hardness} implies that unless $\Ptime = \NPtime$, a polynomial time algorithm returning a Hamiltonian cycle $z_1 \to z_2 \to \cdots \to z_n \to z_1$ for $(1,2)$-maxTSP must satisfy,
\begin{align}
     \frac{\sum_{i = 1}^n W(z_i,z_{i+1})}{\sum_{i=1}^n W ( z_i^\star, z_{i+1}^\star )} \le \frac{1067}{1068} + \epsilon
\end{align}
for any $\epsilon > 0$. Since $R_\star \ge R_{\text{lb}}$ and the latter is at least $(1-c_0\varepsilon) \varepsilon T - o(T)$ for some $c_0 > 0$ (\Cref{lemma:opt-lb-noinit}), and noting that $\varepsilon$ is a small constant, this implies that unless $\Ptime = \NPtime$, a polynomial time optimizer must satisfy,
\begin{align}
    R_\star - R_{\text{opt}} \ge c_1 \varepsilon T - o(T),
\end{align}
for some constant $c_1 > 0$. Since $\varepsilon$ is a small constant, this proves \Cref{theorem:noinit}.

\section{Conclusion and future questions}

We show that it is hard in general to optimize against a no-regret learner: specifically, against Hedge/MWU. This implies that positive results can only be obtained for games with specific structure, as was extensively studied in prior literature. It remains open what is the best runtime for the optimizer. While it is not polynomial in the game size unless $\Ptime = \NPtime$, can it be polynomial in $T$? Can the runtime depend exponentially only on $\min(|\mathcal{A}|,|\mathcal{B}|)$? Concretely, is there an algorithm that runs in time $\mathrm{poly}(|\mathcal{A}|,|\mathcal{B}|,T)e^{O(\min(|\mathcal{A}|,|\mathcal{B}|))}$? Further, beyond the special cases studied, is there a broader class of games in which it is possible to efficiently find optimize the optimizer's reward? Beyond MWU, there is an efficient optimization algorithm against no-swap regret learners \citep{deng2019strategizing}, but what about other learners? Taking the learner's perspective, can one analyze how well it performs against optimizers and which learning algorithms are desired in such a scenario? See related work for some studies in this direction.

\bibliographystyle{plainnat}
\bibliography{refs}

\appendix

\section{Optimizing against MWU with initialization}

In this section, we prove a lower bound on optimizing when the learner has non-zero reward history, formulated as follows:

\begin{theorem} \label{theorem:initialization}
    Fix absolute constants $\alpha, \beta \in (0,1]$. Suppose the learner plays MWU (Definition~\ref{def:MWU}) with a specific non-zero initialization of reward history, and learning rate parameterized as $\eta = 1/T^{1-\alpha}$. There exist game matrices $\bm{A}, \bm{B} \in [-1,1]^{|\mathcal{A}| \times |\mathcal{B}|}$, where $|\mathcal{A}|, |\mathcal{B}| \le T^\beta$, such that unless $\Ptime = \NPtime$, there exists no polynomial time optimizer which finds a sequence of pure strategies for the optimizer $\{ \bx (t) : t \in [T] \}$ satisfying,
\begin{align}
    R ( \{ \bx (t) : t \in [T] \}) \ge \max_{\{ \bx^\star (t) : t \in [T] \}} R ( \{ \bx^\star (t) : t \in [T] \}) - cT
\end{align}
For a sufficiently small absolute constant $c > 0$.
\end{theorem}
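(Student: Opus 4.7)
The plan is to reduce from $(1,2)$-maxTSP, for which constant-factor approximation is $\NPtime$-hard by \Cref{theorem:maxTSP-hardness}. Given a maxTSP instance on a vertex set $V$ of size $n = T^{\min\{\alpha,\beta\}/3}$, I would construct exactly the game of Section~\ref{sec:structure}: the optimizer's actions are directed edges on $V$, the learner's actions include edges plus auxiliary copies $\Vin$ and $\Vout$, and the reward matrices reward the learner for echoing the optimizer's edge (or playing from the optimizer's out-vertex) while also coupling rewards to in- and out-vertex indicators. The initialization \cref{init:1,init:2,init:3,init:4} sets $r_0(e^\dagger) = k$ with $k = T^{1-\min\{\alpha,\beta\}/3}$ (so that $nk = T$), and sets $r_0(v) = -k$ for $v \in \Vin \cup \Vout$, so that these auxiliary actions act as ``cliff'' penalties that MWU will switch to if the optimizer overloads any single vertex.

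The upper-bound chain then proceeds through the lemmas already stated. First, \Cref{lemma:excess,corr:excess} show that once the optimizer plays more than $k + \Delta$ edges incident on any vertex, the corresponding $\vin$ or $\vout$ accumulates enough reward that MWU puts essentially all its mass there, and the optimizer collects $O(1)$ reward thereafter; effectively this constrains in- and out-degrees to $k + \Delta$ with $\Delta = T^{1-\alpha}\log(2n^2T)$. Next, \Cref{lemma:heavy} argues that only edges emanating from ``heavy'' vertices (those with a single dominant incoming edge used $\ge k(1-\varepsilon)-\Delta$ times) can contribute substantially, since otherwise MWU has not yet concentrated on any edge out of that vertex. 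One then forms the auxiliary graph $G'$ of \Cref{def:auxiliary}, whose in- and out-degrees are at most one, so it decomposes into paths and cycles; \Cref{lemma:path,lemma:cycle,lemma:combine} bound the reward per component by $\approx k$ times its edge weights and stitch the components into a single Hamiltonian cycle $\mathcal{C}$ on $V$, yielding $R_{\text{opt}} \le (k(1+5\varepsilon) + 9\Delta)\, W(\mathcal{C}) + 2k + 2\Delta + 4$.

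For the matching lower bound, \Cref{lemma:opt-lb} supplies the strategy: play each edge of a maximum-weight Hamiltonian cycle $\mathcal{C}^\star$ exactly $k$ times in sequence; this attains reward at least $k(1-\varepsilon)^2(1-2/n)\, W(\mathcal{C}^\star)$, using the hypothesis $\eta \varepsilon k \ge \frac{1}{\varepsilon-\varepsilon^2}\log(n^3)$ which is implied by the parameter choices. Dividing the two bounds, the ratio $R_{\text{opt}}/R_\star$ is at most $\bigl(1 + O(\varepsilon + \Delta/k + 1/n)\bigr)\cdot W(\mathcal{C})/W(\mathcal{C}^\star)$. Because $\Delta/k = T^{-\Omega(\alpha)}$ and $1/n = T^{-\Omega(\beta)}$, choosing $\varepsilon$ a small absolute constant makes the first factor arbitrarily close to $1$. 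So any polynomial-time optimizer matching $R_\star$ to within a factor $1 - c$ for $c$ sufficiently small would produce a Hamiltonian cycle beating the $1067/1068$ ratio, contradicting \Cref{theorem:maxTSP-hardness}; since $R_\star \ge (1 - o(1))T$, this multiplicative gap becomes an additive $\Omega(T)$ gap, exactly the statement of the theorem.

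The main obstacle is the reward upper-bound bookkeeping: one must argue, without any assumption on how the optimizer mixes or sequences its actions, that the reward still collapses to $\approx k$ times a Hamiltonian cycle weight. The two technical ingredients that make this work are (i) the $\vin/\vout$ cliff, which forces the optimizer to stay below $k + \Delta$ plays per vertex, and (ii) the heaviness criterion, which says MWU does not respond favorably on a single edge until that edge has been played roughly $k(1-\varepsilon)$ times. Together these squeeze the optimizer's only way to gain $\Theta(T)$ reward into a pattern that literally traces a Hamiltonian-like structure in $G$, which is what makes the reduction tight enough to yield an $\Omega(T)$ additive gap rather than the $\Theta(1)$ gap of prior work.
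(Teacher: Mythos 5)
Your proposal follows the paper's own proof essentially step for step: the same reduction from $(1,2)$-maxTSP, the same game construction and initialization, the same parameter choices $n = T^{\min\{\alpha,\beta\}/3}$, $k = T^{1-\min\{\alpha,\beta\}/3}$, $\Delta = T^{1-\alpha}\log(2n^2T)$, the same upper-bound chain through \Cref{lemma:excess,corr:excess,lemma:heavy}, the auxiliary graph $G'$ and \Cref{lemma:path,lemma:cycle,lemma:combine}, the same lower-bound strategy from \Cref{lemma:opt-lb}, and the same ratio argument converting the multiplicative maxTSP hardness into an additive $\Omega(T)$ gap. The only cosmetic difference is that you describe the lower-bound strategy as playing each edge exactly $k$ times rather than $k(1-\varepsilon+\varepsilon^2)$ times as in the paper's proof of \Cref{lemma:opt-lb}, but since you are invoking that lemma rather than reproving it this has no effect on the argument.
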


Furthermore, note that for any directed edge $(w,x) \in E$, the cumulative reward of the learner at time $t$ is,
\begin{align} \label{eq:cum-lr}
    \rl ((w,x),t) &= r_0 ((w,x)) + \sum_{t'=1}^t \bx(t')^\top \bm{B} \delta_{(w,x)} \\
    &= \begin{cases}
        E ((w,x),t) - \varepsilon \dout ( x, t ) \qquad &\text{if } (w,x) \ne e^\dagger, \\
        k - \varepsilon \dout ( x, t ) &\text{otherwise.}
    \end{cases}
\end{align}
where $\dout (x,t)$ is the out-degree of vertex $x$ at time $t$ (as induced by the edges played by the optimizer) and $E (e,t)$ counts the number of times the edge $e$ was played by the optimizer up to an including time $t$. \cref{eq:cum-lr} results from the fact that each time the edge $(w,x) \ne e^\dagger$ is played by the optimizer the cumulative reward increases by $1$, while each edge leaving $x$ removes a reward of $\varepsilon$. On the other hand, for any vertex $\vin \in \Vin$ or $\vout \in \Vout$,
\begin{align}
    \rl ( \vin, t) &= -k + 2 \din (\vin, t) \\
    \rl ( \vout, t) &= -k + 2 \dout (\vout, t)
\end{align}

\subsection{Proof of \Cref{lemma:excess}}

Recall that the reward of the actions of the learner at time $t$ is as follows:
\begin{align}
&\rl ((w,x),t) = \begin{cases}
    E ((w,x),t) - \varepsilon \dout ( x, t ) \qquad &\text{if } (w,x) \ne e^\dagger, \\
    k - \varepsilon \dout ( x, t ) &\text{otherwise.}
    \end{cases} \\
    &\rl ( \vin, t) = -k + 2 \din (v, t) \\
    &\rl ( \vout, t) = -k + 2 \dout (v, t)
\end{align}
Take $u \in V_{\text{excess}}(\Delta, t)$ for which $\max \{ \din (u,t), \dout (u,t) \}$ is maximized, and supposed without loss of generality that $\din(u,t) = \max \{ \din (u,t), \dout (u,t) \}$ (the proof still goes through if $\dout(u,t) = \max \{ \din (u,t), \dout (u,t) \}$ in an identical way). We claim that for each action $(w,x)$ of the learner in $E$ we have:
\begin{equation*}
    \rl(\uin, t) \geq \Delta + \rl((w,x), t)
\end{equation*}
Notice that 
\begin{align}
    \rl(\uin, t) &= -k + 2 \din(u, t) = -k + \din(u, t) +  \din(u, t)  \\
    & \ge -k + \din(u, t) + k + \Delta = \din(u, t) + \Delta
\end{align}
We also have for $(w,x) \neq e^\dagger$
\begin{align}
    \rl((w,x), t) = E ((w,x),t) - \varepsilon \dout ( x, t ) \leq E ((w,x),t) \leq \din(x, t) \leq \din(u,t)
\end{align}
and for $e^\dagger$
\begin{align}
    \rl(e^\dagger, t) < k < \din(u,t)
\end{align}
thus the claim is proven. Now let us look at the probability mass the learner will put on actions of $\Vin \cup \Vout$ in the next round, compared to the actions of the learner in $E$. Let us denote the latter probability with $p_E$. We have:
\begin{align}
    p_E = \frac{\sum_{(w,x) \in E} \exp(\eta \rl((w,x),t))}{\sum_{(w,x) \in E} \exp(\eta \rl((w,x),t)) + \sum_{v \in V} \exp(\eta\rl(\vin,t)) + \exp(\eta \rl(\vout,t))}
\end{align}
We have that the probability mass of the actions in $\Vin \cup \Vout$ is at least:
\begin{align}
    &\sum_{v \in V} \exp(\eta \rl(\vin,t)) +\exp(\eta\rl(\vout,t)) \\
    &\ge \exp(\eta\rl(\uin, t)) \\
    & \geq \exp(\eta \din(u,t)) \cdot \exp(\eta \Delta) = \exp(\eta \din(u,t)) \cdot \exp(\log(2 n^2 T)) \\
    &=  2 \exp(\eta \din(u,t)) \cdot n^2 T 
\end{align}
while the probability mass in the actions in $E$ are:
\begin{align}
    \sum_{(w,x) \in E} \exp(\eta \rl((w,x),t)) & \le \sum_{(w,x) \in E} \exp(\eta \din(u,t)) < n^2 \exp(\eta \din(u,t))
\end{align}
Thus we get that:
\begin{align}
    p_E \leq \frac{\exp(\eta \din(u,t)) \cdot n^2}{\exp(\eta \din(u,t))\cdot n^2 + 2\exp(\eta \din(u,t)) \cdot n^2 T} \leq \frac{1}{2 T+1} < \frac{1}{2 T}
\end{align}
Since the optimizer can  only earn reward from actions in $E$, the reward collected in the next round is upper bounded by $\frac{1}{T}$, as desired.

\subsection{Proof of \Cref{corr:excess}}
This follows directly from Lemma \ref{lemma:excess}. Since at time $\tmax$ we have that $V_{\text{excess}}$ is non-empty, that means that for rounds $\tmax + 1, \tmax + 2, \dots, T$ the optimizer can earn reward at most $\frac{1}{T}$. Summing up gives that the total reward that can be obtained by the optimizer:
\begin{equation*}
    R(\tmax) + \frac{T - \tmax}{T} < R(\tmax) + 1
\end{equation*}
as desired.

\subsection{Proof of \Cref{lemma:heavy}}

Prior to proving this result, we will introduce two auxiliary lemmas.

\begin{lemma} \label{lemma:low-reward}
At any time $t$, there will always exist an action for the learner having cumulative reward at least $k (1 - \varepsilon)$.
\end{lemma}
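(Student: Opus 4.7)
The plan is to exhibit an explicit action (depending on the play history) that achieves the bound. I would focus on the distinguished edge $e^\dagger = (u^\dagger, v^\dagger)$ and its companion vertex $v^\dagger_{\text{out}} \in \Vout$: their rewards are tied together through $\dout(v^\dagger, t)$, so at least one of them must be large.

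\textbf{Main step: a two-case split on $\dout(v^\dagger, t)$.} Using the formulas collected above the statement, we have $\rl(e^\dagger, t) = k - \varepsilon \dout(v^\dagger, t)$ and $\rl(v^\dagger_{\text{out}}, t) = -k + 2 \dout(v^\dagger, t)$. First, if $\dout(v^\dagger, t) \le k$, then plugging into the expression for $\rl(e^\dagger, t)$ gives $\rl(e^\dagger, t) \ge k - \varepsilon k = k(1-\varepsilon)$, so $e^\dagger$ itself witnesses the bound. Second, if $\dout(v^\dagger, t) > k$, then $\rl(v^\dagger_{\text{out}}, t) > -k + 2k = k \ge k(1-\varepsilon)$, so $v^\dagger_{\text{out}}$ witnesses the bound. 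In either case some learner action has cumulative reward at least $k(1-\varepsilon)$, as required.

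\textbf{Remarks on obstacles.} There is essentially no obstacle here: the construction has been rigged so that the initialization places mass $k$ on $e^\dagger$ and mass $-k$ on each $\vout$, and the slopes $-\varepsilon$ (on $e^\dagger$) and $+2$ (on $v^\dagger_{\text{out}}$) with respect to $\dout(v^\dagger, t)$ are designed exactly so that one of the two stays above $k(1-\varepsilon)$ throughout the game. The only thing to double-check is the symmetric case where $\dout(v^\dagger, t) = k$, which the first branch already handles with equality. Note that the argument does not actually use the in-vertex gadgets or any other edge, so the bound holds unconditionally in $t$ and in the optimizer's play.
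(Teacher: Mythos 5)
Your proof is correct and takes essentially the same route as the paper: set up a dichotomy on the degree of $v^\dagger$ and pair $e^\dagger$ with its companion auxiliary vertex, observing that the two cumulative rewards cannot both fall below $k(1-\varepsilon)$. One small but genuine difference: the paper's own proof pairs $e^\dagger$ with $v^\dagger_{\text{in}}$ and writes $\rl(e^\dagger,t) = k - \varepsilon\,\din(v^\dagger,t)$, but per eq.~\eqref{eq:cum-lr} (and eq.~\eqref{eq:20c}) the penalty on $e^\dagger = (u^\dagger,v^\dagger)$ is governed by the \emph{out}-degree of its head $v^\dagger$, i.e.\ $\rl(e^\dagger,t) = k - \varepsilon\,\dout(v^\dagger,t)$. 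You correctly pair $e^\dagger$ with $v^\dagger_{\text{out}}$, which has $\rl(v^\dagger_{\text{out}},t) = -k + 2\,\dout(v^\dagger,t)$, so both rewards are monotone functions of the \emph{same} variable $\dout(v^\dagger,t)$ and the max argument goes through cleanly; the paper's stated pairing depends on $\din$ versus $\dout$, which do not match, so this looks like a slip in the appendix that your version fixes.
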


\paragraph{Proof of \Cref{lemma:low-reward}.}
Notice that initially, we have $\rl(e^\dagger, 0) = \rl((u^\dagger, v^\dagger), 0) =  k$. At some time $t$ we will have that $\rl(e^\dagger, t) = k - \varepsilon \din(v^\dagger, t)$. At time $t$ we also have for the action $\vin^\dagger$ that, $\rl(v^\dagger_{\text{in}}, t) = -k + 2\din(v^\dagger, t)$. Now consider $\max(\rl(e^\dagger, t), \rl(\vin^\dagger, t)$. This quantity is always greater than $k(1-\varepsilon)$, thus one of the two actions will always have reward greater than $k(1-\varepsilon)$, which proves the claim.

\begin{corollary}\label{corr:low-reward}
Consider any edge $(u,v) \ne (u^\dagger,v^\dagger)$ which was chosen fewer than $k ( 1 - \varepsilon ) - \Delta$ times until time $t$ by the optimizer. MWU places mass of at most $1/n^2 T$ on the corresponding learner edge $(u,v)$.
\end{corollary}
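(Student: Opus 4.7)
\textbf{Proof plan for \Cref{corr:low-reward}.} The plan is to directly read off the MWU mass on the edge $(u,v)$ as the ratio of its exponentiated cumulative reward to the partition sum, then to bound the numerator above using the hypothesis on how rarely $(u,v)$ has been played, and bound the denominator below by a single ``witness'' term coming from \Cref{lemma:low-reward}. The gap between these two quantities will be exactly $\eta \Delta$ in the exponent, and the assumption $\Delta \ge \frac{1}{\eta} \log(2 n^2 T)$ will make $\exp(-\eta\Delta)$ small enough to conclude.

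First I would write the cumulative reward of $(u,v)$ using the formula stated just before \Cref{lemma:excess}: since $(u,v) \ne e^\dagger$, one has $\rl((u,v),t) = E((u,v),t) - \varepsilon \dout(v,t)$. The out-degree term is nonnegative, so this is at most $E((u,v),t)$, and by hypothesis $E((u,v),t) < k(1-\varepsilon) - \Delta$. Therefore the numerator of the MWU probability satisfies
\begin{equation}
    \exp(\eta \rl((u,v),t)) < \exp\bigl(\eta (k(1-\varepsilon) - \Delta)\bigr).
\end{equation}

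Next, I would lower bound the denominator $\sum_{b' \in \mathcal{B}} \exp(\eta \rl(b',t))$ by retaining only the contribution from the single action guaranteed by \Cref{lemma:low-reward}, which has cumulative reward at least $k(1-\varepsilon)$. This gives denominator at least $\exp(\eta k (1-\varepsilon))$. Dividing, the MWU mass on $(u,v)$ is at most
\begin{equation}
    \by((u,v),t) \le \frac{\exp(\eta(k(1-\varepsilon) - \Delta))}{\exp(\eta k (1-\varepsilon))} = \exp(-\eta \Delta).
\end{equation}

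Finally, by the standing assumption $\Delta \ge \frac{1}{\eta}\log(2 n^2 T)$, we have $\eta \Delta \ge \log(2 n^2 T)$, so $\exp(-\eta \Delta) \le \frac{1}{2 n^2 T} \le \frac{1}{n^2 T}$, as claimed. There is no real obstacle here: the only thing to be careful about is correctly invoking \Cref{lemma:low-reward} to ensure the denominator lower bound holds at \emph{every} time $t$ (not just at special times), which is exactly what that lemma guarantees, and to note that we only need to exclude $(u,v) = (u^\dagger, v^\dagger)$ so that the simple formula $\rl((u,v),t) = E((u,v),t) - \varepsilon \dout(v,t)$ (rather than the one involving the $+k$ initialization) applies.
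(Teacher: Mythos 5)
Your proof is correct and takes essentially the same approach as the paper: upper bound $\rl((u,v),t)$ by $k(1-\varepsilon)-\Delta$, lower bound the partition sum via the witness action from \Cref{lemma:low-reward}, and conclude with the standing assumption $\Delta \ge \frac{1}{\eta}\log(2n^2T)$. The only cosmetic difference is that the paper keeps both terms in the denominator (getting $\frac{1}{1+\exp(\eta\Delta)}$) while you drop the numerator term (getting $\exp(-\eta\Delta)$), which is a factor of at most $2$ looser but still comfortably yields $\le \frac{1}{n^2T}$.
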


\paragraph{Proof of \Cref{corr:low-reward}.}
By lemma \ref{lemma:low-reward} we know that at any time $t$, there exists an action for the learner having cumulative reward at least $k(1-\varepsilon)$ and we know that action is not $(u,v)$. The probability mass MWU will then place on action $(u,v)$ will be at most,
\begin{equation}
    \frac{\exp(\eta(k(1-\varepsilon) - \Delta))}{\exp(\eta(k(1-\varepsilon) - \Delta)) + \exp(\eta k (1 - \varepsilon))} = \frac{1}{1 + \exp(\eta \Delta )} < \frac{1}{1 + 2n^2T} < \frac{1}{n^2T}
\end{equation}
as required.

\paragraph{Proof of \Cref{lemma:heavy}.} Consider actions/edges $(v,w)$ played by the optimizer, that contribute to his reward but do not contribute to $\rh(\Delta, t)$, i.e., where $v \not \in \Sh(\Delta, t)$. We will try to upper bound the contribution of these actions to the optimizer's rewards. If we take a look at the utility matrix of the optimizer, when the optimizer is plays such an action reward can be obtained in only two cases:
\begin{enumerate}
    \item The first case corresponds to when the learner plays an edge of the form $(u, v)$ for some $u \in V$. Since $v \not \in \Sh(\Delta, t)$, no edge that is incident to $v$ has been played enough to be considered heavy, i.e., $Q((u,v), 
    \tmax) \leq k(1-\varepsilon) - \Delta$. Since this action has not been played enough, it will be dominated by the action that has reward at least $k(1-\varepsilon)$ as explained by Lemma \ref{lemma:low-reward}, and therefore MWU will allocate very little probability mass on it and we can upper bound its contribution. Let us call the total reward the optimizer earns in this case, i.e., the reward the optimizer earns when playing edges of the form $(v,w)$ when the learner plays edges of the form $(u,v)$, be $R_1$.
    \item The second type of reward, results from when the learner also plays the same edge $(v,w)$. As the optimizer plays $(v,w)$ more and more, the equivalent action $(v,w)$ of the learner begins to accumulates reward. However, this reward will be relevant only when the edge $(v,w)$ has been played enough times from the optimizer that $(v,w)$ is no longer dominated by the action that has reward at least $k(1-\varepsilon)$ (cf. Lemma~\ref{lemma:low-reward}); this is the point when the combination $((v,w), (v,w))$ will start earning reward for the optimizer. However, note that the optimizer is constrained in playing $(v,w)$ a limited number of times, specifically less than $k + \Delta$ times since we are considering times $t \leq \tmax$, thus the reward in this case for the optimizer will also be limited. Let us denote the total reward the optimizer earns in this case, i.e. the reward the optimizer earns when playing edges of the form $(v,w)$ and the learner plays edges of the form $(v,w)$ as $R_2$.
\end{enumerate}
Let us now look at the reward the optimizer can earn in the above two cases. We make two claims; firstly, that $R_1 \leq 2$ and secondly that $R_2 \leq 2n(k\varepsilon + 2\Delta) + 1$. Putting the two claims together gives that:
\begin{equation}
    \ro (\tmax) \le \rh ( \Delta, \tmax ) + R_1 + R_2 \leq \rh ( \Delta, \tmax ) + 2n(k\varepsilon + 2\Delta) + 3
\end{equation}
as desired. It remains to prove the two claims. For the first claim, note that the actions of the form $(u,v)$ will have $\rl((u,v), t) \leq k(1-\varepsilon) - \Delta$, since by assumption, $v \not \in \Sh(\Delta, t)$. Using Corollary \ref{corr:low-reward} we can deduce that the mass assigned to these actions is upper bounded by $\frac{1}{n^2 T}$. There can be at most $n^2$ such choices of actions, and all these actions, across all time steps can earn reward at most,
\begin{equation}
    R_1 \leq 2 \cdot n^2 T \cdot \frac{1}{n^2 T} \leq 2
\end{equation}
which proves the first claim. For the second claim, take again an edge $(v,w)$ for the optimizer that is played more than $k(1-\varepsilon) - \Delta$. Note that for a fixed choice of $v$, there is only at most one edge $(v,w)$ that is played more than $k(1-\varepsilon) - \Delta$ times. If there were more than one, then $\dout(v, t)$ would exceed $k + \Delta$ contradicting that we are in time $t \leq \tmax$. We thus want to upper bound the reward earned by the optimizer when playing $(v,w)$. In the worst case the optimizer plays the action $(v,w)$, $k + \Delta$ times. Let us split the reward earned by that action in two time periods; the reward of the optimizer the first $k(1-\varepsilon) - \Delta$ the optimizer plays the action and the last $k\varepsilon + 2\Delta$ times. In the first interval, at times when the optimizer plays $(v,w)$, note that the learner has not built significant reward for this action $(v,w)$ action yet, specifically $\rl((v,w), t) \leq k(1-\varepsilon) - \Delta$. Since when playing $(v,w)$ the optimizer earns reward only when the learner is playing the same action. According again to Corollary \ref{corr:low-reward}, the learner will place mass at most $\frac{1}{n^2T}$ in this action, and therefore the optimizer can earn reward at most $\frac{1}{n^2T} \cdot 2 \cdot (k(1-\varepsilon) - \Delta) < \frac{1}{n^2} \cdot 2 \cdot T = \frac{2}{n^2}$. For the second part, the optimizer can earn reward at most $2$ for the remaining $\varepsilon k + 2\Delta$ rounds, thus making the total reward for the specific action $\frac{2}{n^2} + \varepsilon k + 2\Delta$. Summing up over all possible $v$'s we get that the total reward $R_2$ for the optimizer is:
\begin{equation}
    R_2 \leq n \cdot \left( \frac{2}{n^2} + \varepsilon k + 2\Delta \right) \leq 1 + n(\varepsilon k + 2 \Delta)
\end{equation}
which proves the second claim and therefore the lemma.

\subsection{Proof of \Cref{lemma:cycle}}

Consider the smallest time $t_i$ for each $i$ at which the vertices $z_i$ first appear in $\Sh(\Delta, t_i)$. As the optimizer plays edges, suppose the vertex $z_{i^*}$ is the last to appear last in $\Sh$ among all the $z_i$'s in this cycle. For all the edges in the graph $G'$ that belong in this cycle, we can bound their contribution in a similar way we did in Lemma \ref{lemma:path}, except the contribution coming from edges emanating from $z_{i^*}$. This contribution is:
\begin{equation}
    (k(1 + \varepsilon) + 3\Delta) \sum_{i \in [m] \setminus \{ i^\star\}} W (z_i, z_{i+1}) + 2(k + \Delta) \mathbb{I} (i=1)
\end{equation}
It ultimately remains to bound the contribution of the actions emanating from $i^*$. We will prove that the first $k(1-\varepsilon) - \Delta$ times the optimizer plays action $(z_{i^*}, z_{i^ + 1})$ very little reward is collected. Notice that the optimizer can earn reward either if the learner places probability mass on the same edge or on an edge that is of the form $(\cdot, z_{i^*})$. For the first case, the maximum reward the optimizer can obtain is:
\begin{equation}
    \sum_{j=0}^{k(1-\varepsilon) - \Delta}\frac{\exp(\eta j)}{\exp(\eta j) + \exp(\eta (k(1-\varepsilon) - \Delta))} \leq (k(1-\varepsilon) - \Delta) \cdot \exp(-\eta \Delta)\leq \frac{1}{n^2}
\end{equation}
For the second case, notice that each edge $(\cdot, z_{i^*})$ has been played at most $k(1-\varepsilon) - \Delta$ by the optimizer, since $i^*$ is the last node to become heavy, thus meaning that the learner will not place a lot of mass on these edges. This bounds the reward of the optimizer as follows:
\begin{equation}
    n \cdot 2 \cdot (k(1-\varepsilon) - \Delta) \cdot \frac{\exp(\eta (k(1-\varepsilon) - \Delta))}{\exp(\eta (k(1-\varepsilon) - \Delta)) + \exp(\eta (k(1-\varepsilon)))} < n \cdot 2 \cdot (k(1-\varepsilon) - \Delta) \cdot \exp(\eta \Delta) < \frac{1}{2n}
\end{equation}
For the remaining at most $\varepsilon k + 2\Delta$ times the optimizer plays an edge emanating from $z_{i^*}$ we may assume the maximum reward of $2$ is collected. The optimizer plays at most $\varepsilon k + 2\Delta$ such edges, and thereby earns reward at most $2(\varepsilon k + 2\Delta)$ in this case. This bounds the reward emanating from $z_{i^*}$ to:
\begin{equation}
    \frac{1}{n^2} + \frac{1}{2n} + 2(\varepsilon k + 2\Delta) < \frac{1}{n} + 2(\varepsilon k + 2\Delta)
\end{equation}
and thus the total reward that can be earned from the cycle $C_i$ is:
\begin{align}
    &(k(1 + \varepsilon) + 3\Delta) \sum_{i \in [m] \setminus \{ i^\star\}} W (z_i, z_{i+1}) + 2(k + \Delta) \mathbb{I} (i=1)
    + \frac{1}{n} + 2(\varepsilon k + 2\Delta) < \\
    &(k(1 + 3\varepsilon) + 7\Delta) \sum_{i \in [m] \setminus \{ i^\star\}} W (z_i, z_{i+1}) + 2(k + \Delta) \mathbb{I} (i=1) + \frac{1}{n}
\end{align}
as required.

\subsection{Proof of \Cref{lemma:combine}} \label{app:combine}

We will show that the total value collected by the optimizer is at most,
\begin{equation}
    (1) + \left(2n(k\varepsilon + 2\Delta) + 3\right) + \left((k (1 + 3 \varepsilon) + 7 \Delta) \sum_{i = 1}^n W(z_i,z_{i+1}) +  2(k+\Delta) + 1\right)  \label{eq:upper-bound-reward}
\end{equation}
which is upper bounded by the quantity in the statement of the lemma. Let us recall the different ways the optimizer collects rewards.
\begin{enumerate}
    \item If time $\tmax$ is reached, by \Cref{corr:excess} we argued that then onward, the optimizer will collect reward at most $1$. This accounts for the first term in \cref{eq:upper-bound-reward}.
    \item From time $0$ up until time $\tmax$, we argued through \Cref{lemma:heavy} that only at most $2n(k\varepsilon + 2\Delta)$ reward can come from edges whose vertices do not emanated from vertices in $\Sh(\Delta, t)$. This accounts for the second term in Eq. \ref{eq:upper-bound-reward}.
    \item Finally, we use Lemmas \ref{lemma:path} and \ref{lemma:cycle} to argue that the optimizer can make at most $(k (1 + 3 \varepsilon) + 7 \Delta) \sum_{i = 1}^n W(z_i,z_{i+1}) +  2(k+\Delta) + 1$ where $z_1 \to z_2 \to \cdots\to z_n \to z_1$ is some Hamiltonian cycle. Suppose the optimizer's play induces the graph $G$' as defined previously. We will prove that the paths and cycles formed can be combined in order to create a Hamiltonian cycle, the weight of which will upper bound (up to some multiplicative and additive constants) the reward that can be obtained in this scenario by the optimizer. Suppose $C_1, C_2, \dots, C_l$ are all paths and $C_{l+1}, C_{l+2}, \dots, C_t$ are cycles, and $C_1$ contains the special edge $e^\dagger$ - we will later argue how to prove the claim even if the connected component of the special edge is a cycle. We first combine all the paths into one. Suppose path $C_i$ is $z^i_1 \to z^i_2 \to \cdots\to z^i_{m_i}$. We will combine the paths by attaching the last vertex of path $C_i$ to the first vertex of path $C_{i+1}$, i.e. $z^i_{m_i} \to z^{i+1}_{1}$. The combination of all paths will then be:
    \[
        z^1_1 \to z^1_2 \to \cdots\to z^1_{m_1} \to z^2_1 \to z^2_2 \to \dots z^2_{m_2} \to \dots \to z^l_1 \to \dots \to z^l_{m_l}
    \]
    If we sum up the rewards of the paths independently and use Lemma \ref{lemma:path}, the paths will earn rewards at most
    \[
        (k(1 + \varepsilon) + 3\Delta) \sum_{i=1}^l \sum_{j=1}^{m_i} W(z^i_{j}, z^i_{j+1}) + 2(k+\Delta)
    \]
    while the newly constructed path will earn reward
    \[
        (k(1 + \varepsilon) + 3\Delta) \sum_{i=1}^l \left(\left(\sum_{j=1}^{m_i-1} W(z^i_{j}, z^i_{j+1}) \right) + \mathbb{I}[i \neq l]W(z^i_{m_i}, z^{i+1}_{1})\right)+ 2(k+\Delta)
    \]
    which is a clear upper bound. Now we continue by combining the cycles. Suppose for cycle $C_i$ where $t\geq i > l$ and the cycle is $z^i_1 \to z^i_2 \to \cdots\to z^i_{m_i} \to z^i_{1} $ the vertex $i^*$ as described in Lemma \ref{lemma:cycle} is denoted by the vertex $z^i_{m_i}$. We will combine the cycles as follows, to create a big cycle:
    \[
        z^l_1 \to z^l_2 \to \cdots\to z^l_{m_{l}} \to z^{l+1}_{1} \to \dots z^{l+1}_{m_{l+1}} \to \dots \to z^t_{m_t} \to z^l_1
    \]
    We essentially delete the last edge of each cycle and redirect the last vertex to the first vertex of the next cycle. Note that the reward of this big cycle according to Lemma \ref{lemma:cycle} is:
    \begin{equation}
        (k(1 + 3\varepsilon) + 7\Delta) \sum_{i=l+1}^t \left(\left(\sum_{j=1}^{m_i-1} W(z^i_{j}, z^i_{j+1}) \right) + \mathbb{I}[i \neq t]W(z^i_{m_i}, z^{i+1}_{1})\right) 
    \end{equation}
    which is upper bounding the rewards -when adding a constant $1$ to cover for $\frac{t}{n}$ -  of the individual cycles, which is:
    \begin{equation}
        (k(1 + 3\varepsilon) + 7\Delta) \sum_{i=l+1}^t \sum_{j=1}^{m_i-1} W(z^i_{j}, z^i_{j+1}) + \frac{t}{n}
    \end{equation}
    Lastly, now we connect the big path and the big cycle together as follows: direct $z^l_{m_l} \to z^{l+1}_1$ and redirect $z^t_{m_t} \to z^1_1$. This will close the cycle and upper bound the rewards obtained by the paths and cycles together. The reward of the Big cycle will then be:
    \[
        (k (1 + 3 \varepsilon) + 7 \Delta) \sum_{i = 1}^n W(z_i,z_{i+1}) +  2(k+\Delta) + 1
    \]
    as we wanted. In the case where $C_1$ was not a path but a cycle, we would have begun with the cycles instead of the paths and do the same thing with the only difference being that we would move over the $2(k+\Delta)$ factor to the cycles instead of the paths.
\end{enumerate}

\subsection{Proof of \Cref{lemma:opt-lb}} \label{app:opt-lb}

Without loss of generality assume $z_1^\star = v^\dagger$ and we define $z_{-1}^\star = u^\dagger$ and $z_{n+1}^\star = z_1^\star$. Define $Z$ as the set of edges $\{ (z_{i-1}^\star,z_{i}^\star) : i \in [n+1] \}$. Consider the sequence of optimizer pure strategies which plays the edge $(z_i^\star,z_{i+1}^\star)$ for $k(1- \varepsilon + \varepsilon^2)$ rounds for each $i$, and repeats across $i \in [n]$ (assuming $z_{n+1}^\star = z_1^\star$). For the last $nk(\varepsilon-\varepsilon^2)$ we lower bound the rewards of the optimizer by $0$ - basically disregarding those rounds. The total reward collected by this sequence of pure strategies is lower bounded next. At initialization, MWU associates a cumulative reward of $k$ with the action $(u^\dagger,v^\dagger)$ and $\le 0$ for the remaining actions. We argue that in each epoch $i \le n-1$, the total mass MWU places on the actions $(z_{i-1}^\star,z_i^\star)$ and $(z_i^\star,z_{i+1}^\star)$ is close to $1$ in most of the iterations within this epoch.

At the end of the $i^{\text{th}}$ epoch (i.e., $t=k(1-\varepsilon + \varepsilon^2) \times i + 1$), the cumulative rewards on the actions are,
\begin{align} \label{eq:cumr-lr-exact}
    B_0(b) + \sum_{s=1}^t B ((u_s,v_s), b) = \begin{cases}
        k (1 - \varepsilon) (1 -  \varepsilon+ \varepsilon^2) &\text{if } b = (z_j^\star, z_{j+1}^\star) \text{ for } j \le i-1,\\
        k (1 -   \varepsilon+ \varepsilon^2) \qquad &\text{if } b = (z_i^\star, z_{i+1}^\star), \\
        0 &\text{if } b = (z_j^\star, z_{j+1}^\star) \text{ for } j \ge i+1,\\
        0 &\text{if } b \in E \setminus Z,\\
        k(1 -2  \varepsilon+ 2 \varepsilon^2) &\text{if } b = (z_j^\star)' \text{ for } j \le i\\
        -k &\text{if } b = (z_j^\star)' \text{ for } j \ge i+1\\
        k(1-2  \varepsilon+2  \varepsilon^2) &\text{if } b = (z_j^\star)'' \text{ for } j \le i+1\\
        -k &\text{if } b = (z_j^\star)'' \text{ for } j \ge i+2
    \end{cases}
\end{align}
Here, for $v\in V$, we denote $(v)'$ as its corresponding vertex in $\Vout$, and $(v)''$ as its corresponding vertex in $\Vin$. In the $(i+1)^{\text{th}}$ epoch, the edge $(z_{i+1}^\star,z_{i+2}^\star)$ is chosen a number of times. Having chosen it $p$ times thus far, the cumulative rewards of actions are updated as,
\begin{align}
    B_0(b) + \sum_{s=1}^t B ((u_s,v_s), b) = \begin{cases}
        k (1 - \varepsilon) (1 -  \varepsilon + \varepsilon^2) &\text{if } b = (z_j^\star, z_{j+1}^\star) \text{ for } j \le i-1,\\
        k (1 -\varepsilon +   \varepsilon^2) - \varepsilon p \qquad &\text{if } b = (z_i^\star, z_{i+1}^\star), \\
        p \qquad &\text{if } b = (z_{i+1}^\star, z_{i+2}^\star), \\
        0 &\text{if } b = (z_j^\star, z_{j+1}^\star) \text{ for } j \ge i+2,\\
        0 &\text{if } b \in E \setminus Z,\\
        k(1-2  \varepsilon+2  \varepsilon^2) &\text{if } b = (z_j^\star)' \text{ for } j \le i\\
        -k + 2p &\text{if } b = (z_{i+1}^\star)' \\
        -k &\text{if } b = (z_j^\star)' \text{ for } j \ge i+2\\
        k(1-2  \varepsilon +  2  \varepsilon^2) &\text{if } b = (z_j^\star)'' \text{ for } j \le i\\
        -k + 2p &\text{if } b = (z_{i+2}^\star)''\\
        -k &\text{if } b = (z_j^\star)'' \text{ for } j \ge i+3
    \end{cases} \label{eq:cum-reward-opt-intermediate}
\end{align}
It is easy to verify plugging in $p = k(1-  \varepsilon +\varepsilon^2)$, we arrive at the cumulative rewards at the end of the $(i+1)^{\text{th}}$ epoch, which matches what is observed plugging in $i=i+1$ in eq.~\eqref{eq:cumr-lr-exact}.

In the $(i+1)^{\text{th}}$ epoch, for any $p \le k(1-\varepsilon+\varepsilon^2)$, observe that the total mass MWU associates with the edges $(z_{i}^\star,z_{i+1}^\star)$ and $(z_{i+1}^\star,z_{i+2}^\star)$ is lower bounded by,
\begin{align} \label{eq:mwu-lb}
    \frac{e^{\eta (k ( 1 - \varepsilon+\varepsilon^2) - \varepsilon p)} + e^{\eta p}}{e^{\eta (k ( 1 - \varepsilon+ \varepsilon^2) - \varepsilon p)} + e^{\eta p} + (|\Binit|-2) e^{\eta k (1 - 2\varepsilon+2  \varepsilon^2)}}
\end{align}
This uses the fact that for the learner, the maximum cumulative reward on any action which is not one of $(z_{i}^\star,z_{i+1}^\star)$ or $(z_{i+1}^\star,z_{i+2}^\star)$ is upper bounded by $\max \{ k(1-\varepsilon)(1- \varepsilon+\varepsilon^2), k (1 -2\varepsilon+ 2  \varepsilon^2) \} \le k(1 - 2\varepsilon+ 2  \varepsilon^2)$ assuming that $\varepsilon$ is sufficiently small and $\varepsilon$ is bounded away from $1$. For any value of $p \le (1-\varepsilon)^2k$, using the fact that $|\Binit| \le 2 \binom{n}{2} + 2n = n^2 + n$, eq.~\eqref{eq:mwu-lb} is further lower bounded by,
\begin{align*}
     &\frac{e^{\eta (k ( 1 - \varepsilon +\varepsilon^2) - \varepsilon p)} + e^{\eta p}}{e^{\eta( k ( 1 - \varepsilon+  \varepsilon^2) - \varepsilon p)} + e^{\eta p} + (|\Binit|-2) e^{\eta k (1 - 2\varepsilon+ 2  \varepsilon^2)}} =  \left(\frac{e^{\eta( k ( 1 - \varepsilon+  \varepsilon^2) - \varepsilon p)} + e^{\eta p} + (|\Binit|-2) e^{\eta k (1 - 2\varepsilon+ 2  \varepsilon^2)}}{e^{\eta (k ( 1 - \varepsilon +\varepsilon^2) - \varepsilon p)} + e^{\eta p}} \right)^{-1} \\
     & \geq \left( 1 + \frac{(n^2 + n) e^{\eta k ( 1 - 2\varepsilon + 2  \varepsilon^2)}}{e^{\eta (k ( 1 - \varepsilon +\varepsilon^2) - \varepsilon p)} + e^{\eta p}}\right)^{-1} \ge 1 - \frac{(n^2 + n) e^{\eta k ( 1 - 2\varepsilon + 2  \varepsilon^2)}}{e^{\eta (k ( 1 - \varepsilon +\varepsilon^2) - \varepsilon p)} + e^{\eta p}}  \ge 1 - \frac{(n^2 + n) e^{\eta k ( 1 -2\varepsilon+ 2  \varepsilon^2)}}{e^{\eta (k ( 1 - \varepsilon+  \varepsilon^2) - \varepsilon p )} }  \\
    & \geq 1 - (n^2 + n) e^{\eta (k (-\varepsilon +  \varepsilon^2) + \varepsilon p)} \geq 1 - (n^2 + n) e^{\eta (k (-\varepsilon +  \varepsilon^2) + \varepsilon (1-\varepsilon)^2 k)} = 1 - (n^2 + n) e^{\eta k(-\varepsilon +  \varepsilon^2 + \varepsilon - 2\varepsilon^2 + \varepsilon^3)}\\
    &= 1 - (n^2 + n) e^{- \eta (\varepsilon - \varepsilon^2) \varepsilon k}. \label{eq:66}
\end{align*}
Note that $(\varepsilon-\varepsilon^2) \eta \varepsilon k \ge \log (n^3)$, by assumption, thus the total probability MWU associates with the edges $(z_i^\star,z_{i+1}^\star)$ and $(z_{i+1}^\star, z_{i+2}^\star)$ in the first $\Delta_1 \triangleq (1-\varepsilon^2) k$ steps of the $(i+1)^{\text{th}}$ epoch is at least $1 - 2/n$.

Note that in the $(i+1)^{\text{th}}$ epoch, the optimizer always plays the edge $(z_{i+1}^\star,z_{i+2}^\star)$. This edge collects reward for the optimizer based on the probability mass MWU associates with edges of the form $(\cdot,z_{i+1}^\star)$ and with the edge $(z_{i+1}^\star,z_{i+2}^\star)$. MWU places at least $1-2/n$ mass on these kinds of edges in the first $\Delta_1$ iterations of this epoch. Therefore, in the $(i+1)^{\text{th}}$ epoch, the optimizer collects reward of at least,
\begin{align}
    \Delta_1 \left( 1 - \frac{2}{n} \right) W(z_{i+1}^\star, z_{i+2}^\star)
\end{align}
Summing this over all values of $i = 0,1,\cdots,n-1$, the total reward collected by the optimizer is at least,
\begin{align}
    \Delta_1 \left( 1 - \frac{2}{n} \right) \sum_{i=1}^n W(z_{i}^\star, z_{i+1}^\star)
\end{align}
Up to a multiplicative factor of $\Delta_1 (1 - 2/n) =(1-\varepsilon)^2 k (1 - \frac{2}{n})\approx k$, the reward collected by this sequence of optimizer pure strategies matches the weight of the heaviest Hamiltonian cycle in $G$.

\section{Optimizing against MWU without initialization}

\subsection{Proof of \Cref{lemma:reduced-MWU}}

Note that the cumulative rewards $\rl (b,t)$ of MWU can be written as,
\begin{align} \label{eq:38}
    \forall b \in \Binit,\ \rl (b,t) &= \rltilde (b,t) + \sum_{t' \in \mathcal{T}_{\blacklozenge} (t)} \bx (t')^\top \bm{B} \delta_b - \frac{k^\star}{T} |\mathcal{T}_\blacklozenge(t)| \cdot \mathbb{I} (b = e^\dagger).
\end{align}
For every action $b \in \Binit \setminus \{ e^\dagger \}$, the second term is $0$ by the structure of the game (\cref{eq:blinit}), as is the third term. For the action $b = e^\dagger$, the second and third terms cancel each out, by the structure of the game (also by \cref{eq:blinit}). Therefore,
\begin{align} \label{eq:39}
    \forall b \in \Binit,\ \rl (b,t) = \rltilde (b,t)
\end{align}
On the other hand, for actions in $\Binittilde$, we have,
\begin{align}
    \forall \widetilde{b} \in \Binittilde,\ \rl (\widetilde{b},t) &= \rltilde (\widetilde{b},t) + \sum_{t' \in \mathcal{T}_{\blacklozenge} (t)} \bx (t')^\top \bm{B} \delta_{\widetilde{b}} - \frac{k^\star}{T} |\mathcal{T}_\blacklozenge(t)| \cdot \mathbb{I} (\widetilde{b} = \widetilde{e}^\dagger) \\
    &\overset{(i)}{=} \rltilde (\widetilde{b},t) + \sum_{t' \in \mathcal{T}_{\blacklozenge} (t)} \bx (t')^\top \bm{B} \delta_{b} - \gamma |\mathcal{T}_\blacklozenge (t)| - \frac{k^\star}{T} |\mathcal{T}_\blacklozenge(t)| \cdot \mathbb{I} (b = e^\dagger) \\
    &\overset{(ii)}{=} \rltilde (\widetilde{b},t) - \gamma |\mathcal{T}_\blacklozenge (t)|
\end{align}
where in $(i)$ we use $b$ to denote the counterpart of $\widetilde{b}$ in $\Binit$, and use the structure of the game (\cref{eq:gamma}). In $(ii)$ we use the same argument in going from \cref{eq:38} to \cref{eq:39}. 
Therefore for any action $b \in \Binit$,
\begin{align}
    \Pr_{Y \sim \by(t)} (Y = b) &= \frac{\exp \left( \eta \rl (b,t) \right)}{\sum_{b' \in \Binit} \exp \left( \eta \rl (b',t) \right) + \sum_{b' \in \Binittilde } \exp \left( \eta \rl (b',t) \right)} \\
    &= \frac{\exp \left( \eta \rltilde (b,t) \right)}{\sum_{b' \in \Binit} \exp \left( \eta \rltilde (b',t) \right) + p \sum_{b' \in \Binit} \exp \left( \eta \rltilde (b',t) \right) \cdot \exp \left( -\eta \gamma |\mathcal{T}_{\blacklozenge} (t)| \right) } \\
    &= \frac{1}{1 + p \cdot \exp (- \eta \gamma |\mathcal{T}_{\blacklozenge} (t)|)} \cdot \frac{\exp \left( \eta \rltilde (b,t) \right)}{\sum_{b' \in \Binit} \exp \left( \eta \rltilde (b',t) \right)}
\end{align}
Using the definition of reduced MWU completes the proof.

\subsection{Proof of \Cref{lemma:opt-bound-MWU-noinit}}

This is the cumulative reward collected by the optimizer when playing against MWU over a reduced game which ignores the steps where the optimizer plays $\blacklozenge$, accrued prior to the lower critical time. Prior to the lower critical time, the optimizer may have played actions which were not $\blacklozenge$. Thus, at the lower critical time, the reduced MWU learner essentially begins with an ``initialization'' (in the same sense as \cref{init:1a,init:2a,init:3a,init:4a}) corresponding to whichever non-$\blacklozenge$ optimizer edges were played. Formally, at time any time $t$ after the lower critical time, $|\mathcal{T}_\blacklozenge (t)| \ge (1 - \varepsilon) T_\blacklozenge^\star$, and any $b \in \Binit$,
\begin{align}
    \rltilde (b,t) &= \frac{k^\star}{T} |\mathcal{T}_\blacklozenge (t)| \cdot \mathbb{I} (b=e^\dagger) + \sum_{t' \in \mathcal{T}_{\cross \blacklozenge} (t)} \bx (t')^\top \bm{B} \delta_b \\
    &= \frac{k^\star}{T} |\mathcal{T}_\blacklozenge (t)| \cdot \mathbb{I} (b=e^\dagger) + \sum_{t' \in \mathcal{T}_{\cross \blacklozenge} (t) \setminus \mathcal{T}_{\text{init}}^{1-\varepsilon} (t)} \bx (t')^\top \bm{B} \delta_b + \sum_{t' \in \mathcal{T}_{\text{init}}^{1-\varepsilon} (t)} \bx (t')^\top \bm{B} \delta_b \\
    &= r_0 (b) + \sum_{t' \in \mathcal{T}_{\text{init}}^{1-\varepsilon} (t)} \bx (t')^\top \bm{B} \delta_b.
\end{align}
Where, $r_0 (b)$ is an initialization reward precisely of the form \cref{init:1a,init:2a,init:3a,init:4a}, where the induced value of $k$ is $\frac{k^\star}{T} |\mathcal{T}_\blacklozenge (t)|$: this is not difficult to see, since $\sum_{t' \in \mathcal{T}_{\cross \blacklozenge} (t) \setminus \mathcal{T}_{\text{init}}^{1-\varepsilon} (t)} \bx (t')^\top \bm{B} \delta_b$ is an initial reward obtained by playing some set of optimizer edges $\Ei$ (corresponding to those played at time $\mathcal{T}_{\cross \blacklozenge} (t) \setminus \mathcal{T}_{\text{init}}^{1-\varepsilon} (t)$: the non-$\blacklozenge$ actions played prior to the lower critical time). While the induced value of $k$ depends on the sequence of actions played thus far, the fact that $(1-\varepsilon)^3 T \le |\mathcal{T}_\blacklozenge (t)| \le T$ constrains it very tightly. The former induces $k = (1-\varepsilon)^3 T \times k^\star / T = (1-\varepsilon)^2 k^\star$ and the latter induces $k = T \times k^\star/ T = k^\star$.

With this discussion, we may bound the value collected by the optimizer using \Cref{corr:edgeinit}. Noting the fact that the optimizer has played $\blacklozenge$ at least $(1-\varepsilon)^3 T$ times, the remaining actions must have been played at most $T-(1-\varepsilon)^3 T$ times. Note that the reduced MWU is only determined at the time points the optimizer does not play $\blacklozenge$, which is a very small fraction of the overall horizon, covering at most $\le T - (1-\varepsilon)^3 T \le 3 \varepsilon T$ steps. In particular, for the induced initialization at the lower critical time, the set of edges in the initialization, $\Ei$, is $\mathcal{T}_{\cross \blacklozenge} (T) \setminus \mathcal{T}^{1-\varepsilon}_{\text{init}} (T)$, and the remaining duration of the horizon, which we denote $\Ti$, is at most $|\mathcal{T}^{1-\varepsilon}_{\text{init}} (T)|$. The overall effective horizon (cf. \Cref{corr:edgeinit}), $|\Ei| + \Ti$ is upper bounded by $T - (1-\varepsilon)^3 T \le 3\varepsilon T$ assuming that $\mathcal{T}^{1-\varepsilon}_{\text{init}} (T)$ is non-empty. In particular, as a consequence of \Cref{corr:edgeinit}, with the choice $k = k^\star$, results in the upper bound: for any $\Delta \ge \frac{1}{\eta} \log (2n^2 T)  \ge \frac{1}{\eta} \log (2n^2 (1-\varepsilon)T) = \frac{1}{\eta} \log (2n^2 (\Ti+|\Ei|))$,
\begin{align}
    \sum_{t \in \mathcal{T}^{1-\varepsilon}_{\text{init}} (T)} \bx (t)^\top \bm{A} \yr (t) \le (k^\star (1 + 5 \varepsilon) + 9 \Delta) \sum_{i = 1}^n W(z_i,z_{i+1}) + (2k^\star + 2 \Delta + 4)
\end{align}
The proof concludes by combining with \Cref{lemma:opt-p-theta} and the choice $p = T^{\beta/3}$.

\subsection{Proof of \Cref{lemma:opt-lb-noinit}}

Consider the optimizer strategy which plays the action $\blacklozenge$, precisely $(1-\varepsilon)T$ times. On the remaining iterations, which is of length $\varepsilon T$, the optimizer follows the strategy outlined in \Cref{lemma:opt-lb}, on a horizon of length $\varepsilon T$. Firstly, observe that, by \Cref{lemma:reduced-MWU}, at any time $t > (1-\varepsilon)T$,
\begin{align}
    \Pr_{Y \sim \by(t)} (Y = b) &= \frac{1}{1 + p \cdot \exp \left( - \eta \gamma |\mathcal{T}_{\blacklozenge} (t)| \right)} \cdot \Pr_{Y \sim \yr (t)} (Y = b) \\
    &= \frac{1}{1 + p^{-\varepsilon}} \cdot \Pr_{Y \sim \yr (t)} (Y = b)
\end{align}
where in the last equation, we use the choice of $\gamma$ and $p$. At time $t = (1-\varepsilon)T + 1$, the cumulative reward on actions is of the form,
\begin{align}
    \text{For } \vin \in \Vin,\ \rl (\vin,t) &= - (1-\varepsilon) k^\star = \varepsilon T^{1-\min \{\alpha,\beta\}/3} \\
    \text{For } \vout \in \Vout,\ \rl (\vout,t) &= - \varepsilon T^{1-\min \{\alpha,\beta\}/3} \\
    \text{For } (u,v) \in E \setminus \{ e^\dagger \},\ \rl ( (u,v),t ) &= 0 \\
    \rl ( e^\dagger,t ) &= \varepsilon T^{1-\min \{\alpha,\beta\}/3}
\end{align}
Note that this optimizer has $\Ti = \varepsilon T$ time remaining in the horizon, and the induced value of $k = \varepsilon T^{1-\min \{\alpha,\beta\}/3}$: in particular, since, $n \times k = \varepsilon T \triangleq \Ti$, and therefore the horizon is sufficiently long to obtain the reward of the policy described in \Cref{lemma:opt-lb}.

The cumulative reward of the optimizer can be written as,
\begin{align}
    \sum_{t=1}^T \bx(t)^\top \bm{A} \by (t) &= \frac{1}{1 + p^{-\varepsilon}} \sum_{t = (1-\varepsilon)T+1}^T \bx (t)^\top \bm{A} \yr(t) \\
    &\ge (1 - \varepsilon)^2 \varepsilon T^{1-\min \{ \alpha,\beta\}/3} \times \frac{1}{1 + p^{-\varepsilon}} \left( 1 - \frac{2}{n} \right) \sum_{i=1}^n W ( z_i^\star, z_{i+1}^\star ) \\
    &\ge (1 - \varepsilon)^3 k^\star \times \frac{1}{1 + p^{-\varepsilon}} \left( 1 - \frac{2}{n} \right) \sum_{i=1}^n W ( z_i^\star, z_{i+1}^\star )
\end{align}
which completes the proof.

\end{document}